\theoremstyle{plain}
\newtheorem{thm}{Theorem}
\newtheorem{assumption}{Assumption}
\newtheorem{cor}{Corollary}
\numberwithin{equation}{section}
\numberwithin{figure}{section}
\theoremstyle{remark}
\newcommand{\indep}{\perp \!\!\! \perp}
\DeclarePairedDelimiter\abs{\lvert}{\rvert}
\DeclarePairedDelimiter\norm{\lVert}{\rVert}
\begin{document}
\title{Extrapolation in Regression Discontinuity Design Using Comonotonicity}
	\author{Ben Deaner (UCL) and Soonwoo Kwon (Brown)}
\maketitle
\begin{abstract}
  We present a novel approach for extrapolating causal effects away from the
  margin between treatment and non-treatment in sharp regression discontinuity
  designs with multiple covariates. Our methods apply both to settings in which
  treatment is a function of multiple observables and settings in which
  treatment is determined based on a single running variable. Our key
  identifying assumption is that conditional average treated and untreated
  potential outcomes are comonotonic: covariate values associated with higher
  average untreated potential outcomes are also associated with higher average
  treated potential outcomes. We provide an estimation method based on local
  linear regression. Our estimands are weighted average causal effects, even if
  comonotonicity fails. We apply our methods to evaluate counterfactual
  mandatory summer school policies.
\end{abstract}

Regression discontinuity design (RDD) provides a means for identifying causal
effects without the need to isolate a source of exogenous variation in
treatment. Instead, causal effects are identified under an a priori continuity
assumption which is plausible in many empirical settings. Unfortunately, while
causal inference in RDD design may be credible, its scope is limited. Standard
RDD analysis can identify average causal effects only within the sub-population
whose covariates place them on the margin between treatment and
non-treatment. As such, identification of the causal impact of counterfactual
policies is limited to those under which there is at most a marginal change in
the treatment rule.

To be more precise, consider the sharp RDD framework which is the focus of the
present work. In sharp RDD, the binary treatment $D$ is deterministic given some
observable individual characteristics $X$. For example, treatment may be
assigned if and only if a particular observable covariate known as the `running
variable' exceeds a certain threshold. In other cases treatment may be assigned
by a more complicated rule that depends on multiple observables. Sharp RDD
settings arise when receipt of an intervention is decided by some central
authority purely on the basis of individual characteristics that are observed by
both the authority and researcher. For example, eligibility for a particular
welfare program may be decided according to a strict rule based on an
individual's age, income, household size, and marital status. Let $Y$ be an
outcome of interest and $Y(d)$ the potential outcome from treatment level
$d$. In RDD, treatment is entirely explained by the covariates, that is, the
value of $D$ depends entirely on whether $X$ lies in a `treatment region' or a
`no-treatment region'. Under a continuity assumption, the conditional average
treatment effect (CATE), given by $E[Y(1)-Y(0)|X=x]$, is identified for values
of $x$ that lie on the `frontier' between the treatment and no-treatment
regions. Without further assumptions, the CATE is not identified away from the
frontier.

In this work we provide a novel means for identifying causal effects away from
the frontier in sharp RDD settings in which $X$ is multivariate: that is, more
than one covariate is available. We show that our approach is valid when a
particular comonotonicity (in the sense of e.g., \cite{Schmeidler1989})
condition holds. In its simplest form, our comonotonicity condition states that
the mean \textbf{untreated} potential outcome is weakly greater conditional on a
covariate value $x$ that conditional on a value $x'$, if and only if the
conditional average \textbf{treated} potential outcome is also greater for value
$x$ than $x'$. For example, if wealthier and older individuals have higher
potential outcomes on average when they are not treated, they must also have
higher average potential outcomes when treated. This assumption has testable
implications. Under the standard RDD assumptions, both treated and untreated
conditional average potential outcomes are identified along the frontier, and so
one can assess whether the comonotonicity condition holds for covariate values
on the frontier.

The comonotonicity assumption suggests a straightforward means of extrapolating
conditional average treatment effects away from the frontier. Given a vector of
covariate values $x$ that lie in say, the interior of the treatment region, we
estimate the conditional average treated potential outcome for these values
$E[Y(1)|X=x]$, which is equal to the conditional mean of the factual outcome
$E[Y|X=x]$. We then find a point $x^*$ on the frontier with the same associated
conditional mean treated potential outcome. The conditional means of both
treated and untreated potential outcomes are identified on the frontier under
standard RDD assumptions. If such a point exists, then the conditional average
treatment effect at $x$ is identical to that at $x^*$.

We develop a novel empirical approach based on our identification results. In
particular, we develop a non-parametric method based on local linear
regression. Notably, the methods do not require applied researchers to specify
the frontier between the treatment and non-treatment region and thus our
techniques are applicable even when treatment status is observed but the
assignment rule is not a priori known. This facilitates implementation of our
methods, an applied researcher need only input the outcomes, covariates, and
treatments into a provided software package <>, with no need to encode the
formula for the treatment rule.

Our estimator targets a particular weighted average of causal effects for
individuals along the frontier. Thus our estimand is a weighted average of
causal effects, regardless of whether comonotonicity holds. Comonotonicity then
ensures that this weighted average is equal to the CATE at a chosen covariate
value away from the frontier.

We apply our methods to analyse the impact of mandatory summer school
attendance. This empirical setting was previously analysed by
\citet{Matsudaira2008}. We use our methods to examine heterogeneity in treatment
effects.

\subsection*{Related Literature}

This paper contributes to the extensive literature on RDD (see
\citealp{Cattaneo2022} for a recent review). A number of works use multivariate
RDD to test a kind of selection-on-observables assumption. These include
\cite{Battistin2008} and \cite{Mealli2012}.  A closely related approach is that
of \cite{Angrist2015}. \cite{Angrist2015} extrapolate treatment effects away
from the cut-off by assuming that, conditional on some covariates, a scalar
running variable is unrelated to potential outcomes. Extrapolation from the
cut-off using higher order derivatives is considered in \cite{DiNardo2011}. A
full statistical analysis using first derivatives and extension to fuzzy RDD is
given in \cite{Dong2015} who identify the impact of a marginal change in the
threshold. \cite{Cattaneo2021} uses the existence of different cut-offs for
certain sub-populations to extrapolate causal effects in univariate RDD under a
type of parallel trends assumption. Another strand of the literature identifies
causal effects in RDD away from the frontier by using auxiliary information, for
example, \cite{Grembi2016} and \cite{Wing2013}.

A number of other works consider RDD settings with one or more covariates in
addition to a running variable. \cite{Calonico2019} and \cite{Noack2024} develop
inference methods that can be applied to RDD setting with covariates, where the
primary role of covariates is efficiency gain. In our work, we exploit the
availability of additional covariates in order to identify richer causal objects
rather than for statistical precision. Our paper also relates to the literature
that considers multivariate running variables. For example, papers such as
\cite{Papay2011} in the canonical RD design or \cite{Keele2015} in the
geographic RDD setting. The methods by \cite{armstrong2018optimal} and
\cite{Imbens2019} that put a bound on the smoothness and work out the optimal
bias-variance trade-off naturally apply to the multivariate setting as well.

Comonotonicity is closely related to, but distinct from, rank invariance and
rank similarity of potential outcomes. An extensive literature employs rank
invariance or similarity assumptions in order to achieve causal inference using
instrumental variables. For example, \cite{Chernozhukov2005},
\cite{Chernozhukov2007}, \cite{Horowitz2007}, \cite{Imbens2009},
\cite{Torgovitsky2015}, and \cite{DHaultfoeuille2015}. In contrast to the rank
invariance/similarity assumptions employed in these works, comonotonicity
restricts conditional mean treated and untreated potential outcomes rather than
the joint distribution of treated and untreated potential outcomes.

We provide further comparison with \cite{Angrist2015} and IV methods using
rank-invariance in Appendix B.4.

\section{A Simple Demonstration}

Before we present formal results, let us begin with a worked example. For
concreteness, we frame this in a simplified version of the empirical setting in
Section 5, however we have not explicitly calibrated the example to that data.

We suppose the vector of covariates $X$ is two-dimensional, that is
$X=(X_{1},X_{2})'$ where $X_{1}$ and $X_{2}$ are both scalars. In line with the
application in Section 5, let us suppose $X_1$ and $X_2$ are, respectively,
students scores on a math and reading test. We assume that an individual
receives treatment, which we indicate by $D=1$, if and only if a linear
combination of the individual's test scores falls below a threshold. In our
empirical application, treatment is mandatory attendance of summer school. Thus
treatment is deterministic given the covariates: the individual receives
treatment if the vector of test scores $X$ falls in some region
$\mathcal{X}_{1}$. Conversely, the individual does not receive treatment ($D=0$)
if $X$ is in the region $\mathcal{X}_{0}$.  Together $\mathcal{X}_{1}$ and
$\mathcal{X}_{0}$ make up the support of $X$.

In the Figure 1 we have normalized $X_{1}$ and $X_{2}$ so that their joint
support is the unit square. An individual is treated when $X$ lies in the
turquoise region labelled $\mathcal{X}_{1}$ and is not treated if $X$ falls into
the magenta region marked $\mathcal{X}_{0}$.  The intersection of the boundaries
of these two regions is marked with a blue line labelled $\mathcal{F}$ which we
refer to as the `frontier'.  To be precise, the figure indicates that an
individual is treated if $0.4 X_{1}+X_{2}\leq0.7$.
\begin{figure}[h]
	\caption{Identification in two-dimensional RDD}
	\centering
	\includegraphics[scale=0.3]{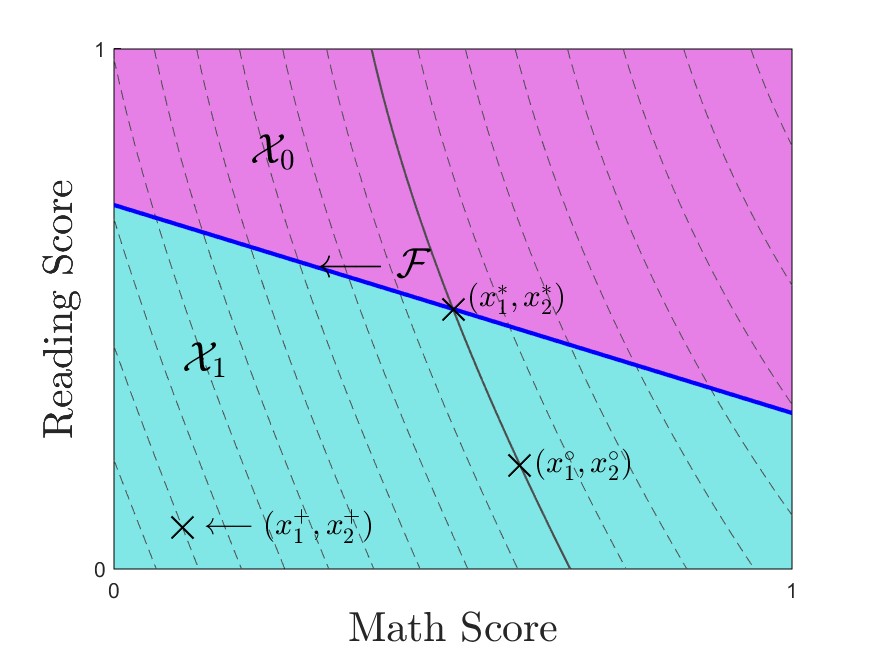}
\end{figure}

Consider an outcome, $Y$ that is an individual's normalized score on a math test
taken a year after the initial tests in $X$ and thus after any treatment has
been administered. The dashed curves in the figure are contour curves (level
sets) of the conditional average outcome $E[Y|X=x]$ within the treated or
untreated regions. That is, if two points $x_1,x_2\in\mathcal{X}_{1}$ lie on the
same contour, then $E[Y|X=x_1]=E[Y|X=x_2]$ and likewise if
$x_1,x_2\in\mathcal{X}_{0}$, but not if say, $x_1\in\mathcal{X}_{1}$ and
$x_2\in\mathcal{X}_{0}$.

In Figure 1, $(x_{1}^{*},x_{2}^{*})$ lies on the frontier $\mathcal{F}$.  As
such, both the conditional average treated and untreated potential outcomes
$E[Y(0)|X=x^{*}]$ and $E[Y(1)|X=x^{*}]$, are identified at this point under
standard RDD assumptions.  The point $(x_{1}^{\circ},x_{2}^{\circ})$ is in the
interior of the treatment region $\mathcal{X}_{1}$ and as such,
$E[Y(1)|X=x^{\circ}]$ is identified, indeed this quantity simply equals
$E[Y|X=x^{\circ}]$. However, without further assumptions we cannot identify
$E[Y(0)|X=x^{\circ}]$ nor the conditional average treatment effect
$E[\tau|X=x^{\circ}]$.

To progress, we make the following `comonotonicity' assumption. Consider two
vectors of initial math and reading test scores $x=(x_1,x_2)$ and
$x'=(x_1',x_2')$. We suppose $E[Y(0)|X=x]$ is greater than $E[Y(0)|X=x']$ if and
only if $E[Y(1)|X=x]$ is greater than $E[Y(1)|X=x']$. That is, the functions
$E[Y(0)|X=\cdot]$ and $E[Y(1)|X=\cdot]$ are `comonotonic' in the sense of e.g.,
\cite{Schmeidler1989}. In the present context, this assumption states that
initial test scores associated with higher average math outcomes under the
counterfactual of no-treatment, are also associated with higher average math
scores under the counterfactual of treatment. We provide further discussion of
the assumption in the context of our empirical application in Section 5. In
addition, Appendix B.1 specifies a model of skill accumulation motivated by the
test score application, and we show that the assumption holds under this model.

Note that $x^{\circ}$
lies on the same contour as $x^{*}$ which is indicated by the solid
black curve. For all values of $x\in\mathcal{X}_{1}$ along this
contour, $E[Y|X=x]$ and thus $E[Y(1)|X=x]$, is constant. It follows from the
 comonotonicity condition that $E[Y(0)|X=x^{*}]$ is identical to $E[Y(0)|X=x^{\circ}]$.
Therefore, the conditional average treatment effect is the same at
$x^{\circ}$ as at $x^{*}$, and so $E[\tau|X=x^{\circ}]$ is identified.

In short, we extrapolate conditional average treatment effects away
from the frontier and into the treatment region by matching values
of $x^{\circ}$ with a value $x^{*}$ on the frontier with the same
conditional average treated potential outcome. Similarly, we can extrapolate
into the interior of the no-treatment region by finding covariate
values on the frontier with the same conditional average untreated
potential outcomes.

Note that this approach does not always allow
us to extrapolate to all points in the support of $X$. For example,
for some $x\in\mathcal{X}_{1}$ there may be no $x'\in\mathcal{F}$
such that $E[Y(1)|X=x']=E[Y(1)|X=x]$. Indeed, this is true for the
point $(x_{1}^{+},x_{2}^{+})$ in Figure 1. Consider the extreme case in which the contours of $E[Y|X=x]$ are parallel to the frontier. Then we cannot extrapolate anywhere. Conversely, if contours  are perpendicular to the frontier and $X$ has sufficiently large support, then we can extrapolate everywhere.
Which case is closer to the truth depends on the context.
Under comonotonicity the contours are contours of treatment effects. If a social planner wishes to treat individuals with high causal effects, it is optimal to set the frontier equal to a contour. In practise, the social planner may be constrained to employ simple treatment rules, may not be able to accurately assess conditional average causal effects, or may care about more than one outcome or outcomes that differ from those of interest to the researcher. Such cases may be intermediate to these two extreme examples allowing us to extrapolate causal effects to some but not all points in the support of $X$. Indeed, this appears to be the case in our empirical application.

Recall that along the frontier, both the treated and untreated conditional
average potential outcomes can be identified under standard RDD assumptions.  On
the left panel in Figure 1.2, we plot conditional average potential outcomes at
each point along the frontier as a function of the $x_1$-coordinate of that
point. The conditional average treated potential outcome is in turquoise and the
conditional average untreated potential outcome in magenta. The vertical dashed
blue line is at the location $x^*_1$ and intersects each of the curves at the
corresponding dashed horizontal line. The elevation of the dashed horizontal
turquoise line is equal to $E[Y(1)|X=x^*]$ which is also equal to
$E[Y(1)|X=x^\circ]$. The elevation of the dashed magenta line is
$E[Y(0)|X=x^*]$.

On the right panel in Figure 1.2, we plot the pairs $(E[Y(1)|X=x],E[Y(0)|X=x])$
for all values of $x$ along the frontier $\mathcal{F}$.  This curve is shown in
blue. The curve is strictly increasing which suggests that along the frontier
values of $x$ associated with a greater conditional average treated potential
outcomes also have greater conditional average untreated potential outcomes. We
can also see this on the left panel, an $x_1$ with a larger associated value of
$E[Y(0)|X=x]$ also has a larger value of $E[Y(1)|X=x]$.  Note that the
comonotonicity condition implies that the blue curve in the right panel must be
strictly increasing, and this can be used as the basis for a test of this
assumption.

\begin{figure}[h]
\caption{Conditional Average Potential Outcomes}
\centering
\subfloat[]{
\includegraphics[scale=0.19]{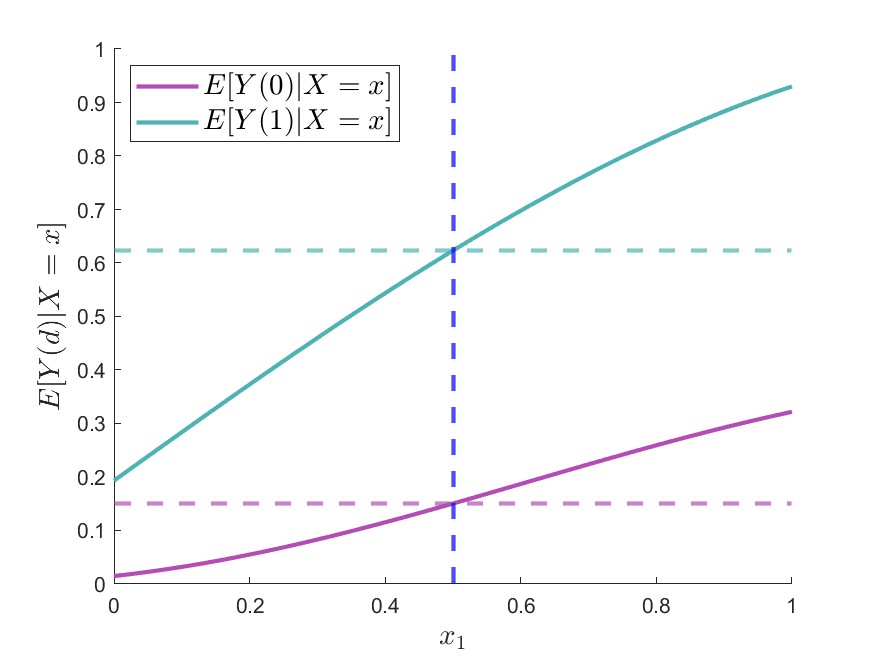}
}
\subfloat[]{
\includegraphics[scale=0.19]{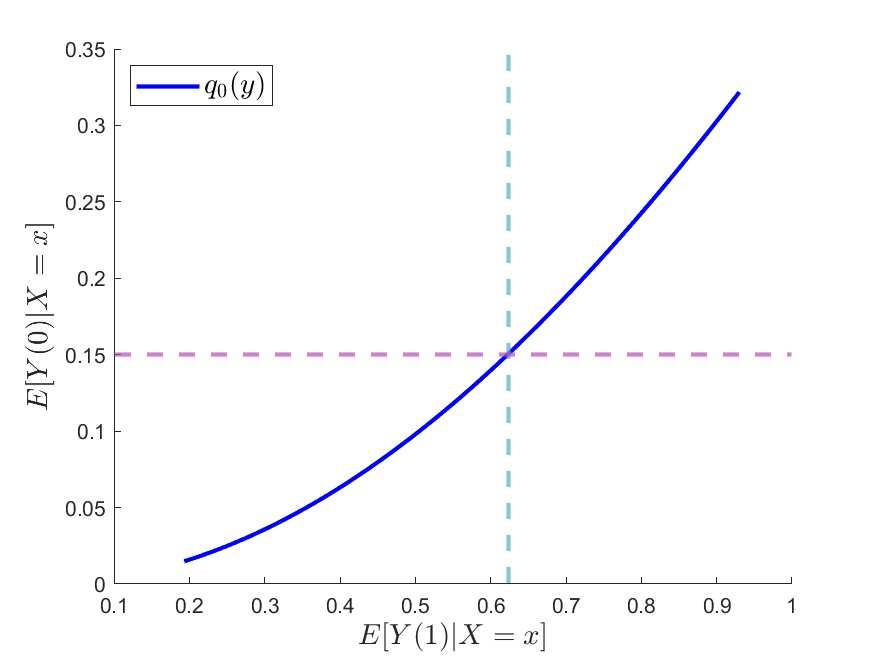}
}
\end{figure}

The vertical dashed line (shown in turquois) in the right panel of Figure 1.2 is
located at the value $E[Y(1)|X=x^{\circ}]$. As we discuss above, this quantity
is trivially identified. This is the same value as $E[Y(1)|X=x^{*}]$, which is
located along the frontier.  The corresponding value $E[Y(0)|X=x^{*}]$ is thus
equal to the elevation of the horizontal dashed line, shown in magenta. Under
the comonotonicity condition, this is equal to the value of
$E[Y(1)|X=x^{\circ}]$.

More generally, for a given point $x$ in the treated region, we can find the
value of $E[Y(0)|X=x]$ from the right panel by the $y$-coordinate of the point
on the blue curve with $x$-coordinate equal to $E[Y|X=x]$. Conversely, if $x$ is
in the no-treatment region, then $E[Y(1)|X=x]$ is the $x$-coordinate of the
point on the blue curve whose $y$-coordinate is $E[Y|X=x]$. For some values of
$x$ there may be no corresponding point on the blue curve in which case we
cannot extrapolate conditional average treatment effects to that value of $x$.

We can understand the blue curve in Figure 1.2 as a function that returns the
value of $E[Y(0)|X=x]$ given the value of $E[Y(1)|X=x]$. We denote this function
by $q_0$ and conversely, the function that returns $E[Y(1)|X=x]$ in given
$E[Y(0)|X=x]$ by $q_1$.

\section{Formal Identification Results}

We now state our formal results, we begin by re-stating the standard sharp RDD
identification result that conditional average treatment effects are identified
for covariate values on the frontier between the treatment and no-treatment
regions. Again we let $Y$ be a scalar outcome of interest, $D$ a binary
treatment indicator. We denote by $Y(d)$ as the potential outcome under
treatment level $d$, $\tau=Y(1)-Y(0)$ as the individual level treatment effect,
and $X$ a vector of covariates with support $\mathcal{X} \subset
\mathbb{R}^{d_{X}}$. The support $\mathcal{X}$ is partitioned into a treatment
region $\mathcal{X}_{1}$ and no-treatment region $\mathcal{X}_{0}$.

The frontier consists of those covariate values that lie on the margin between
the treated and untreated regions. To define this formally, let
$int(\mathcal{X}_1)$ denote the interior of the treated region and
$int(\mathcal{X}_0)$ the interior of the untreated region. Then the frontier
$\mathcal{F}$ is the intersection of the boundaries between these two
sets. Equivalently, letting $cl(\cdot)$ return the closure of a set,
\[ \mathcal{F}=cl\big(int(\mathcal{X}_{1})\big)\cap
cl\big(int(\mathcal{X}_{0})\big).
\]

If $\mathcal{X}_{1}$ and $\mathcal{X}_{0}$ are sufficiently regular then the
above is simply the intersection of their boundaries.\footnote{In particular, if
the closures of $\mathcal{X}_{1}$ and $\mathcal{X}_{0}$ are `regular closed
sets'.} The use of the interiors in the definition allows us to avoid cluttering
the analysis with regularity conditions on $\mathcal{X}_{1}$ and
$\mathcal{X}_{0}$.

Standard RDD analysis identifies conditional average treatment effects at the
frontier under the conditions below.

\theoremstyle{definition} \newtheorem*{A01}{Assumption 1.1} 
\begin{A01}[Deterministic Treatment]
  $D=1\{X\in\mathcal{X}_{1}\}$, where $\mathcal{X}_{1}$
  and $\mathcal{X}_{0}$ do not overlap and their union is the
  support of $X$.
\end{A01}

\theoremstyle{definition} \newtheorem*{A02}{Assumption 1.2}
\begin{A02}[Continuity]
  The functions $x\mapsto E[Y(0)|X=x]$ and $x\mapsto E[Y(1)|X=x]$ are continuous.
\end{A02}

Assumption 1.1 simply states that treatment is deterministic given $X$ and
defines the treatment and no-treatment regions $\mathcal{X}_{1}$ and
$\mathcal{X}_{0}$. Assumption 1.2 imposes that the conditional average potential
outcomes are continuous. Strictly speaking, in order to identify conditional
average treatment effects at the frontier, it would suffice to replace
Assumption 1.2 with the weaker condition that the conditional expectations are
continuous at each $x$ on the frontier. However, the stronger continuity
condition allows us to sidestep some technicalities.\footnote{In particular,
when $X$ is continuously distributed, conditional expectations like
$E[Y(0)|X=x]$ and $E[Y(1)|X=x]$ are not uniquely defined. From a
measure-theoretic perspective, $x\mapsto E[Y(d)|X=x]$ is a measurable function
$g$ on the support of $X$ so that for any measurable set $\mathcal{S}$ with
$P(X\in\mathcal{S})>0$, $E[g(X)|X\in\mathcal{S}]=E[Y(d)|X\in\mathcal{S}]$, but
such a function $g$ is not unique.  However, if there exists a continuous $g$
with this property, then it is unique. Thus we understand $x\mapsto E[Y(d)|X=x]$
for $d=0,1$ to refer to the unique continuous conditional expectation
functions. We employ this convention throughout the text.}

Theorem 1 below is standard, it establishes identification of the conditional
average potential outcomes, and thus conditional average treatment effect
$E[\tau|X=x]$ for each $x$ on the frontier $\mathcal{F}$.

\theoremstyle{plain}
\newtheorem*{T1}{Theorem 1} 
\begin{T1}
  Suppose Assumptions 1.1 and 1.2 hold. Then $E[Y(1)|X=x]$ and $E[Y(0)|X=x]$
  are identified for every $x\in\mathcal{F}$.

  More precisely, for each $d=0,1$ there is a function $g_d$ on
$\mathcal{X}_d\cup \mathcal{F}$ so that for all $x\in \mathcal{X}_d$,
$g_d(x)=E[Y|X=x]$, and $g_d$ is continuous at each point in $\mathcal{F}$. For
each $x\in\mathcal{F}$, conditional average potential outcomes are identified as
follows:
  \begin{equation}
    E[Y(d)|X=x]=g_d(x) \label{ident}
  \end{equation}
\end{T1}

\subsection{Extrapolation Under Comonotonicity}

Theorem 1 identifies conditional average potential outcomes at the frontier. In
order to identify effects away from the frontier we require an additional
identifying assumption.
	\theoremstyle{definition} \newtheorem*{A2}{Assumption 2}
\begin{A2}[Comonotonicity]
 for any $x_{2},x_{1}\in\mathcal{X}$: 
\begin{equation*}
 E[Y(0)|X=x_{1}]\geq E[Y(0)|X=x_{2}] \iff E[Y(1)|X=x_{1}]\geq E[Y(1)|X=x_{2}]
\end{equation*}
\end{A2}

Assumption 2 states that values of the covariates associated with higher average
untreated potential outcomes also correspond to higher treated potential
outcomes.Before we discuss Assumption 2 in greater depth and provide sufficient
conditions, let us first state a key consequence of Assumption 2 which
constitutes our main identification result.
\theoremstyle{plain}
\newtheorem*{T2}{Theorem 2} 
\begin{T2}
  Suppose Assumptions 1 and 2 hold and define $g_0$ and $g_1$ as in Theorem 1. Suppose that for some $x\in\mathcal{X}_{d}$,
  there exists an $x^{*}\in\mathcal{F}$ so that $E[Y|X=x]=g_{d}(x^{*})$. Then $E[Y(1)|X=x]=g_{1}(x^{*})$
  and $E[Y(0)|X=x]=g_{0}(x^{*})$. Moreover, for any $x_1,x_2\in\mathcal{F}$, $g_0(x_1)\geq g_0(x_2)\iff g_1(x_1)\geq g_1(x_2)$.
\end{T2}

The first statement of Theorem 2 shows how comonotonicity allows us to impute
conditional average potential outcomes away from the frontier for certain values
of $x$. The second statement notes that under our assumptions, $g_1(x)$ is
increasing in the corresponding value of $g_0(x)$ along the frontier and vice
versa. The latter condition suggests that comonotonicity is testable because
$g_0(x)$ and $g_1(x)$ are identified on the frontier.\footnote{An implication
  (in fact, a sharp testable implication) of comonotonicity is
  $$\inf_{x,x' \in \mathcal{F}} (g_{d}(x) - g_{d}(x'))(g_{1-d}(x) - g_{1-d}(x'))
  \geq 0.$$ Hence, one can directly apply the methods of
  \cite{chernozhukov2013intersection} to falsify comonotonicity. However, since
  $g_{d}$ and $g_{1-d}$ can be estimated independently, we suspect that more
  powerful tests tailored to our setting can be developed. We view this as an
  interesting direction for future research.}

When we consider estimators motivated by Theorem 2 it is helpful to re-frame the
result as follows. Under the conditions of Theorem 2, there is a unique,
increasing function $q_{1-d}$ so that for every $x\in\mathcal{F}$,
\begin{equation}
  E[Y(1-d)|X=x]=q_{1-d}\big(E[Y|X=x]\big).\label{qdef1}
\end{equation}
The domain of $q_{1-d}$ consists of all the values of $E[Y(d)|X=x]$ for some
$x\in\mathcal{F}$. Under regularity conditions on the frontier this set is an
interval with lower and upper end points $\underline{y}_{1-d}$ and and
$\bar{y}_{1-d}$ (which can be positive or negative infinity) with formulae
 \begin{equation}
 \underline{y}_{1-d}=\inf_{x\in\mathcal{F}}E[Y(d)|X=x],\,\,\,\,\,\bar{y}_{1-d} =\sup_{x\in\mathcal{F}}E[Y(d)|X=x]. \label{endpoints}
 \end{equation}
 Under the conditions of Theorem 2, $q_{1-d}$ is weakly increasing. The curve in
Figure 1.2 is precisely the function $q_0$.

Now, consider some $x\in\mathcal{X}_d$ not necessarily on the frontier.  Under
the conditions of Theorem 2, if $E[Y|X=x]$ is in the domain of $q_{1-d}$, then
the conditional mean potential outcomes at $x$ are identified as follows.
\begin{align}
	E[Y(d)|X=x]&=E[Y|X=x]\nonumber\\
	E[Y(1-d)|X=x]&=q_{1-d}\big(E[Y|X=x]\big)\label{identuse}
\end{align}
That is, the conditional average treated and untreated potential outcomes can be
written in terms of the conditional mean outcome and the function $q_{1-d}$,
which are both identified. Conditional average treatment effects may thus be
identified at a point $x\in\mathcal{X}_d$ by
\begin{equation}
	E[\tau|X=x]=(2d-1)\big(E[Y|X=x]-q_{1-d}(E[Y|X=x])\big). \label{tauimpute}
\end{equation}

If comonotonicity does not hold, then there may be no function that satisfies
(\ref{qdef1}) and so it is helpful to use the following definition of $q_{1-d}$,
which coincides with (\ref{qdef1}) under the conditions of Theorem 2.
\begin{equation}q_{1-d}(y):=E\big[Y(1-d)\big|E[Y(d)|X]=y,X\in\mathcal{F}\big].\label{qdef2}
\end{equation}
This function remains well-defined even if comonotoncity fails. Moreover, using
this definition, the right hand side of (\ref{tauimpute}) is a weighted average
of conditional average treatment effects, regardless of whether comonotonicity
holds.

Proposition 1 states two sufficient, but by no means necessary, conditions for
Assumption 2. The first sufficient condition states that if average untreated
potential outcomes are higher within some covariate stratum, then average
treatment effects are also greater in this stratum. The second condition states
that, in a certain sense, there is less heterogeneity in treatment effects
between strata of $X$ than heterogeneity in average baseline potential outcomes.

\theoremstyle{plain}
\newtheorem*{P1}{Proposition 1} 
\begin{P1}
	Either of the following two conditions implies Assumption 2.
	
	a. For all $x_{2},x_{1}\in\mathcal{X}$
	\[
	E[Y(0)|X=x_{1}]\geq E[Y(0)|X=x_{2}] \implies E[\tau|X=x_{1}]\geq E[\tau|X=x_{2}].
	\]

	b. For all $x_{2},x_{1}\in\mathcal{X}$
	\begin{equation}
		|E[\tau|X=x_{1}]-E[\tau|X=x_{2}]|\leq|E[Y(0)|X=x_{2}]-E[Y(0)|X=x_{1}]|,
	\end{equation}
	and the inequality is strict when the quantity on the RHS is non-zero. 
\end{P1}

An immediate corollary of Proposition 1 is that, if average treatment effects do
not differ between strata of $X$, then comonotonicity holds. Of course,
homogeneity in treatment effects makes extrapolation away from the frontier
trivial. Proposition 1 demonstrates two ways in which our assumptions weaken
treatment effect homogeneity. In particular, it suffices for Assumption 2 that
differences in average treatment effects between covariate strata have the same
sign as differences in average untreated potential outcomes, or that the
difference in treatment effects is smaller than the difference in baseline
potential outcomes.

In a randomized controlled trial (RCT), both conditional average treated and
untreated potential outcomes are directly identified. Thus one can formally test
or informally assess whether the assumption of comonotonicity holds in the
setting of a given RCT. In Appendix B.5 we provide an informal empirical
assessment of whether comonotonicity is credible in the context of the RCT in
\cite{Alan}. The empirical setting in \cite{Alan} is apt in that it shares key
features with the setting for our empirical application, in particular both the
outcomes and covariates consist of students' reading and math test scores as in
our application.

In particular settings, comonotonicity may be justified by primitive modelling
conditions. In Appendix B.1 we show that the assumption holds under a model of
skill formation and noisy test scores that is motivated by our empirical
application.

Comonotoncity is related to, but distinct from, rank invariance of potential
outcomes. This rank invariance condition states that if we draw two individuals
$i$ and $j$ iid from the population, so that $(Y_i(0),Y_i(1))$ and
$(Y_j(0),Y_j(1))$ are independent copies of $(Y(0),Y(1))$, then $Y_{i}(0)\geq
Y_{j}(0)\iff Y_{i}(1)\geq Y_{j}(1)$.  Note that rank-invariance has to hold for
every pair of individuals in the population. By contrast, Assumption 2 concerns
only averages of potential outcomes between subgroups. Moreover, rank invariance
of potential outcomes is not verifiable, even with an ideal randomized
controlled trial, because it involves a comparison between an individual's
outcomes in two distinct counterfactual worlds, whereas all of the objects in
Assumption 2 can be identified in an ideal randomized experiment.

In a formal sense, rank-invariance of potential outcomes is neither stronger nor
weaker than Assumption 2 (that is, neither one implies the other). However, one
can show that rank-invariance implies a conditional quantile version of
Assumption 2:
\begin{equation}
Q_{Y(1)|X}(t|x_1)\geq Q_{Y(1)|X}(t|x_2)\iff Q_{Y(0)|X}(t|x_1)\geq Q_{Y(0)|X}(t|x_2) \label{QRV}
\end{equation}
where $Q_{Y(d)|X}(t|x)$ is the $q$-th conditional quantile of $Y(d)$ given
$X=x$. Thus if the conditional distributions of potential outcomes are
symmetric, then rank-invariance implies Assumption 2 (because the conditional
means and medians are identical). Note that if we were to assume rank invariance
in place of Assumption 2, we could use (\ref{QRV}) to identify conditional
quantiles of potential outcomes and in some cases the entire distribution of
individual treatment effects within some covariate strata. We discuss this
further in Appendix B along with formal results.

\subsection{Conditional Comonotoncity}

In cases with strictly more than two covariates, the comonotonicity condition is
generally stronger than necessary for identification.  Suppose $X$ has three or
more components. Let $X^{(1)}$ be the subvector containing precisely two of the
characteristics in $X$ and let $X^{(2)}$ be the subvector that contains the
remaining characteristics. Assumption 3 below weakens Assumption 2 in that it
only requires comonotonicity to hold within each stratum of the characteristics
in $X^{(2)}$ and not between strata.

\theoremstyle{definition} \newtheorem*{A3}{Assumption 3}
\begin{A3}[Conditional Comonotonicity]
	For any $x_{1},x_{2}\in\mathcal{X}$ with $x_{1}^{(2)}=x_{2}^{(2)}$,
	we have: 
	\begin{equation*}
		E[Y(1)|X=x_{1}]\geq E[Y(1)|X=x_{2}]\iff E[Y(0)|X=x_{1}]\geq E[Y(0)|X=x_{2}]
	\end{equation*}	
\end{A3}

Theorem 3 is similar to Theorem 2. Again we extrapolate conditional average
potential outcomes from a point on the frontier $x^*$ to a point $x$ away from
the frontier. However, Theorem 3 strengthens the requirement on $x^*$ so that
its subvector $x^{*(2)}$ is identical to the corresponding subvector $x^{(2)}$
of $x$.

\theoremstyle{plain}
\newtheorem*{T3}{Theorem 3} 
\begin{T3}
  Suppose Assumptions 1 and 3 hold and define $g_0$ and $g_1$ as in Theorem 1. Suppose that for some $x\in\mathcal{X}_{d}$,
  there exists an $x^{*}\in\mathcal{F}$ so that $x^{*(2)}=x^{(2)}$ and  $E[Y|X=x]=g_{d}(x^{*})$. Then $E[Y(1)|X=x]=g_{1}(x^{*})$
  and $E[Y(0)|X=x]=g_{0}(x^{*})$.  Moreover, for any $x_1,x_2\in\mathcal{F}$,with $x_1^{(2)}=x_2^{(2)}$,  $g_0(x_1)\geq g_0(x_2)\iff g_1(x_1)\geq g_1(x_2)$.
\end{T3}

The second statement in Theorem 3 states the weaker condition that Assumption 3
places on $g_0$ and $g_1$ along the frontier. Whereas under full comonotonicity
they must move in the same direction, under the conditional comonotonicity
assumption, this co-movement need only apply within each fixed stratum of
$X^{(2)}$.

In the appendix we provide further alternatives that weaken Assumption 2 and
which may still be sufficient for identification. In particular, we consider
local comonotonicity. In this case, we require that the vector of first
derivatives of the conditional average treated and untreated potential outcomes
are identical. Moreover, we provide an even weaker, conditional version of local
comonotonicity, akin to the conditioning in Assumption 3.

\section{Estimation}
\label{sec:estimation}
Our analysis motivates estimates of conditional average potential outcomes and
treatment effects. From these, one may obtain estimates of the causal effect of
an alternative treatment allocation policy. For example, the average effect of
treating all individuals. We first consider estimation under full comonotonicity
based on Theorem 2.  We then consider estimation under the conditional
comonotonicity assumption based on the identification result in Theorem 3.

\subsection{Conditional Average Potential Outcome Estimates}

We obtain conditional average potential outcome estimates by the plug-in
principle. That is, we replace the functions $E[Y|X=\cdot]$, $q_{0}$, and
$q_{1}$ with estimates in (\ref{identuse}). Let $\hat{q}_{d}$ be an estimate of
$q_{d}$ for $d=0,1$. We estimate $E[Y|X=x]$ separately for $x$ in the treated
and untreated regions, using only the data for individuals within the
corresponding region. Let $\hat{g}_d(x)$ denote the estimate of $E[Y|X=x]$ for
$x\in\mathcal{X}_d$. Consider an $x$ in the treated region $\mathcal{X}_1$. We
estimate the conditional average potential outcomes as follows.
\begin{align}
	\widehat{E[Y(1)|X=x]}=\hat{g}_1(x),\,\,\,\,\,
	\widehat{E[Y(0)|X=x]}=\hat{q}_{0}\big(\hat{g}_1(x)\big)
	\label{reguse2}
\end{align}
Taking the difference between these two estimates yields an estimate of the
conditional average treatment effect at $x$:
\[
\hat{\tau}(x)=\hat{g}_1(x)-\hat{q}_{0}\big(\hat{g}_1(x)\big)
\]
Conversely, for $x$ in the untreated region $\mathcal{X}_0$, we simply switch
$0$ with $1$ in the formula above and estimate conditional average causal
effects accordingly.
\begin{align*}
	\widehat{E[Y(0)|X=x]}&=\hat{g}_0(x),\,\,\,\,\,
	\widehat{E[Y(1)|X=x]}=\hat{q}_{1}\big(\hat{g}_0(x)\big)
	\label{reguse}\\
	\hat{\tau}(x)&=\hat{q}_{1}\big(\hat{g}_0(x)\big)-\hat{g}_0(x)
\end{align*}

We obtain $\hat{g}_1$ by non-parametric regression of $Y$ on $X$, using only
data on individuals in the treated region. Similarly, for $x$ in the untreated
region, we obtain $\hat{g}_0(x)$ by regression using only data in the untreated
region. We suggest local linear regression for this purpose but one could use
alternative non-parametric regression methods. The local linear regression
estimates take the form below for $d=0,1$, where $K$ is a kernel (e.g., uniform,
Gaussian, or triangular) and $h$ is a bandwidth:
\begin{align*}
	\hat{g}_{d}(x)	=e_{1}'\hat{\beta}_{x},\,\,\,\,\,
	\hat{\beta}_{x}	=\arg\min_{\beta}\sum_{i \in \mathcal{I}_{d}}K\bigg(\frac{\|X_{i}-x\|}{h}\bigg)\big(Y_{i}-(1,(X_{i}-x)')\beta\big)^{2},
\end{align*}
where we define $\mathcal{I}_{d} := \{i:X_{i}\in\mathcal{X}_{d}\}$ set of
indices for the observations with treatment status $d$.

Estimation of $q_{0}$ and $q_1$ is a little more involved. For ease of
exposition, we focus below on estimation of conditional average untreated
potential outcomes in the treated region and thus on the estimation of
$q_0$. Estimation of conditional average treated potential outcomes in the
untreated region proceeds symmetrically.

Recall the definition of $q_{1-d}$ given in equation (\ref{qdef2}). We see
$q_0(y)$ is the mean untreated potential outcome among individuals on the
frontier whose conditional mean treated potential outcome is $y$. This
characterization motivates a nonparametric regression-based
approach.\footnote{Note that under comonotonicity, $q_{1-d}$ is also uniquely
  defined by (\ref{qdef1}).   However, if we replace $E[Y(1)|X=x]$ and
  $E[Y(0)|X=x]$ with noisy regression estimates $\hat{g}_1(x)$ and
  $\hat{g}_0(x)$, there is no guarantee that there is a functional relationship
  between them on the frontier. That is, there may not exist any function
  $\tilde{q}_0$ so that $\tilde{q}_0\big(\hat{g}_1(x)\big)=\hat{g}_0(x)$ for all
  $x\in\mathcal{F}$. This motivates the use of the characterization in
  (\ref{qdef2}) which is remains well-defined even if we replace conditional
  average potential outcomes on the frontier with noisy estimates, and even if
  comonotoncity fails.} In brief, we take the untreated individuals who lie
within a small shrinking neighbourhood of the frontier and we regress their
outcomes on estimates of $E[Y(1)|X]$. To assess whether an untreated individual
$i$ is within a small neighbourhood of the frontier, we check whether $i$'s
covariates $X_i$ are sufficiently close to those of $i$'s nearest treated
neighbour. Let $NN(x)$ be the index of this nearest neighbour for $x \in
\mathcal{X}_{0}$ (in the event of ties, one can choose among them at random), formally
\[
NN(x):=\arg\min_{j\in\mathcal{I}_1}\|X_j-x\|,
\]
and $NN_{i} := NN(X_{i})$. If $\|X_i-X_{NN_{i}}\|\leq d$, and treated and
untreated regions satisfy some regularity conditions, it follows that $X_i$ is
within distance $d$ of a point on the frontier. This approach allows researchers
to apply our method in settings in which treatment statuses are observed but the
location of the frontier itself is unknown. It also obviates the need for
applied practitioners to calculate distances of points to the frontier which
could complicate practical implementation.

For untreated individuals within a shrinking distance from the frontier, the
conditional mean treated potential outcome can be consistently estimated. A
crude estimate is $\hat{g}_1(X_{NN_{i}})$, which is an estimate of the
conditional mean outcome of the nearest treated neighbour. However, if the
conditional mean potential outcome is twice differentiable, one can reduce bias
by estimating the derivative of the conditional mean outcome at
$X_{NN_{i}}$. Let $\triangledown\hat{g}_1(x)$ be an estimate of the vector of
partial derivatives $\frac{\partial}{\partial x} E[Y|X=x]$, then for
$x \in \mathcal{X}_{0}$, we estimate $E[Y(1)|X=x]$ by
\[\tilde{g}_1(x)=\hat{g}_1(X_{NN(x)})+\triangledown\hat{g}_1(X_{NN(x)})'(X_i-X_{NN(x)}).\]
In the case of local linear regression, the local linear regression coefficients
$\hat{\beta}_{X_{NN(x)}}$ (absent the intercept) are a natural choice for
$\triangledown\hat{g}_1(X_{NN(x)})$. The resulting estimator can be written
succinctly as follows:
\begin{align*}\tilde{g}_1(x)&=(1,x')\hat{\beta}_{X_{NN(x)}}\\
\hat{\beta}_{X_{NN(x)}}
                            &=\arg\min_{\beta}\sum_{j\in\mathcal{I}_{1}}K\bigg(\frac{\|X_{j}-X_{NN(x)}\|}{h}\bigg)\big(Y_{j}-(1,X_{j}')\beta\big)^{2}
\end{align*}
As described above, we estimate $q_{0}(y)$ by nonparametric regression of $Y_i$
on $\tilde{g}_1(X_i)$ using only treated individuals $i$ for whom
$\|X_i-X_{NN_{i}}\|$ is sufficiently small. Again we suggest local linear
regression. Let $b$ be a bandwidth and
$W_i=1\{\|X_i-X_{NN_{i}}\|\leq \varepsilon\}$.
The estimate is given below.
\begin{align}\label{eq:def-q}
	\hat{q}_{0}(y)&=(1,0)\hat{\gamma}_{y}\nonumber\\
	\hat{\gamma}_{y}	&=\arg\min_{\gamma}\sum_{i \in \mathcal{I}_{0}}W_{i}K\bigg(\frac{|\tilde{g}_{1}(X_{i})-y|}{b}\bigg)\big[Y_{i}-\big(1,\tilde{g}_{1}(X_{i})-y\big)\gamma\big]^{2}
\end{align}
The bandwidth $b$ may be chosen by cross-validation. One could replace $Y_i$ in
the above with $\hat{g}_0(X_i)$, which may increase precision but at the expense
of greater bias. The weighting by the binary indicator $W_i$ ensures that the
local linear regression only includes data on individuals sufficiently close to
the frontier. As a rule of thumb, we suggest setting $d=\omega h$, where
$\omega$ is twice the $0.75$ quantile of the kernel $K$ and $h$ is the bandwidth
used in the estimate $\hat{g}_1$. This choice is designed to ensure that the
bias from using points not exactly at the frontier is of a similar magnitude to
the bias in the estimation of $\hat{g}_1$.

Recall that under regularity conditions, the domain of $q_0$ is an interval with
end points $\underline{y}_0$ and $\bar{y}_0$ defined in (\ref{endpoints}). We
estimate these end points by $\hat{\underline{y}}_0$ and $\hat{\bar{y}}_0$
defined below.
\[
\hat{\underline{y}}_0=\min_{i:i\in\mathcal{I}_1,W_i=1}\tilde{g}_{1}(X_{i}),\,\,\,\,\,\hat{\bar{y}}_0=\max_{i:
  i \in \mathcal{I}_{1},W_i=1}\tilde{g}_{1}(X_{i})
\]

Figure 3.1 provides a visual demonstration of the method in which the
non-parametric regressions are carried out with a uniform kernel. Figure 3.1.a
contains simulated data from the expository model in Section 1. The crosses
represent the covariate values of different data points.  Let $i$ be the index
of the untreated individual with covariate values indicated by the large cross
in Figure 3.1.a. The covariate vales of $i$'s nearest treated neighbour
$X_{NN_{i}}$, are indicated by the large plus sign. The large black circle is
centered at $X_{NN_{i}}$ and has radius $h$ (which was here chosen arbitrarily
to equal $0.2$). Those data points in the treated region within distance $h$ of
$X_{NN_{i}}$ are indicated by the small circles.  The estimate
$\tilde{g}_0(X_i)$ is the fitted value at $X_i$ from linear regression of the
outcomes on the covariates using only those data points whose covariates are
indicated by small circles. The distance between $X_{i}$ and $X_{NN_{i}}$ is the
length of the black dashed line between these two points. Because we use a
uniform kernel $\omega=1$, and thus the suggested value of $d$ described above
is simply $h$. Therefore, $W_i=1$ because $\|X_{i}-X_{NN_{i}}\|\leq h$ as can be
seen from the fact $X_i$ lies within the black circle.

\begin{figure}[h]
	\caption{Estimation Detail}
	\centering
	\subfloat[]{
		\includegraphics[scale=0.22]{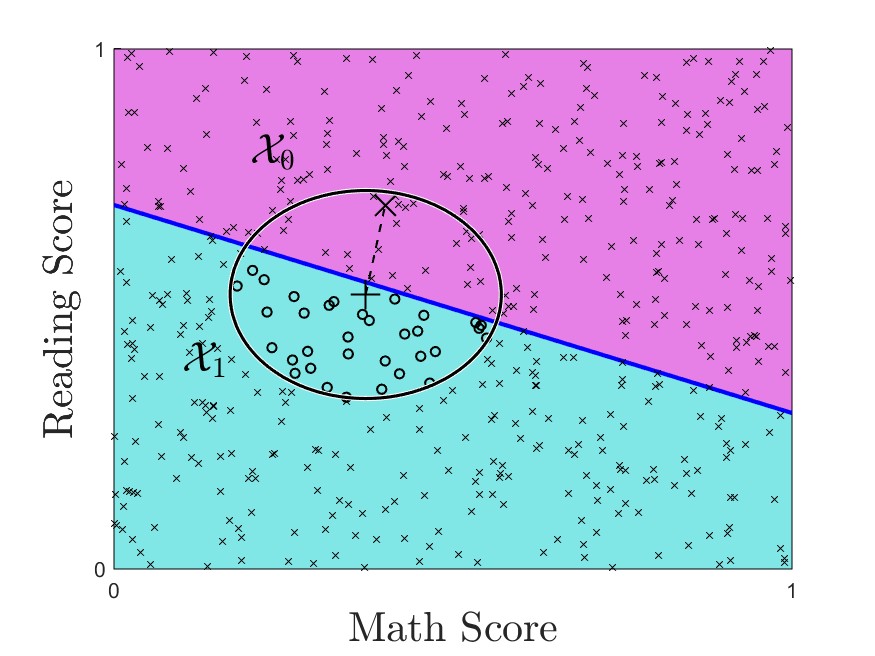}
	}
	\subfloat[]{
		\includegraphics[scale=0.22]{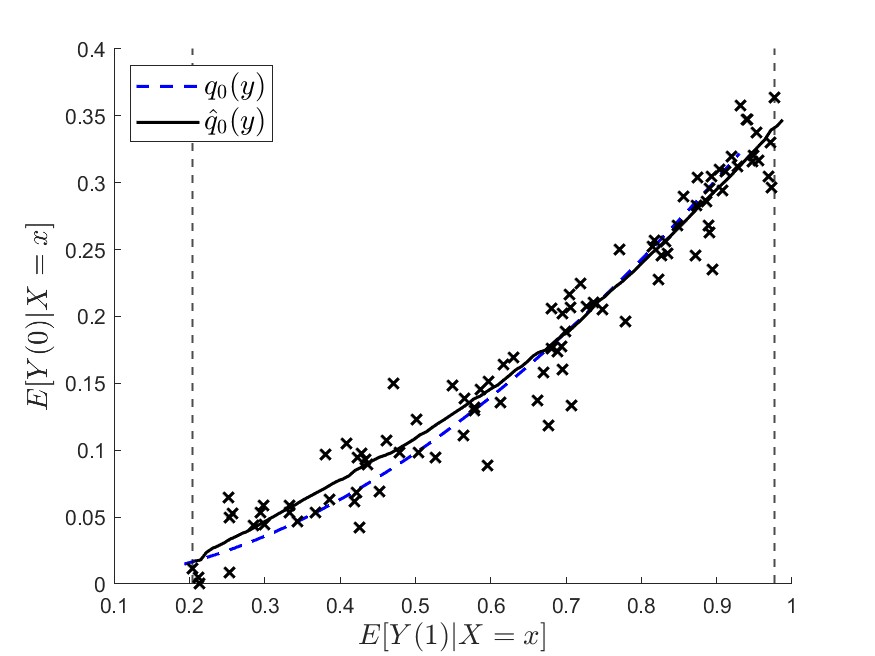}
	}
\end{figure}

Using the same simulated data, Figure 3.1.b plots values of $Y_i$ for untreated
individuals with $W_i=1$ against corresponding values of $\tilde{g}_1(X_i)$
which is evaluated as in the previous paragraph. Regressing these outcomes on
the corresponding values of $\tilde{g}_1(X_i)$ by local linear regression with a
uniform kernel yields the solid black curve which is the estimate of $q_0$ (here
we simply use the same ad hoc bandwidth of $0.2$). The true value of $q_0$ is
indicated by the dashed blue curve. The values of $\hat{\underline{y}}_0$ and
$\hat{\bar{y}}_0$ are indicated by the vertical dashed lines.

\subsection{Effects of Counterfactual Policies}

Given estimates of conditional average potential outcomes, one can estimate the
causal effects of counterfactual treatment regimes. Consider a policy in which
an individual with $X=x$ is treated with probability $p(x)$. Note that if
$p(x)\in\{0,1\}$ for all $x$, then treatment remains deterministic under this
regime. The mean causal impact of this counterfactual policy on the outcomes of
individuals with values of $X$ in a set $\mathcal{S}$ denoted $\theta_0$, is
given below.
\begin{align}
	\theta_0=&E\big[\big(1-D\big)p(X)\big(E[Y(1)|X]-Y)\big|X\in\mathcal{S}\big]\nonumber\\
	+&E\big[D\big(1-p(X)\big)\big(E[Y(0)|X]-Y\big)\big| X\in\mathcal{S}\big]
	\label{counter}
\end{align}

The restriction $X\in\mathcal{S}$ allows us to restrict attention to those
values of $x$ at which both conditional average potential outcomes are
identified. Recall that under regularity conditions on the frontier, for $d=0,1$
we identify $E[Y(0)|X=x]$ and $E[Y(1)|X=x]$ for each $x$ in $\mathcal{X}_{d}$
such that that $\underline{y}_{1-d}<E[Y|X=x]<\bar{y}_{1-d}$. As such, we focus
on estimation for $\mathcal{S}$ the set that contains these values of $x$. In
order to estimate (\ref{counter}) given this value of $\mathcal{S}$ we use the
plug-in principle. The resulting estimate $\hat{\theta}$ is given by
\begin{align}
\hat{\theta}=&\frac{1}{\sum_{i=1}^{n}S_{i}}{\sum_{i=1}^{n}S_{i}(1-D_{i})p(X_{i})\big(\hat{q}_{1}\big(\hat{g}_{0}(X_{i})\big)-Y_{i})}\nonumber\\
+&\frac{1}{\sum_{i=1}^{n}S_{i}}{\sum_{i=1}^{n}S_{i}D_{i}\big(1-p(X_{i})\big)\big(\hat{q}_{0}\big(\hat{g}_{1}(X_{i})\big)-Y_{i}\big)},
\label{estimatecounter}
\end{align}
where $S_i$ is a binary indicator defined below.
\[S_i=(1-D_i)1\{\hat{\underline{y}}_{1}\leq\hat{g}_{0}(X_{i})\leq\hat{\bar{y}}_{1}\}+D_i 1\{\hat{\underline{y}}_{0}\leq\hat{g}_{1}(X_{i})\leq\hat{\bar{y}}_{0}\}\]

Figure 3.2 plots the same simulated data as in Figure 3.1. In Figure 3.2.a, the
covariate values of individuals with $S_i=1$ are indicated by crosses, whereas
those for whom $S_i=0$ (i.e., those for whom conditional average causal effects
are not identified) are indicated by circles. The dashed blue line in the figure
indicates a counterfactual treatment rule, individuals are treated if and only
if their covariate values are to the right of this dashed line (That is, if
their math score exceeds $0.5$). Note that all individuals whose treatment under
this rule differs from their factual treatment (those with covariate values in
the orange and deep blue triangles) have $S_i=1$, and thus the average treatment
effect of this change is fully identified.

Using the simulated data, Figure 3.2.b plots estimated mean causal effects of a
counterfactual treatment rule in which individuals are treated if and only if
their math score is greater than some cut-off. The x-axis gives the value of
this cut-off and the y-value of the solid blue curve is the corresponding causal
effect estimate calculated from the formula (\ref{estimatecounter}).

\begin{figure}[h]
	\caption{Counterfactual Estimation}
	\centering
	\subfloat[]{
		\includegraphics[scale=0.23]{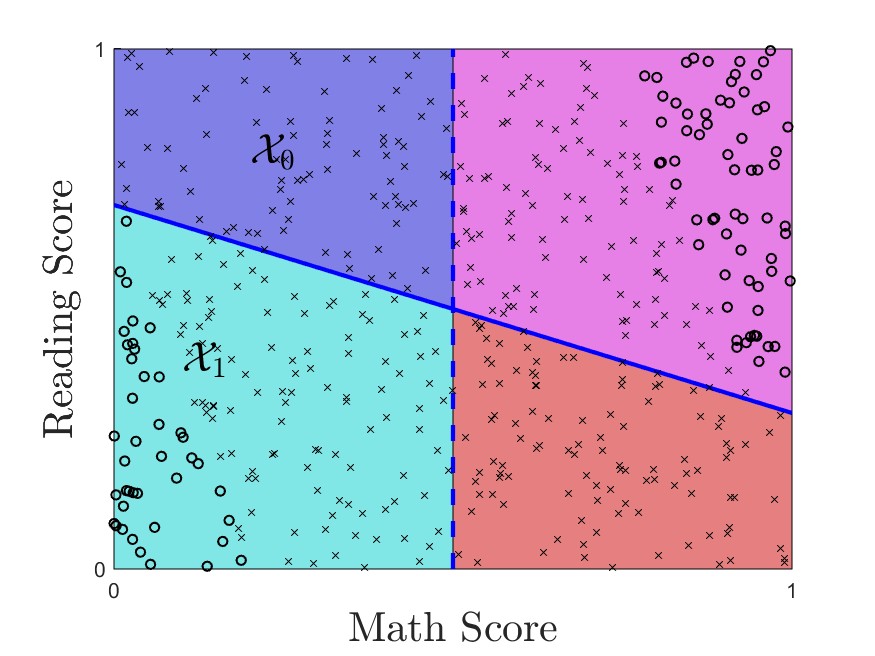}
	}
	\subfloat[]{
		\includegraphics[scale=0.23]{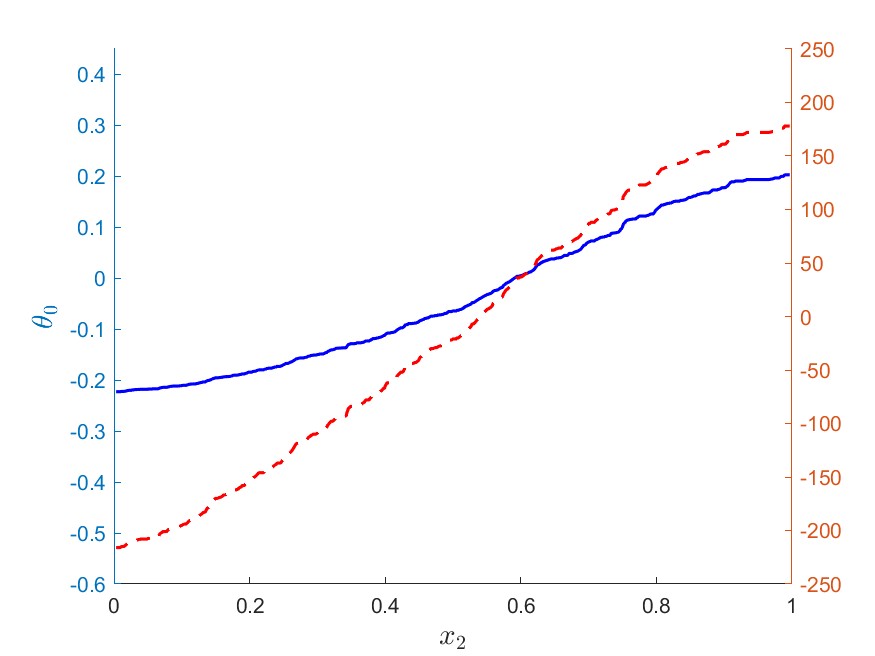}
	}
\end{figure}

It may be helpful to compare $\hat{\theta}$ to the expected change in treatment
status under the counterfactual policy among those individuals in the sample
with $S_i=1$. That is, to compare it with
$\frac{1}{\sum_{i=1}^{n}S_{i}}
\sum_{i=1}^{n}S_{i}[(1-D_{i})p(X_{i})-D_{i}(1-p(X_{i}))]$. If the costs of a
policy are proportional to the number of individuals treated, then this quantity
is proportional to the expected cost of the counterfactual treatment policy on
the subsample with $S_i=1$ less the cost of the factual treatment
assignments. This quantity is plotted by the dashed red curve in Figure 3.2.b
with units given by the y-axis on the right of the figure.

\subsection{Conditional Comonotonicity}

In settings with more than two covariates in $X_i$ we suggest researchers
estimate effects using the weaker conditional comonotonicity condition in
Section 2.2. In these settings $X_i$ can be decomposed into two subvectors
$X_i^{(1)}$ and $X_i^{(2)}$. If $X_i^{(2)}$ is discrete (with a finite number of
support points), then one can apply the estimation method described above
separately within the different strata of $X_i^{(2)}$. More generally, if
$X_i^{(2)}$ is continuously distributed, one needs only slightly adjust the
method in the previous section.

In particular, $\tilde{g}_1$ is estimated exactly as in the previous
subsection. However, in place of $q_{0}(y)$ we estimate a value of this quantity
within each stratum of $X_i^{(2)}$. That is, for each $x^{(2)}$ in the support
of $X_i^{(2)}$ we estimate a stratum-specific value of $q_{0}(y)$ by
$\hat{q}_{0}(y,x^{(2)}):=(1,y,{x^{(2)}}')\hat{\gamma}_{y,x^{(2)}}$, where
$\hat{\gamma}_{y,x^{(2)}}$ is given by
\begin{align*}
  \arg\min_{\gamma}\sum_{i:X_{i}\in\mathcal{X}_{1}}W_{i}K\bigg(\frac{|\tilde{g}_{1}(X_{i})-y|+\|X_i^{(2)}-x^{(2)}\|}{b}\bigg)\big[Y_{i}-\big(1,\tilde{g}_{1}(X_{i}),{X_i^{(2)}}'\big)\gamma\big]^{2}.
\end{align*}
That is, $\hat{q}_{0}(y,x^{(2)})$ is a fitted value from regressing outcomes
$Y_i$ for treated individuals near the frontier on both an estimate of
$E[Y(1)|X=x]$ and also $X^{(2)}$. The corresponding estimated conditional
average treated and untreated potential outcome estimates at
$x=({x^{(1)}}',{x^{(2)}}')'$ are then
\begin{align*}
  \widehat{E[Y(1)|X=x]}=\hat{g}_1(x),\,\,\,\,\,
  \widehat{E[Y(0)|X=x]}=\hat{q}_{0}\big(\hat{g}_1(x),x^{(2)}\big).
\end{align*}

\section{Asymptotic Analysis}

We now provide asymptotic properties of the estimation procedure introduced in
Section \ref{sec:estimation}. We refer to the estimation of $g_{d}$ as the first
stage and estimation of $q_{1-d}$ as the second stage. We first show a (strong)
uniform convergence result for the first stage, which is rather standard in the
literature of local polynomial estimation (e.g., \citealp{stone1982optimal,
  masry1996multivariate, hansen2008uniform}). A minor deviation is that we
consider a local extrapolation to estimate $g_{d}$ for observations are
``slightly outside'' of $\mathcal{X}_{d}$.

Then, we establish asymptotic normality of $\hat{q}_{1-d}(y)$. As evident from
its definition given in \eqref{eq:def-q}, the estimator $\hat{q}_{1-d}(y)$ is
obtained by a local linear regression on regressors that are also generated by a
local linear regression. Hence, the estimation method falls within the general
framework of \citet[hereafter MRS]{mammen2012nonparametric}, whose arguments we
closely follow, with two key differences. First, we incorporate weights $W_i$ to
ensure that only observations near the boundary are used in estimating $g_d$
(for units in $\mathcal{X}_{1-d}$). These weights do not affect the convergence
rate of the first stage but do enter the variance of the second stage,
introducing a trade-off absent in MRS. Second, the first stage estimation error
is independent of the data used for the second stage since the estimation of
$\tilde{g}_{d}(x)$ only uses observations in $\mathcal{X}_{d}$, whereas
$\hat{q}_{1-d}(y)$ only uses observations in $\mathcal{X}_{1-d}$.

We start with showing strong uniform consistency of $\tilde{g}_{d}(x)$ over
$$\mathcal{X}_{d, \varepsilon} :=  \mathcal{X}_{1-d} \cap \{x: \min_{X_{i} \in
  \mathcal{X}_{d}}\lVert x - X_{i} \rVert \leq \varepsilon \},$$ which is the
set of $x \in \mathcal{X}_{1-d}$ such that there exists at least one observation
in $ \mathcal{X}_{d}$ that lies within an $\varepsilon$-ball centered at $x$. By
the triangle inequality and mean-value theorem, we have
\begin{align*}
  &\abs{\tilde{g}_{d}(x) - g_{d}(x)} \\ 
\leq & \abs{(x - NN(x))'\hat{\beta}_{\mathcal{N}_d(x)} - (x - NN(x))'
      \nabla g_{d}(x_{t})} + \abs{\hat{g}_{d}(NN(x)) -
         g_{d}(NN(x))},
\end{align*}
where $x_{t} = tx + (1-t)NN(x)$ for some $t \in [0, 1]$. The convergence of the
second term in the last line is standard. By Cauchy-Schwarz and another
application of the mean-value theorem, the first term can be bounded uniformly
over $\mathcal{X}_{d,\varepsilon}$ by
$$\norm{\hat{\beta}_{NN(x)} - \nabla g_{d}(NN(x))}\varepsilon +
M\varepsilon^{2},$$ which shows that the convergence rate is essentially driven
by the convergence rate by $\norm{\hat{\beta}_{NN(x)} - \nabla g_{d}(NN(x))},$
which is again standard in the local polynomial literature.

We assume the following to establish the said uniform convergence result.
\setcounter{assumption}{3}
\begin{assumption}[Support and Treatment Region]
  \label{assu:supp} The support $\mathcal{X}$ of $X$ is compact. Assume that for
  $d \in \{0,1 \}$, $\pi_{d} :=P(X \in \mathcal{X}_{d}) \in [\delta_{\pi}, 1-\delta_{\pi}]$
  for some $\delta_{\pi} > 0.$
\end{assumption}

The compactness of $\mathcal{X}$ guarantees that the closures of both
$\mathcal{X}_{1}$ and $\mathcal{X}_{0}$ are compact. Such compactness conditions
are used to derive uniform convergence results for $\tilde{g}_{d}(\cdot)$. These
can be relaxed by imposing tail conditions on $f_{X}(\cdot)$ as in
\cite{hansen2008uniform}.

We make the following smoothness and boundedness conditions on the kernel,
density of $X$, and conditional expectation functions $g_{d}(\cdot)$, which are
typical in the nonparametric literature.

\begin{assumption}[Smoothness and boundedness]
  \label{assu:smoooth-bd}(i) The kernel $K$ is a symmetric density function of
  order 2, Lipschitz continuous and has bounded support, (ii)
  $E[Y^{2+\delta_{y}} | X] < \infty$ for some $\delta_{y} > 0$, (iii) the
  density $f_{X}(\cdot)$ is twice differentiable and satisfies
  $0 < \inf_{x \in \mathcal{X}} f_{X}(x) $ and (iv) $g_{d}$ is twice
  differentiable with Lipschitz continuous second derivatives.
\end{assumption}

Define
$a_{n} = O\left( \left(\frac{\log n}{n h^{k}} \right)^{1/2} + h^{2}\right)$,
which is the usual uniform convergence rate of local linear estimators. The
following shows that the convergence rate of the local linear estimator is
unaffected under our local extrapolation. The result is intuitive given the
boundary properties of the local linear estimator and that points that lie
within $h$ distance from the support are essentially the same with being on the
boundary.

\setcounter{thm}{3}
\begin{thm}[Strong Uniform Consistency of $\tilde{g}_{d}(\cdot)$] Suppose
  Assumptions \ref{assu:supp} and \ref{assu:smoooth-bd} hold, and set
  $\varepsilon \asymp h$. Then,
  \begin{equation}
    \label{eq:5}
    \sup_{x \in \mathcal{X}_{d,\varepsilon}} \abs{\tilde{g}_{d}(x) - g_{d}(x)} = O(a_{n}), \text{ a.s. }
\end{equation}
\end{thm}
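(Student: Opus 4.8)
The plan is to establish the bound (\ref{eq:5}) by following the decomposition already sketched in the excerpt, reducing everything to two well-understood pieces. Recall the triangle-inequality split: for $x \in \mathcal{X}_{d,\varepsilon}$,
\[
\abs{\tilde{g}_{d}(x) - g_{d}(x)} \leq \abs{(x - NN(x))'\big(\hat{\beta}_{NN(x)} - \nabla g_{d}(NN(x))\big)} + M\varepsilon^{2} + \abs{\hat{g}_{d}(NN(x)) - g_{d}(NN(x))},
\]
where the $M\varepsilon^{2}$ term collects the second-order Taylor remainder from replacing $\nabla g_d(x_t)$ by $\nabla g_d(NN(x))$, using that $g_d$ has Lipschitz second derivatives (Assumption \ref{assu:smoooth-bd}(iv)) and that $\|x - NN(x)\| \leq \varepsilon$ by the definition of $NN(x)$ as the nearest treated neighbour. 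Since $\varepsilon \asymp h$, this term is $O(h^2) = O(a_n)$. The remaining work is therefore to control the two genuinely stochastic terms uniformly over $\mathcal{X}_{d,\varepsilon}$.

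First I would handle the function-value and gradient estimation errors together. The key observation is that both $\hat{g}_d(NN(x))$ and $\hat{\beta}_{NN(x)}$ are the intercept and slope components of the \emph{same} local linear regression run at the point $NN(x) \in \mathcal{X}_d$. So it suffices to invoke a standard strong uniform consistency result for local linear regression and its first derivative, jointly, over the set of evaluation points $\{NN(x) : x \in \mathcal{X}_{d,\varepsilon}\} \subseteq \mathcal{X}_d$. Under Assumptions \ref{assu:supp} and \ref{assu:smoooth-bd} — compact support, a density bounded away from zero on $\mathcal{X}$, a Lipschitz order-2 kernel with bounded support, the moment condition $E[Y^{2+\delta_y}|X] < \infty$, and twice-differentiable $g_d$ — the results of \cite{hansen2008uniform} (or \cite{masry1996multivariate}) give
\[
\sup_{x \in \mathcal{X}_d} \abs{\hat{g}_d(x) - g_d(x)} = O(a_n), \qquad \sup_{x \in \mathcal{X}_d} \norm{\hat{\beta}_{x} - \nabla g_d(x)} = O\!\left(\big(\tfrac{\log n}{n h^{k+2}}\big)^{1/2} + h\right),
\]
almost surely. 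Applying Cauchy–Schwarz to the first term on the right of the display above and using $\|x - NN(x)\| \leq \varepsilon \asymp h$, the gradient contribution is bounded by $h \cdot O\big((\tfrac{\log n}{n h^{k+2}})^{1/2} + h\big) = O\big((\tfrac{\log n}{n h^{k}})^{1/2} + h^2\big) = O(a_n)$. The crucial point is that the extra factor of $\varepsilon \asymp h$ multiplying the gradient error exactly compensates for the slower convergence rate of the derivative estimator, so the whole term collapses back to the first-stage rate $a_n$.

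The main subtlety — and the step I would be most careful about — is the uniformity over $\mathcal{X}_{d,\varepsilon}$ rather than over a fixed compact subset of the interior of $\mathcal{X}_d$. Points $NN(x)$ can lie arbitrarily close to the boundary of $\mathcal{X}_d$, and indeed $x$ itself lies \emph{outside} $\mathcal{X}_d$ (it is in $\mathcal{X}_{1-d}$, within $\varepsilon$ of the support). Two things must be checked here. First, the uniform local-linear results I cite must hold up to and including the boundary of the support; this is precisely the well-known boundary-adaptivity of the local linear estimator, which retains the interior bias and variance rates near the edge (this is exactly the ``boundary properties'' remark in the excerpt), and the cited theorems are stated to cover compact supports including boundaries under a density bounded below. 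Second, I would verify that evaluating the fitted linear surface at $x$ slightly outside $\mathcal{X}_d$ — an extrapolation of distance at most $\varepsilon \asymp h$ — does not degrade the rate; this is controlled because the extrapolation distance is of the same order $h$ as the kernel bandwidth, so $x$ is ``essentially on the boundary'' of the estimation window, and the linear extrapolation error is governed by the gradient error times $\varepsilon$ plus the $O(\varepsilon^2)$ curvature remainder, both already accounted for above. Once these two uniformity points are secured, collecting the three $O(a_n)$ bounds and taking the supremum over $\mathcal{X}_{d,\varepsilon}$ yields (\ref{eq:5}) almost surely.
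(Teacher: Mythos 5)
Your proposal is correct and follows essentially the same route as the paper: the paper's proof consists precisely of the triangle-inequality/mean-value decomposition you reproduce (stated in the text just before the theorem), the observation that the factor $\varepsilon \asymp h$ multiplying the gradient error $\norm{\hat{\beta}_{NN(x)} - \nabla g_{d}(NN(x))}$ restores the rate $a_n$, and an appeal to Theorem 6 of \cite{masry1996multivariate} for the standard uniform rates of the local linear estimator and its derivative. Your explicit attention to boundary-uniformity and the $O(h)$-extrapolation step is more careful than the paper's one-line remark about boundary adaptivity, but it is the same argument.
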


\begin{proof}
  The result follows immediately by the decomposition and bounds provided above,
  and then applying Theorem 6 of \cite{masry1996multivariate} to our i.i.d. setting.
\end{proof}

We now move on to establishing asymptotic results for $\hat{q}_{1-d}(y)$ for
$y \in \mathcal{Y}:= \{g_{d}(x): x \in \mathcal{F}\}$. We first derive an
asymptotic normality result for $q^{\ast}_{1-d}$, which is the infeasible
estimator that uses $g_{1-d}$ instead of the estimated
$\tilde{g}_{d}$.\footnote{Formally,
  ${q}^{\ast}_{0}(y)=(1,y){\gamma}^{\ast}_{y},$ where
  ${\gamma}^{\ast}_{y}=\arg\min_{\gamma}\sum_{i \in
    \mathcal{I}_{_{1-d}}}W_{i}K\left(\frac{|{g}_{d}(X_{i})-y|}{b}\right)[Y_{i}-\big(1,{g}_{d}(X_{i})\big)\gamma]^{2}$.
} Then, we follow MRS and first derive a stochastic expansion of
$\hat{q}_{1-d}(y)$ around $q^{\ast}_{1-d}(y)$. Combining these results we
provide conditions under which $\hat{q}_{1-d}(y)$ is asymptotically normal.

Due to the fact that $\tilde{g}_{d}$ is independent of
$(X_{i})_{i\in \mathcal{I}_{1-d}}$, the more involved assumptions of MRS (such
as Assumptions 3 and 4) are unnecessary. Since we rely on local linear
estimation, we need a smoothness assumption on $q_{1-d}$. We assume the
following.
\begin{assumption}[Smoothness of $q_{1-d}$]\label{assu:smooth-q}
 The function $q_{1-d}(\cdot)$ is twice differentiable with uniformly bounded
second derivatives.  
\end{assumption}

Due to the ``local to the boundary'' nature of the estimation procedure, we
impose some regularity condition on the boundary. These assumptions are not
restrictive and satisfied in most empirical contexts.  Let $v_{\varepsilon}$
denote the volume of an $\varepsilon$-ball in $\mathbb{R}^{k}$.

\begin{assumption}[Regularity of $\mathcal{F}$] \label{assu:reg-bound}
  (i) $\mathcal{F}$ is piecewise $C^{2}$ and continuous.
  (ii) For each $x \in \mathcal{F}$ and
  $d \in \{0, 1\}$, 
  $\mathrm{vol}\, (\mathcal{X}_{d} \cap B_{\varepsilon}(x))/v_{\varepsilon} \to
  \kappa_{d}(x) $ as $\varepsilon \to 0$ where
  $\kappa_{d}(x) \in [\delta, 1-\delta]$ for some $\delta > 0$.  
\end{assumption}

\begin{assumption}[Locally nonvanishing $\nabla g_{d}(\cdot)$] \label{assu:nz-gprime}
   The conditional expectation function $g_{d}$ satisfies $\inf_{x \in B_{\delta_{g}}(y)} \norm{\nabla g_{d}(x)} > 0
  $ for some $\delta_{g} > 0$.
\end{assumption}

Assumption \ref{assu:reg-bound}(i) and \ref{assu:nz-gprime} are used to perform
a change of variables to calculate the density of $g_{d}(X)$ local to the
boundary. They can be relaxed at the cost of a longer proof. Assumption
\ref{assu:reg-bound}(ii) ensures that there is enough volume in either side of
the boundary, which ensures that the probabilty that $X$ lies on either side of
the boundary scales like $\varepsilon^{d}$ near the boundary. These assumptions
ensure that $P(W_{i}=1|X_{i}=1)$ scales linearly in $\varepsilon$.

Define
$f_{g}(u) = v_{1} \int_{g^{-1}_{d}(u) \cap \mathcal{F}} \kappa_{d}(z)
\frac{f_{X}(z)}{\lVert\nabla g_d(z)\rVert} d\mathcal H^{k-2}(z),
$\footnote{$\mathcal{H}^{k-2}$ denotes the $k-2$ dimensional Hausdorff measure.}
which is effectively the density of $g_{d}(X_{i})$ that accounts for the weight
$W_{i}$ at a boundary point, the conditional variance
$\sigma^{2}(y) := E[(Y_{i}-E[Y_{i}|g_{x}(X_{i})])^{2}| g_{d}(X_{i}) = y]$ and
second moment of the kernel
$\mu_{2} = \int u^{2}K(u)du$. The following theorem establishes the asymptotic
normality of $q^{\ast}_{1-d}(y)$.
\begin{thm}[Asymptotic normality of $q^*_{1-d}(y)$]\label{thm:an-qstar} Suppose
  Assumptions \ref{assu:supp}-\ref{assu:nz-gprime} hold, $a_{n}/b \to 0$, $n\varepsilon^{d}
  \to \infty$, and $n \varepsilon b \to \infty$. Then, we have
  \begin{equation*}
    (n \varepsilon b)^{1/2} \left({q}^{\ast}_{1-d}(y) - q_{1-d}(y) - \frac{b^{2}}{2} q''(y)\mu_{2}\right) \to N\left(0,\frac{ \sigma^2(y) R(K)}{f_g(y)}\right).
  \end{equation*}
\end{thm}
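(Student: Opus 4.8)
The plan is to read $q^{\ast}_{1-d}(y)$ as an ordinary local linear regression of the factual outcome $Y_i$ on the \emph{genuine} (non-generated) scalar regressor $Z_i:=g_d(X_i)$, run over the subsample $i\in\mathcal{I}_{1-d}$ with the frontier-localizing weight $W_i=1\{\norm{X_i-X_{NN_i}}\le\varepsilon\}$ and evaluated at $y$. Under this reading the only departures from the textbook local-linear setting are that the regressor is a known transformation of $X$, so that its effective density near $y$ must be obtained by a change of variables, and that $W_i$ confines the sample to an $\varepsilon$-tube around the codimension-one frontier $\mathcal{F}$, which shrinks the effective local sample to order $n\varepsilon b$ and thereby produces the rate $(n\varepsilon b)^{1/2}$. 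Assumptions \ref{assu:reg-bound} and \ref{assu:nz-gprime} are exactly what is needed to perform the change of variables, Assumption \ref{assu:smoooth-bd} supplies the moment and kernel conditions for the central limit theorem, and Assumption \ref{assu:smooth-q} supplies the curvature that drives the bias.

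First I would put the estimator in matrix form. With $D_b=\mathrm{diag}(1,b)$ and residuals $r_i:=Y_i-q_{1-d}(y)-q'_{1-d}(y)(Z_i-y)$, the fact that local linear fitting reproduces affine functions gives the exact representation
$$q^{\ast}_{1-d}(y)-q_{1-d}(y)=e_1'S_n^{-1}\sum_{i\in\mathcal{I}_{1-d}}W_iK\big((Z_i-y)/b\big)\begin{pmatrix}1\\ Z_i-y\end{pmatrix}r_i,\qquad S_n=\sum_{i\in\mathcal{I}_{1-d}}W_iK\big((Z_i-y)/b\big)\begin{pmatrix}1 & Z_i-y\\ Z_i-y & (Z_i-y)^2\end{pmatrix}.$$
The first substantive step is to show
$$(n\varepsilon b)^{-1}D_b^{-1}S_nD_b^{-1}\xrightarrow{p} f_g(y)\begin{pmatrix}1 & 0\\ 0 & \mu_2\end{pmatrix},$$
which reduces to computing $E\big[W_iK((Z_i-y)/b)(Z_i-y)^j\big]$ for $j=0,1,2$. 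Here the two localizations must be disentangled geometrically: the weight $W_i$ concentrates mass in the $\varepsilon$-tube transverse to $\mathcal{F}$ (contributing the factor $\varepsilon$ and, via Assumption \ref{assu:reg-bound}(ii), the volume fraction $\kappa_d$), while the kernel concentrates the frontier coordinate $g_d$ within $b$ of $y$ (contributing the kernel moments $\nu_j=\int u^jK(u)\,du$, with $\nu_0=1,\nu_1=0,\nu_2=\mu_2$ by symmetry). Applying the coarea formula along the level sets of $g_d$, which is licensed by the nonvanishing gradient of Assumption \ref{assu:nz-gprime}, leaves a $(k-2)$-dimensional integral over $g_d^{-1}(y)\cap\mathcal{F}$ with the Jacobian $1/\norm{\nabla g_d(z)}$, and this is precisely $f_g(y)$. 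A second-moment bound then upgrades this to convergence in probability via a triangular-array law of large numbers.

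Second, I would prove a central limit theorem for the score. Decompose $r_i=\eta_i+\rho(Z_i)$, where $\eta_i:=Y_i-m(Z_i)$ with $m(z):=E[Y_i\mid Z_i=z,W_i=1]$ (so $E[\eta_i\mid Z_i,W_i=1]=0$) and $\rho(z):=m(z)-q_{1-d}(y)-q'_{1-d}(y)(z-y)$ is deterministic. For the noise part I would apply a Lyapunov CLT to $(n\varepsilon b)^{-1/2}D_b^{-1}\sum_iW_iK((Z_i-y)/b)(1,Z_i-y)'\eta_i$, verifying the Lyapunov ratio from the conditional $(2+\delta_y)$-moment bound in Assumption \ref{assu:smoooth-bd}(ii) together with the $\varepsilon$- and $b$-localization; the same coarea computation as above, now with $K^2$ in place of $K$, yields asymptotic variance with leading block $\sigma^2(y)f_g(y)R(K)$, where $R(K)=\int K(u)^2\,du$. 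Combining this CLT with the design-matrix limit by Slutsky, and noting that the off-diagonal entries of the limiting design matrix vanish so that $e_1'$ isolates the intercept block, gives the stated normal limit with variance $\sigma^2(y)R(K)/f_g(y)$.

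Third, the deterministic part $\rho(Z_i)$ produces the bias: since $m$ is twice differentiable, $\rho(z)=\tfrac12 m''(\bar z)(z-y)^2$, and dividing by the design-matrix limit yields the recentering $\tfrac{b^2}{2}m''(y)\mu_2+o(b^2)$, with $m''(y)\to q''_{1-d}(y)$ as $\varepsilon\to0$. The hard part, which I expect to be the main obstacle, is twofold and both difficulties are tied to the one-sided nature of the $\varepsilon$-tube. First, making the coarea/change-of-variables computation of $f_g$ fully rigorous requires controlling how the $\varepsilon$-tube mass, the kernel, and the level-set geometry interact uniformly near $\mathcal{F}$, and this is where Assumptions \ref{assu:reg-bound} and \ref{assu:nz-gprime} must be used in full. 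Second, and more delicate, is the boundary-approximation bias $m(y)-q_{1-d}(y)$, arising because we average $Y_i=Y_i(1-d)$ over points within $\varepsilon$ of, rather than exactly on, $\mathcal{F}$; one must expand $E[Y(1-d)\mid X=x]$ and $g_d(x)$ normal to $\mathcal{F}$ and show, under the maintained rate conditions $n\varepsilon b\to\infty$, $n\varepsilon^d\to\infty$, and $a_n/b\to0$, that $(n\varepsilon b)^{1/2}\big(m(y)-q_{1-d}(y)\big)\to0$ so that no additional $\varepsilon$-order term survives in the recentering. Verifying that this tube-induced bias is asymptotically negligible is the step that genuinely distinguishes our setting from the standard template and from the framework of MRS.
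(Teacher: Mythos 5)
Your first two steps coincide with the paper's own proof: it too reads $q^{\ast}_{1-d}(y)$ as a weighted univariate local linear regression of $Y_i$ on $g_d(X_i)$, computes the design-matrix moments by a change of variables along the level sets of $g_d$ (this is exactly where Assumptions \ref{assu:reg-bound} and \ref{assu:nz-gprime} enter, producing the effective density $f_g$ with the factor $\kappa_d$ and the Jacobian $1/\norm{\nabla g_d}$), and concludes with a Lindeberg-type CLT plus Slutsky, yielding the normalization $(n\varepsilon b)^{1/2}$ and variance $\sigma^2(y)R(K)/f_g(y)$. One detail you gloss over: $W_i$ is not a deterministic tube indicator, since it depends on whether some \emph{treated observation} happens to fall within $\varepsilon$ of $X_i$. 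The paper handles this by computing $p_\varepsilon(x)=P(W_i=1\mid X_i=x)=1-\big(1-P(X\in\mathcal{X}_d\cap B_\varepsilon(x))\big)^n$ and showing $f_{g,\varepsilon}(u)=\varepsilon f_g(u)+O(\varepsilon^2)$; this is precisely where the condition $n\varepsilon^{d}\to\infty$ is consumed, a connection your sketch never makes.

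The genuine gap is your third step. You isolate the ``tube-induced bias'' $m(y)-q_{1-d}(y)$, with $m(z)=E[Y_i\mid g_d(X_i)=z,\,W_i=1]$, and propose to prove $(n\varepsilon b)^{1/2}\big(m(y)-q_{1-d}(y)\big)\to 0$ from the maintained rate conditions. That step would fail: without further structure, averaging $E[Y(1-d)\mid X=x]$ over an $\varepsilon$-tube slice rather than the frontier slice produces a bias of order $\varepsilon$ in general, and the rate conditions do not make $(n\varepsilon b)^{1/2}\varepsilon=(n\varepsilon^{3}b)^{1/2}$ vanish; in the Corollary's admissible regime one has $r_b<1-3r_h$, i.e.\ $n\varepsilon^{3}b\to\infty$, so this term would \emph{diverge}. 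What closes the gap is comonotonicity itself, which your proof never invokes: by Assumption 2 and Theorem 2, $E[Y_i\mid X_i]=E[Y(1-d)\mid X=X_i]=q_{1-d}\big(g_d(X_i)\big)$ holds \emph{exactly} for every $X_i$ in the tube whose $g_d$-value lies in the range of $g_d$ on $\mathcal{F}$ --- not only for $X_i$ on the frontier. Hence $m\equiv q_{1-d}$, your $\rho$ is exactly the second-order Taylor remainder of $q_{1-d}$, and the only bias is the curvature term $\tfrac{b^2}{2}q''(y)\mu_2$. This is how the paper proceeds (its proof of Theorem \ref{thm:qstar-qhat} sets $\eta_i=Y_i-q_{1-d}(g_d(X_i))$ and notes $E[\eta_i\mid X_i]=0$ under comonotonicity), and its discussion after Theorem \ref{thm:qstar-qhat} confirms that when comonotonicity fails an additional bias term $\hat{\Gamma}(y)$ appears --- that is, the negligibility you hope to establish is genuinely false in general, so your instinct that something is delicate here was right, but the resolution is an identification assumption, not a rate argument.
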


The asymptotic distribution is essentially the same as what one would expect
from a local regression of $Y_{i}$ on $g_{d}(X_{i})$. The difference is that the
weights affect the variance through the normalizing factor
$(n\varepsilon b)^{1/2}$ and density $f_{g}(y)$. However, the weights do not
affect the bias term.

We now analyze how close $q^{\ast}_{1-d}(y)$ is to $\hat{q}_{1-d}(y)$, and
derive conditions under which the difference is negligible. Similar to MRS, define
\begin{equation*}
  \hat{\Delta}(y) = E[
  (\varepsilon^{-1}W_{i})\,b^{-1}K(({g}_{d}(X_{i}) - y))/b
  ( \tilde{g}_{d}(X_i) - g_{d}(X_i)
   )] /f_{g}(y).
\end{equation*}
To derive uniform results, we need tail conditions on the regression error
$\eta_{i} := Y_{i}-g_{d}(X_{i})$.  We impose a sub-exponential moment condition
on the error term as follows, which is the same with Assumption 1(iv) of MRS.

\begin{assumption}[Tail behavior of $\eta_i$]\label{assu:eta-tail}
The regression error $\eta_{i}$ satisfies $E[\exp(\ell |\eta_{i}|) \mid
g_{d}(X_{i})] \leq C$ for some $\ell, C > 0$. 
\end{assumption}

The following expansion, which is a minor modification of that by
\cite{mammen2012nonparametric}, characterizes the difference between the oracle
and feasible estimators.
\begin{thm}[Expansion of $\hat{q}_{1-d}(y) 
    - {q}^{*}_{1-d}(y)$]\label{thm:qstar-qhat}  Suppose
  Assumptions \ref{assu:supp}-\ref{assu:eta-tail} hold. Then, we have
  \begin{align*}
    \sup_{y} 
    \big| \hat{q}_{1-d}(y) 
    - {q}^{*}_{1-d}(y) 
    + q_{1-d}'(y) \hat{\Delta}(y) 
    \big| 
    &= O_P(c_{n}), 
  \end{align*}
  where $c_{n} =(\log n\varepsilon)^{1/2}((n\varepsilon)^{-1/2}a_{n}b^{-3/2} \vee a_{n}b \vee
a_{n}^{2}/b))$.
\end{thm}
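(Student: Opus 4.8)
The plan is to follow the generated-regressor expansion of MRS, exploiting the structural simplification noted above: since $\tilde{g}_{d}$ is constructed only from observations in $\mathcal{X}_{d}$ while $\hat{q}_{1-d}$ uses only observations in $\mathcal{X}_{1-d}$, the two are independent. Conditioning on the first-stage sample therefore renders the perturbation $\hat{\delta}_{i}:=\tilde{g}_{d}(X_{i})-g_{d}(X_{i})$ a fixed (nonrandom) function of $X_{i}$, with $\sup_{i}\abs{\hat{\delta}_{i}}=O(a_{n})$ almost surely by Theorem 4. This is precisely what allows us to dispense with Assumptions 3 and 4 of MRS. First I would write both estimators in local-linear ratio form. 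Writing $\eta_{i}:=Y_{i}-g_{1-d}(X_{i})$ and normalizing the local moments by $(n\varepsilon)^{-1}$, set $S^{*}(y)$ and $T^{*}(y)$ to be the local-linear design and response moments built from the true regressor $g_{d}(X_{i})$ and kernel weights $\varepsilon^{-1}W_{i}b^{-1}K((g_{d}(X_{i})-y)/b)$, so that $q^{*}_{1-d}(y)=e_{1}'S^{*}(y)^{-1}T^{*}(y)$; let $\hat{S}(y),\hat{T}(y)$ be the analogues with $\tilde{g}_{d}$ in place of $g_{d}$, so that $\hat{q}_{1-d}(y)=e_{1}'\hat{S}(y)^{-1}\hat{T}(y)$. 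Assumptions 7 and 8 guarantee that $f_{g}(y)$ is bounded away from zero on $\mathcal{Y}$, so $S^{*}(y)^{-1}$ is uniformly well-behaved and the difference $\hat{q}_{1-d}-q^{*}_{1-d}$ can be linearized through the sandwich formula.

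The second step is to expand $\hat{S}-S^{*}$ and $\hat{T}-T^{*}$ in powers of $\hat{\delta}_{i}$. The generated regressor enters both the kernel argument, through $K((\tilde{g}_{d}-y)/b)=K((g_{d}-y)/b)+K'((g_{d}-y)/b)(\hat{\delta}_{i}/b)+\tfrac{1}{2}K''(\cdot)(\hat{\delta}_{i}/b)^{2}+\dots$, and the design vector, through $(\tilde{g}_{d}(X_{i})-y)/b=(g_{d}(X_{i})-y)/b+\hat{\delta}_{i}/b$; each appearance of $\tilde{g}_{d}$ thus contributes a factor $\hat{\delta}_{i}/b=O(a_{n}/b)$, which (together with the kernel regularity in Assumption 5(i) and the twice-differentiability of $q_{1-d}$ in Assumption 6) organizes the expansion by powers of $a_{n}/b$. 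The first-order term splits, conditional on the first stage, into a deterministic piece driven by $g_{1-d}(X_{i})$ and a mean-zero stochastic piece driven by $\eta_{i}$. The central bias-matching computation is that the deterministic piece reproduces exactly $-q_{1-d}'(y)\hat{\Delta}(y)$: the first-order effect on the local-linear fit of shifting the regressor is the slope $q_{1-d}'(y)$ times the kernel-weighted average shift, whose population version is precisely $\hat{\Delta}(y)$ as defined in the statement.

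Third, I would bound the three remainders that make up $c_{n}$. Conditional on the first stage, the stochastic piece of the first-order term is a sum of mean-zero variables of the form $\varepsilon^{-1}W_{i}b^{-1}K'((g_{d}(X_{i})-y)/b)(\hat{\delta}_{i}/b)\eta_{i}$; its conditional variance scales like $(n\varepsilon b)^{-1}\cdot b^{-2}a_{n}^{2}\sigma^{2}(y)=a_{n}^{2}/(n\varepsilon b^{3})$, giving the standard-deviation rate $(n\varepsilon)^{-1/2}a_{n}b^{-3/2}$ that is the first term in the maximum. The quadratic remainder in $\hat{\delta}_{i}$ contributes the term $a_{n}^{2}/b$, and the interaction of the first-stage error with the $O(b^{2})$ smoothing curvature (the $q''$ term) contributes $a_{n}b$; taking the worst of the three yields $c_{n}$. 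Finally, to make the bound uniform over $y\in\mathcal{Y}$ I would combine the sub-exponential tail of $\eta_{i}$ (Assumption 9) with a maximal/chaining inequality over the compact range $\mathcal{Y}$, which produces the $(\log n\varepsilon)^{1/2}$ inflation factor; the regularity of $\mathcal{F}$ (Assumption 7) and the nonvanishing gradient (Assumption 8) ensure the local density $f_{g}(y)$, and hence the inverse design matrix, are uniformly controlled throughout.

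The main obstacle is the simultaneous appearance of the generated regressor in both the kernel weight and the local-linear design, combined with the requirement of uniformity in $y$. The two sources of randomness---the first-stage error $\hat{\delta}_{i}$ and the second-stage noise $\eta_{i}$---must be tracked separately through the matrix inversion and the kernel derivatives, and the delicate powers $b^{-3/2}$ and $a_{n}^{2}/b$ hinge on correctly counting the $b$-normalizations carried by each $K'$ and $K''$ factor. The stage-independence is what makes this tractable, since it lets us treat $\hat{\delta}_{i}$ as a fixed perturbation after conditioning and thereby isolate the $\eta_{i}$-driven fluctuation as a conditionally mean-zero sum; nonetheless, the chaining step needed to upgrade pointwise concentration to a supremum over $y$ with only a logarithmic penalty remains the most technical part of the argument.
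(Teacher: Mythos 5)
Your proposal is correct and follows essentially the same route as the paper's proof: both exploit the independence of the first stage from the second-stage sample to treat $\tilde{g}_{d}(X_i)-g_{d}(X_i)$ as a fixed $O(a_n)$ perturbation after conditioning (thereby dropping MRS Assumptions 3--4), both extract the deterministic first-order correction $-q_{1-d}'(y)\hat{\Delta}(y)$ from the shift in the local-linear fit, and both attribute the three components of $c_n$ to the same sources --- the conditionally mean-zero $\eta_i$-fluctuation at rate $(n\varepsilon)^{-1/2}a_n b^{-3/2}$ (with the $(\log n\varepsilon)^{1/2}$ factor from sub-exponential tails plus uniformity in $y$), the quadratic first-stage remainder $a_n^{2}/b$, and the interaction with the $O(b^{2})$ local curvature giving $a_n b$. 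One small repair: Assumption 5(i) only makes $K$ Lipschitz, so your second-order Taylor expansion of the kernel in $\hat{\delta}_i/b$ (which uses $K'$ and $K''$) is not licensed as stated; replace it by the Lipschitz bound $\abs{K_b(\tilde{g}_{d}(X_i)-y)-K_b(g_{d}(X_i)-y)}\leq C b^{-2}\abs{\tilde{g}_{d}(X_i)-g_{d}(X_i)}$, active only when one of the two kernel arguments lies within $b$ of $y$, exactly as in the paper's bounds for the terms $(I)$ and $(II)$ --- this changes none of the rates, since the leading correction $\hat{\Delta}(y)$ involves $K$ itself rather than $K'$.
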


Compared with MRS, the main differences are the presence of the
weights $W_{i}$ and the fact that the estimation of the first stage is
independent of the second stage data. The former requires minor modifications to
the proof and latter in fact makes the proof easier. This difference is also
reflected in the convergence rate; in MRS, the first component of the rate
$((n\varepsilon)^{-1/2}a_{n}b^{-3/2}$ includes an additional correction term to
account for the possible correlation between the two stages, which leads to
slower convergence rates.

Also, under possible violations of comonotonicity, the expansion includes an
additional bias term
$\hat{\Gamma}(y) := E[ (W_{i}/\varepsilon)\,K_b'({g}_{d}(X_{i}) - y)
(\tilde{g}_{d}(X_i) - g_{d}(X_i) )\xi_{i}]$. This is because
$\xi_{i}:=E[Y_{i}- q_{1-d}(g_{d}(X_{i}))|X_{i}] \neq 0 $ in general when
comonotonicity is violated. However, we note that dropping the assumption that
$\xi_{i} = 0$, does not affect the convergence rate above or the following
corollary.

If $(n \varepsilon b)^{1/2}c_{n} = o(1)$, the first stage estimation is
negligible so that the asymptotic distribution derived in Theorem
\ref{thm:an-qstar} holds with $\hat{q}_{1-d}(y)$ in place of $q^{\ast}_{1-d}(y)$
as well. The following corollary shows when this is possible under $k=2$.

\begin{cor} Suppose Assumptions \ref{assu:supp}-\ref{assu:eta-tail} hold, and
  let $k=2$, $\varepsilon \asymp h \asymp n^{-r_{h}}$ and $b \asymp
  n^{-r_{b}}$. Then, if $r_h \in (1/8, 4/15)$ and $r_{b}\in (\{\max\{ (3/4)r_h,\, {(1 - 5r_h)}/{3} \}, \min\{ 1/2 - r_h,\, 2r_h,\, 1 - 3r_h,\, 9r_h - 1 \})$, then
\begin{equation*}
  (n \varepsilon b)^{1/2} \left(\hat{q}_{1-d}(y) - q_{1-d}(y) - \frac{b^{2}}{2} q''(y)\mu_{2}\right) \to N\left(0,\frac{ \sigma^2(y) R(K)}{f_g(y)}\right).
\end{equation*}
 Moreover, setting $(1 - r_{h})/5 < r_{b}$ corresponding to the undersmoothing
 regime where the $O(b^{2})$ bias term can be dropped.
\end{cor}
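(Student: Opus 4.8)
The plan is to obtain the result for the feasible estimator $\hat{q}_{1-d}$ by combining the two preceding theorems and then reducing everything to a feasibility question about the exponents $r_h$ and $r_b$. Since Theorem~\ref{thm:an-qstar} already delivers the stated normal limit for the infeasible estimator $q^{*}_{1-d}(y)$, it suffices to show that passing from $q^{*}_{1-d}$ to $\hat{q}_{1-d}$ does not perturb the scaled, bias-corrected statistic. I would write
\begin{align*}
(n\varepsilon b)^{1/2}&\Big(\hat{q}_{1-d}(y)-q_{1-d}(y)-\tfrac{b^{2}}{2}q''(y)\mu_{2}\Big)\\
&=(n\varepsilon b)^{1/2}\Big(q^{*}_{1-d}(y)-q_{1-d}(y)-\tfrac{b^{2}}{2}q''(y)\mu_{2}\Big)\\
&\quad+(n\varepsilon b)^{1/2}\big(\hat{q}_{1-d}(y)-q^{*}_{1-d}(y)\big),
\end{align*}
where the first summand converges to the stated Gaussian law by Theorem~\ref{thm:an-qstar}, provided its hypotheses $a_{n}/b\to0$, $n\varepsilon^{d}\to\infty$ and $n\varepsilon b\to\infty$ hold under the parametrization (all three are non-binding or implied by the constraints derived below). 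By Theorem~\ref{thm:qstar-qhat}, the second summand equals $-(n\varepsilon b)^{1/2}q'_{1-d}(y)\hat{\Delta}(y)+(n\varepsilon b)^{1/2}O_{P}(c_{n})$. Thus the whole proof reduces to showing $(n\varepsilon b)^{1/2}c_{n}=o(1)$ and $(n\varepsilon b)^{1/2}q'_{1-d}(y)\hat{\Delta}(y)=o_{P}(1)$, and then to identifying the range of $(r_h,r_b)$ on which all resulting inequalities hold at once.

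Next I would carry out the rate bookkeeping for the $c_n$ term. Under $\varepsilon\asymp h\asymp n^{-r_h}$, $b\asymp n^{-r_b}$ and $k=2$, the first-stage rate is $a_{n}\asymp(\log n)^{1/2}n^{-(1-2r_h)/2}+n^{-2r_h}$, and $(n\varepsilon b)^{1/2}\asymp n^{(1-r_h-r_b)/2}$. Substituting into the three components of $c_{n}$ and ignoring logarithmic factors, one finds that $(n\varepsilon)^{-1/2}a_{n}b^{-3/2}$ scales to $a_{n}/b$, forcing $r_b<1/2-r_h$ and $r_b<2r_h$; that $a_{n}b$ scales to $n^{(r_h-3r_b)/2}+n^{(1-5r_h-3r_b)/2}$, forcing $r_b>r_h/3$ and $r_b>(1-5r_h)/3$; and that $a_{n}^{2}/b$ scales to $n^{(-1+3r_h+r_b)/2}+n^{(1-9r_h+r_b)/2}$, forcing $r_b<1-3r_h$ and $r_b<9r_h-1$. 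These reproduce all four upper bounds $\min\{1/2-r_h,\,2r_h,\,1-3r_h,\,9r_h-1\}$ together with the lower bound $(1-5r_h)/3$; the hypothesis $n\varepsilon b\to\infty$ of Theorem~\ref{thm:an-qstar} is in particular implied by $r_b<1-3r_h$.

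The main obstacle is the generated-regressor correction $q'_{1-d}(y)\hat{\Delta}(y)$, where $\hat{\Delta}(y)$ is the $W_{i}$-weighted, kernel-smoothed average of the first-stage error $\tilde{g}_{d}-g_{d}$ near the frontier. The crude uniform bound $\sup_{x}|\tilde{g}_{d}(x)-g_{d}(x)|=O(a_{n})$ a.s.\ only gives $\hat{\Delta}=O(a_{n})$, which after scaling is larger than the $a_{n}b$ term and does not close the argument. The hard part will therefore be to obtain a sharper rate for $\hat{\Delta}$, exploiting (i) the independence of $\tilde{g}_{d}$ from the second-stage sample $(X_{i},Y_{i})_{i\in\mathcal{I}_{1-d}}$, which makes $\hat{\Delta}$ a deterministic functional of $\tilde{g}_{d}-g_{d}$ whose stochastic part averages across the near-frontier region, and (ii) the change of variables to the boundary density $f_{g}$ of $g_{d}(X)$ justified by Assumptions~\ref{assu:reg-bound} and \ref{assu:nz-gprime}. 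The refined bound makes $(n\varepsilon b)^{1/2}\hat{\Delta}=o_{P}(1)$ equivalent to the remaining lower constraint $r_b>(3/4)r_h$; this is genuinely binding, as it is strictly stronger than the $r_b>r_h/3$ coming from $c_{n}$.

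Finally I would assemble the feasibility region. The admissible $r_b$ must lie in $\big(\max\{(3/4)r_h,\,(1-5r_h)/3\},\ \min\{1/2-r_h,\,2r_h,\,1-3r_h,\,9r_h-1\}\big)$, and this interval is nonempty precisely when $r_h\in(1/8,4/15)$: the left endpoint comes from $(1-5r_h)/3<9r_h-1$, i.e.\ $r_h>1/8$, and the right endpoint from $(3/4)r_h<1-3r_h$, i.e.\ $r_h<4/15$. On this region every error term is $o_{P}(1)$, so the normal limit of Theorem~\ref{thm:an-qstar} transfers to $\hat{q}_{1-d}$. The closing remark on undersmoothing follows from the separate computation $(n\varepsilon b)^{1/2}b^{2}\asymp n^{(1-r_h-5r_b)/2}=o(1)\iff r_b>(1-r_h)/5$, under which the $O(b^{2})$ bias term may be dropped from the centering.
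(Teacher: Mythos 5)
Your overall architecture is the right one, and it matches the argument the paper itself intends (the paper offers no separate proof of the corollary; the paragraph preceding it says exactly what you do: invoke Theorem \ref{thm:an-qstar} for $q^{*}_{1-d}$ and show the Theorem \ref{thm:qstar-qhat} remainder is negligible after scaling by $(n\varepsilon b)^{1/2}$). Your exponent bookkeeping for $c_n$ is correct: $(n\varepsilon b)^{1/2}(n\varepsilon)^{-1/2}a_nb^{-3/2}=a_n/b$ yields $r_b<1/2-r_h$ and $r_b<2r_h$; the $a_nb$ term yields $r_b>r_h/3$ and $r_b>(1-5r_h)/3$; the $a_n^2/b$ term yields $r_b<1-3r_h$ and $r_b<9r_h-1$. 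Your verification that the side conditions of Theorem \ref{thm:an-qstar} hold on this region, the nonemptiness computation (binding constraints $(1-5r_h)/3<9r_h-1$ and $(3/4)r_h<1-3r_h$, giving $r_h\in(1/8,4/15)$), and the undersmoothing threshold $(n\varepsilon b)^{1/2}b^2\asymp n^{(1-r_h-5r_b)/2}=o(1)\iff r_b>(1-r_h)/5$ are all correct.

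The genuine gap is the treatment of the generated-regressor correction $q_{1-d}'(y)\hat{\Delta}(y)$, which is where all the analytic difficulty of the corollary resides. You correctly observe that the crude bound $\hat{\Delta}=O(a_n)$ is insufficient, but you then simply posit a ``refined bound'' whose negligibility is ``equivalent to'' $r_b>(3/4)r_h$ --- that is, you calibrate the unproven step to the statement you are trying to prove, which is circular. Nor is this a formality that routine computation closes: decompose $\tilde{g}_d-g_d$ into its $O(h^2)$ bias and its stochastic part. The bias survives the kernel average in $\hat{\Delta}$, so $(n\varepsilon b)^{1/2}\hat{\Delta}$ contains a deterministic component of order $n^{(1-5r_h-r_b)/2}$, requiring $r_b>1-5r_h$ --- strictly stronger than $(1-5r_h)/3$ on part of your claimed region (at $r_h=1/6$ it forces $r_b>1/6$, which is in fact consistent with the paper's own example range $(1/6,1/3)$, whereas your region admits $r_b\in(1/8,1/6]$). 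For the stochastic part, $\hat{\Delta}$ is (conditional on the first stage) an integral of the first-stage error over a band of normal width $\varepsilon\asymp h$ and kernel-localized arc-length $\asymp b$; since first-stage errors are correlated at scale $h$, a direct variance computation gives $O_P((nh^2)^{-1/2})$ when $b\lesssim h$ and $O_P((nhb)^{-1/2})$ when $b\gtrsim h$, so that after scaling by $(n\varepsilon b)^{1/2}$ the contribution is of order $\min\{(b/h)^{1/2},1\}$, which vanishes only when $r_b>r_h$ --- again stronger than $(3/4)r_h$. So reaching the stated threshold would require an additional cancellation or refinement argument that you neither identify nor supply; as written, the decisive step is missing, and the bounds one gets by direct computation actually point to a smaller feasible region than the one asserted.
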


For example, when the mean squared error optimal bandwidth of $r_{h} = 1/6$ is
chosen for the first stage, the bandwidth choice of $r_{b} \in (1/6, 1/3)$ is
feasible for the second stage. That is, one can choose $r_{b} = 1/5$ yet
``undersmooth.'' This seems counterintuitive but in fact it is due to the fact
that the first stage bandwidth enters the second stage variance that reduces the
effect of the bias in the second stage.

We leave the development of a full inferential theory under minimal conditions
for future work, including the problem of testing comonotonicity (see the
discussion following Theorem 2). However, the results of this section show that
for appropriate bandwidth choices, the estimate $\hat{q}_{1-d}$ is
asymptotically equivalent to an oracle estimator which has the form of a
standard local linear regression estimate. It is shown in, for example,
\cite{chernozhukov2014gaussian}, that for such estimates multiplier bootstrap
confidence bands are valid under appropriate assumptions, which we implement in
our empirical application. Moreover, extending the robust bias correction method
by \cite{calonico2014robust} to our setting with generated regressors seems to
be a promising direction for future research.

\section{Application: The Impact of Mandatory Summer School}
We apply the methods to the empirical setting of \citet{Matsudaira2008} who investigate the impact of mandatory summer schooling on future test scores in reading and mathematics. The authors use data on 5th-grade students in a large school district in the northeastern US in 2001. Students in this district scored below a threshold on year-end reading or math tests were required to either repeat a grade or attend mandatory summer schooling. The primary outcomes considered are reading and math test scores one year after the initial tests. It is important to note that some students attended summer school regardless, leading to an intent-to-treat interpretation of the results. The test scores that determine mandatory summer school are observed by the researcher, and so conditional on these observed covariates, treatment is deterministic. 

Following \citet{Imbens2019}, we include only students with scores 40 points above or below the cutoffs, leaving a sample size of $n = 30,741$. The distribution of test scores in this restricted sample is presented in Figure \ref{fig:score_distribution}. The blue curve is the frontier: students whose scores were below either cutoff were faced with the option of summer school or grade repetition.
\begin{figure}[ht]
	\centering
	\includegraphics[scale=0.20]{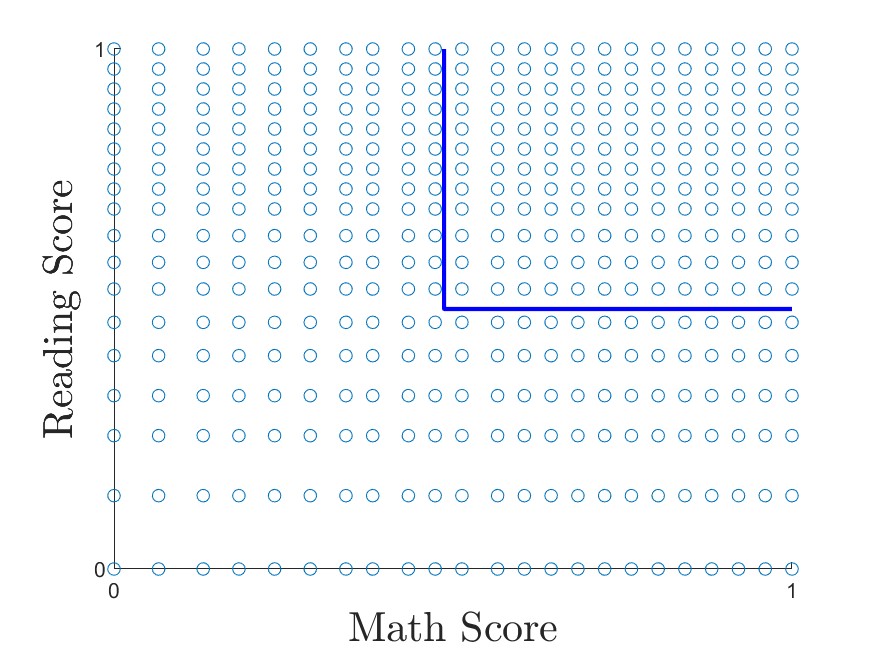}
	\caption{Test Score Distribution}
	\label{fig:score_distribution}
\end{figure}

Below we provide the results from regressing the outcome on covariates using only observations in either the treated or untreated.
\begin{figure}[ht]
		\centering
	\subfloat[]{
	\includegraphics[scale=0.20]{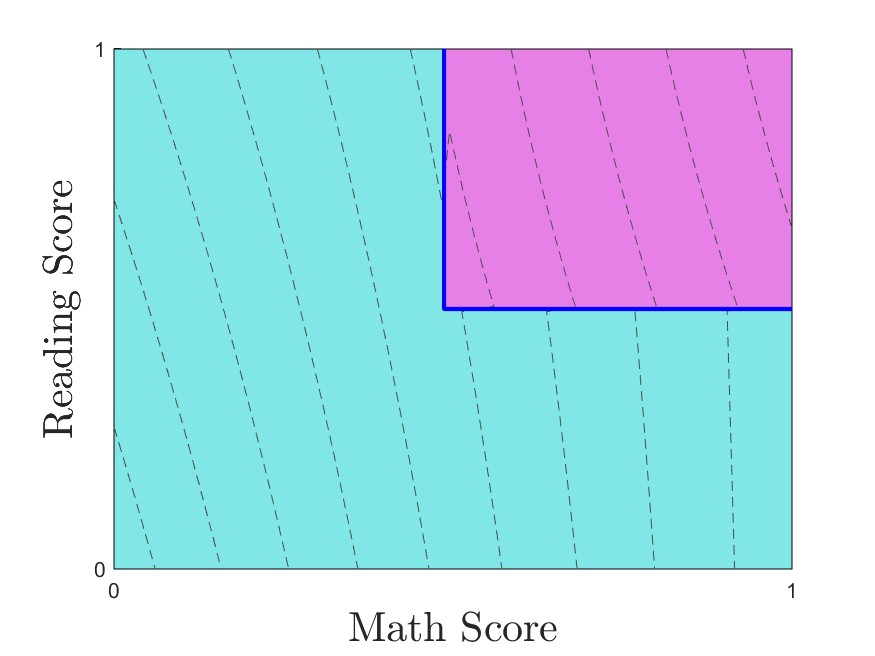}
	\label{fig:quadratic_math}
}
	\subfloat[]{
	\includegraphics[scale=0.20]{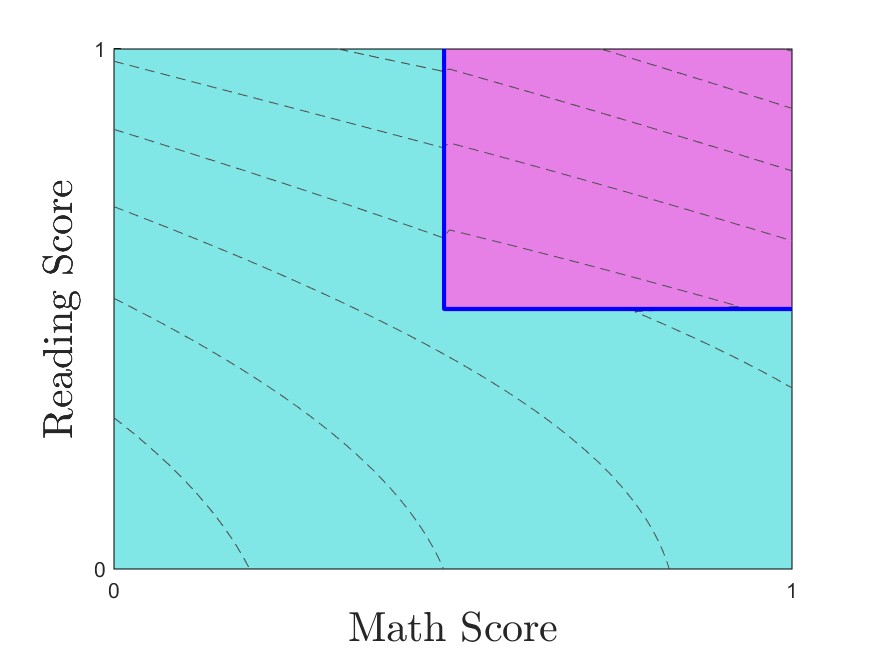}
	\label{fig:quadratic_reading}
}\\
	\subfloat[]{
	\includegraphics[scale=0.20]{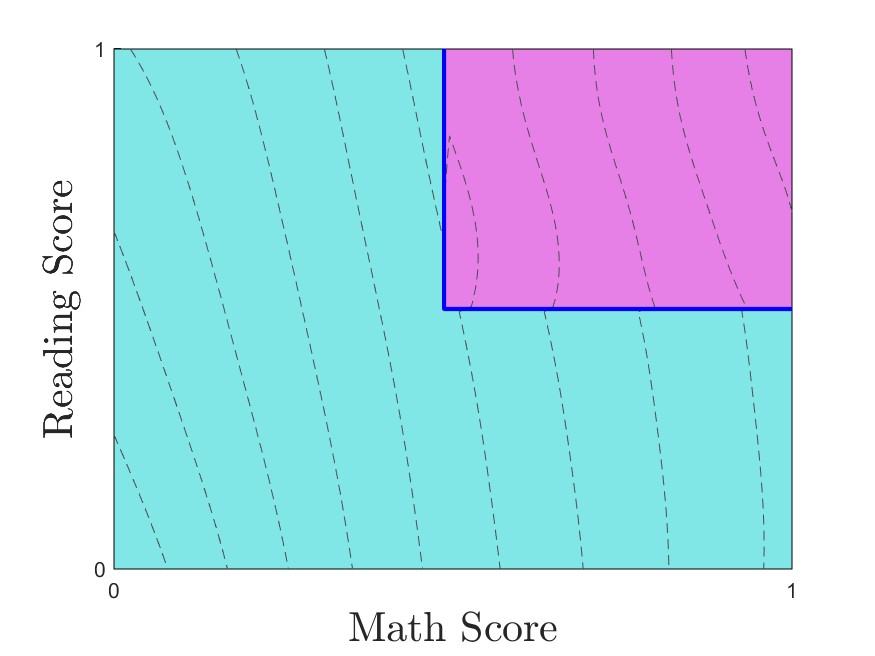}
	\label{fig:local_math}
}
\subfloat[]{
	\includegraphics[scale=0.20]{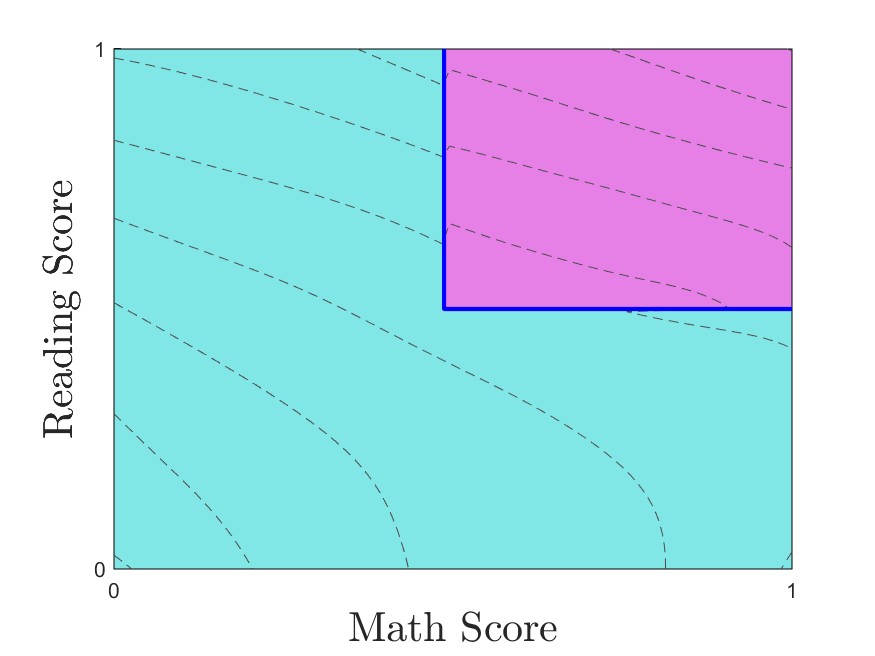}
\label{fig:local_reading}
}
\caption{Contour Estimates}

\end{figure}
Figure \ref{fig:quadratic_math} plots quadratic regression results for the math scores. The contours in the magenta untreated region correspond to regression using only untreated individuals while the contours in the turquoise treated region from regression only on treated individuals. Figure \ref{fig:quadratic_reading} presents results from the same exercise using reading scores. In both cases, the contours in the two regions appear approximately aligned, consistent with the hypothesis of comonotonicity. Figures \ref{fig:local_math} and \ref{fig:local_reading} plot results from the same exercise but with local linear regression instead of quadratic and gives similar results to quadratic.

In this setting, we may justify the comonotonicity condition by considering the underlying skills in reading and math measured by the test scores. In Appendix B.1 we specify a formal model of skill formation in which test scores are noisy measures of underlying skills and show that the model implies comonotonicity. Here we provide a verbal summary. Suppose students begin with some levels of skill in reading and math and the initial tests are noisy measures of these skills. If reading and math skills are positively correlated in the population, then both reading and math test scores are informative about the reading skill, even if only the reading test directly measures this quantity. Likewise, both reading and math scores are informative about math skills. 

The outcomes, reading and math scores a year after initial tests, may be understood to measure reading and math skills at this later date. The reading outcome would thus reflect reading skill at this date which may be depend on reading skill at the time of the initial test, the (possibly heterogeneous) impact of treatment, and exogenous day-of-test noise. One may then expect that the reading outcome is positively correlated with both initial reading and math scores due to the mutual positive association with initial reading skill, and similarly for the math outcome. Indeed, in Figures \ref{fig:quadratic_reading} and \ref{fig:local_reading} we see that this is the case. A higher initial math score is associated with a higher reading outcome, but this association is weaker than that between initial reading scores and the reading outcome, which presumably reflects that the initial reading tests directly measure initial reading skill and thus have a higher association with this quantity than the math scores. A similar pattern is evident for the math outcome.

Now suppose some initial test scores $x$ in the treated region are associated with higher average reading outcomes than some scores $x'$. Suppose we accept the premises above, then this observation suggests individuals with scores $x$ tend to have higher initial reading skills than those with scores $x'$. Both the treated and untreated potential outcomes are increasing in initial reading skills. As such, we may expect individuals with scores $x$ to also have higher average reading outcomes than those with $x'$ under a counterfactual in which none are treated. Indeed we show formally that comonotonicity applies under the skill formation model in Appendix B.1.

\begin{figure}[ht]
	\centering
	\subfloat[]{
	\includegraphics[scale=0.16]{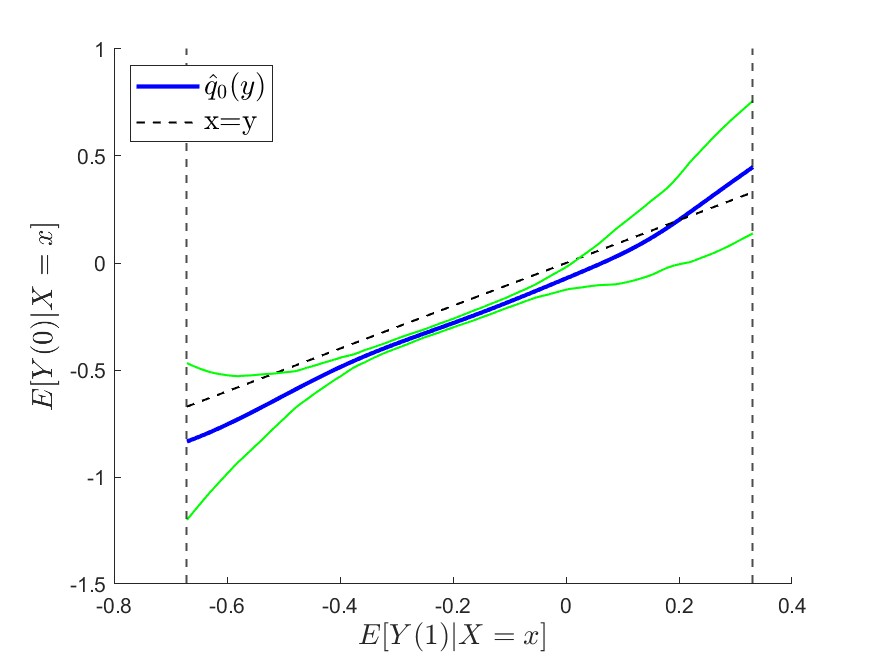}
	\label{fig:q0_math}
}
	\subfloat[]{
	\includegraphics[scale=0.16]{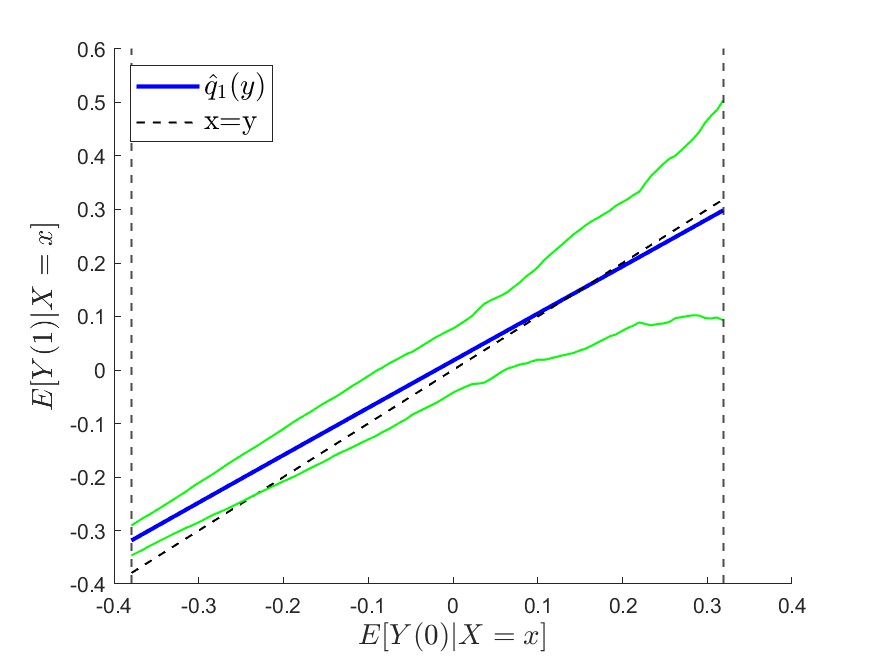}
	\label{fig:q1_math}
}
	\subfloat[]{
	\includegraphics[scale=0.16]{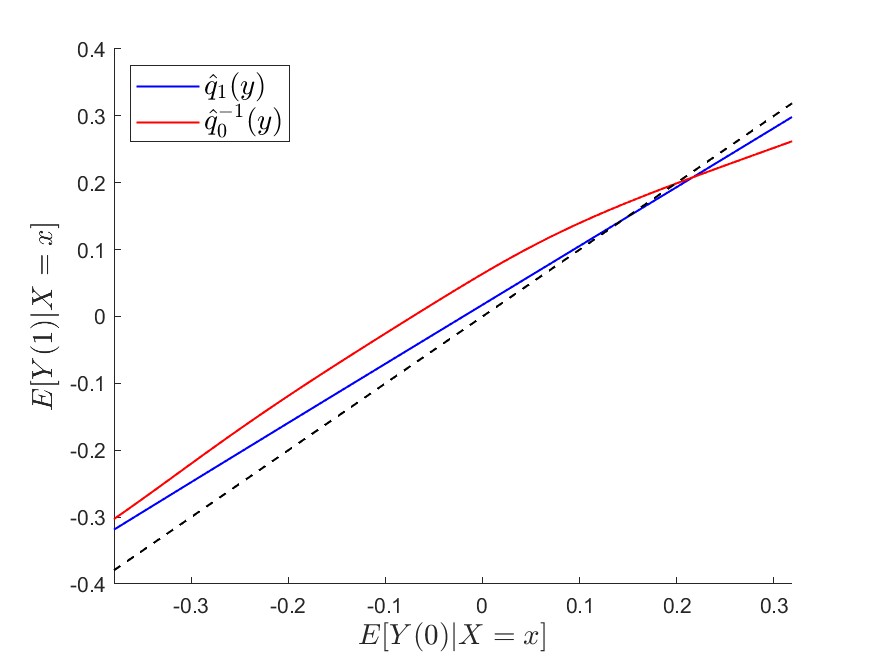}
	\label{fig:q_compare_math}
}\\
	\centering
\subfloat[]{
	\includegraphics[scale=0.16]{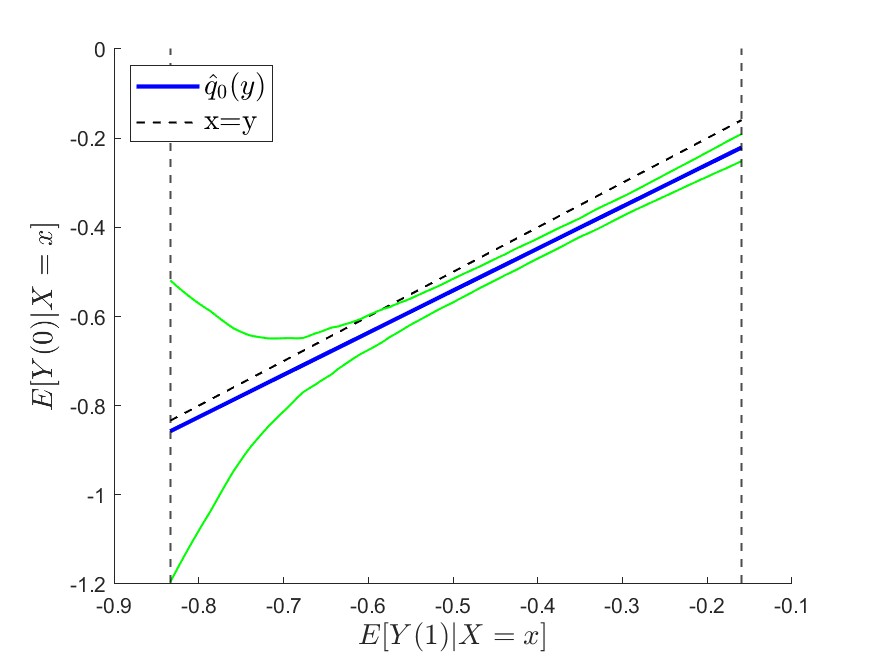}
	\label{fig:q0_read}
}
\subfloat[]{
	\includegraphics[scale=0.16]{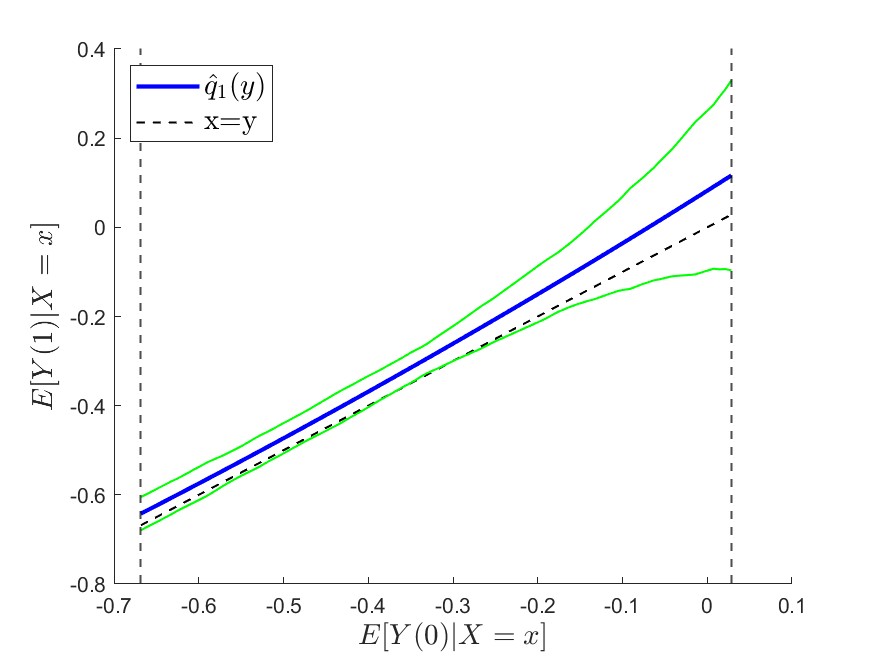}
	\label{fig:q1_read}
}
\subfloat[]{
	\includegraphics[scale=0.16]{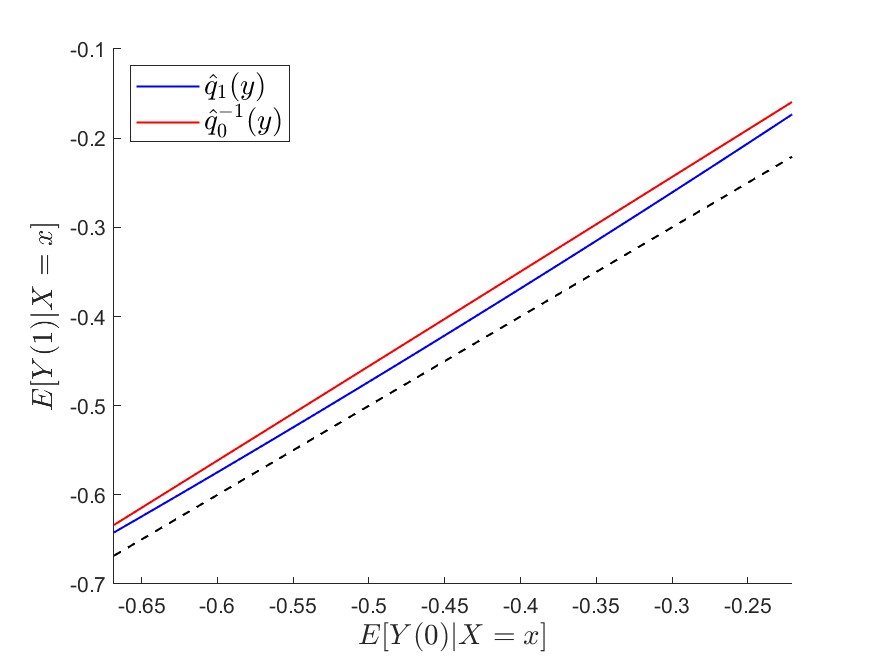}
	\label{fig:q_compare_read}
}
\caption{Conditional Mean Potential Outcomes}
\end{figure}

Figures \ref{fig:q0_math} and \ref{fig:q1_math} present estimates of $\hat{q}_0$ and $\hat{q}_1$ respectively for math scores using the methods in Section 3. $\hat{q}_0$ and $\hat{q}_1$ are respectively below and above the 45 degree line for all but the largest values of the conditional average potential outcomes. Thus the estimates provide evidence of a positive treatment effect for individuals with low and moderate conditional average baseline potential outcomes. More generally the slope of $\hat{q}_0$ ($\hat{q}_1$) is greater (lower) than that of the 45 degree line, which suggests the treatment effect is lower on average for individuals whose covariate values are associated with greater untreated potential outcomes. That is, that individuals whose scores would be worse without treatment, tend to benefit more from treatment. In all figures, green curves are upper and lower $90\%$ pointwise confidence bands evaluated using the multiplier bootstrap.\footnote{To calculate the multiplier boostrap intervals, for each bootstrap simulation $s$ we draw $n$ independent standard exponential random variables. In each of the kernel sums in the first and second stages of the estimator $\hat{q}_d$ we multiply the $i$-th term by the corresponding exponential random variable to obtain a bootstrap estimate $\hat{q}_{d,s}$. The band at $y$ is then $\hat{q}_{d}(y)$ plus and minus the $90$-th percentile of $|\hat{q}_{d,s}(y)-\hat{q}_{d}(y)|$ over the bootstrap draws. We use $100$ bootstrap draws.}

Figure \ref{fig:q_compare_math} compares these estimates. Under comonotonicity the population versions of these functions should align, and indeed in \ref{fig:q_compare_math} these appear close. 

Figures \ref{fig:q0_read} and \ref{fig:q1_read} present estimates of $\hat{q}_0$ and $\hat{q}_1$ for reading scores, respectively. In this case, conditional average treatment effect estimates are positive for all individuals and increasing in conditional average baseline potential outcomes. Figure \ref{fig:q_compare_read} compares these estimates which are closely aligned.
\begin{figure}[ht]
	\centering
	\subfloat[]{
		\includegraphics[scale=0.20]{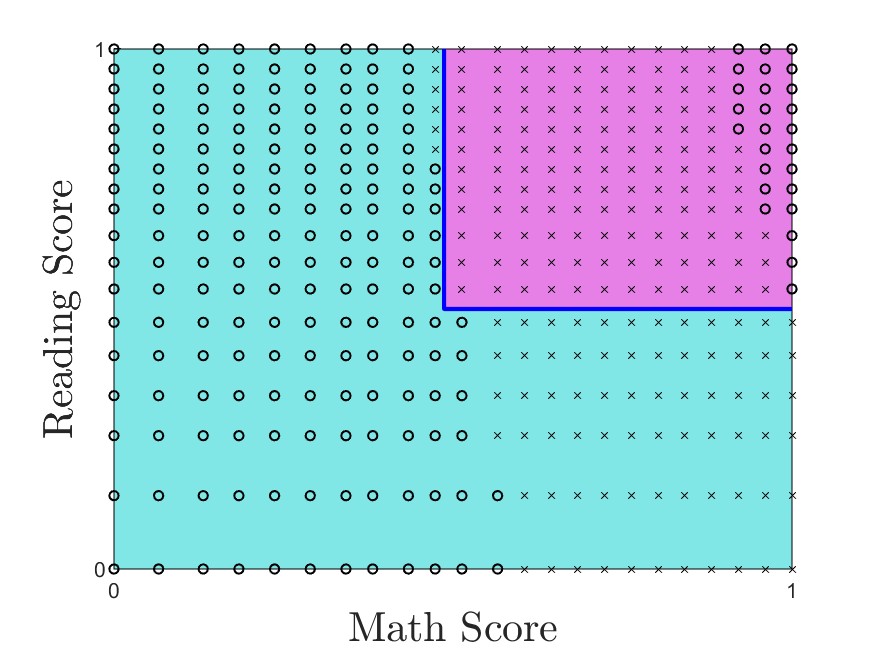}
		\label{fig:extrap_math}
	}
	\subfloat[]{
		\includegraphics[scale=0.20]{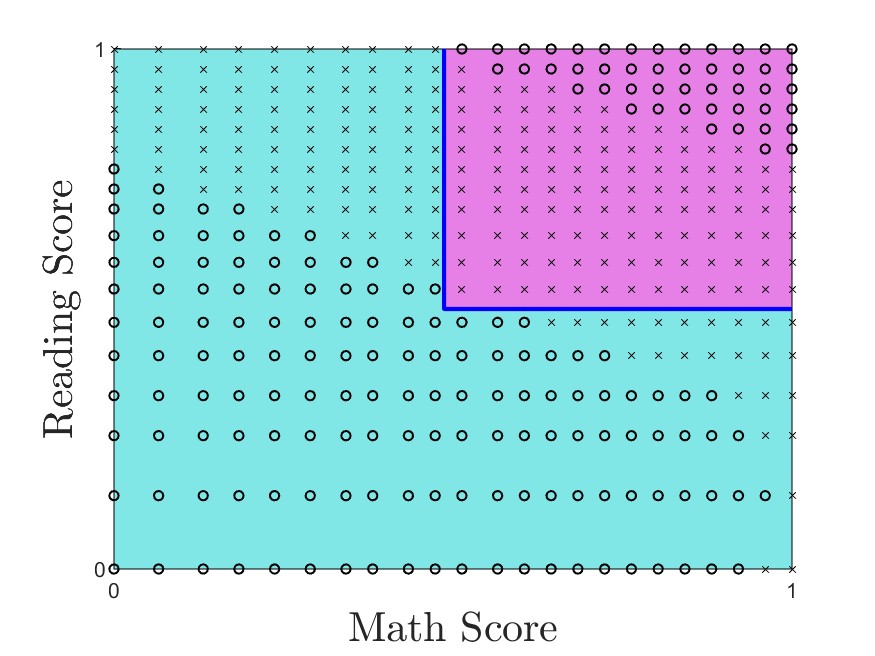}
		\label{fig:extrap_read}
	}
	
	\caption{Values of $S_i$}
\end{figure}

Figures \ref{fig:extrap_math} and \ref{fig:extrap_read} respectively display the
covariate values at which we are able to extrapolate conditional average
treatment effects. Crosses represent values at which identification is achieved
under comonotonicity and circles indicate points to which we cannot
extrapolate. In terms of the notation introduced in Section 3, points marked
with crosses represent values of $X_i$ for which we estimate $S_i=1$ and those
with circles, values for which $S_i=0$.

Figures \ref{fig:extrap_math} and \ref{fig:extrap_read} indicate which
counterfactual objects one can identify using comonotonicity in this
setting. For example, Figure \ref{fig:extrap_math} suggests that we can identify
the average effect on math outcomes of counterfactual treatment policies in
which the math score threshold for mandatory summer school is moderately
increased. However, the average effect of a moderate decrease in this cut-off is
not identified. Similarly, Figure \ref{fig:extrap_read} suggests that the
average counterfactual effect on reading outcomes of a moderate increase in the
reading score is identified.

\begin{figure}[ht]
	\centering
	\subfloat[]{
		\includegraphics[scale=0.20]{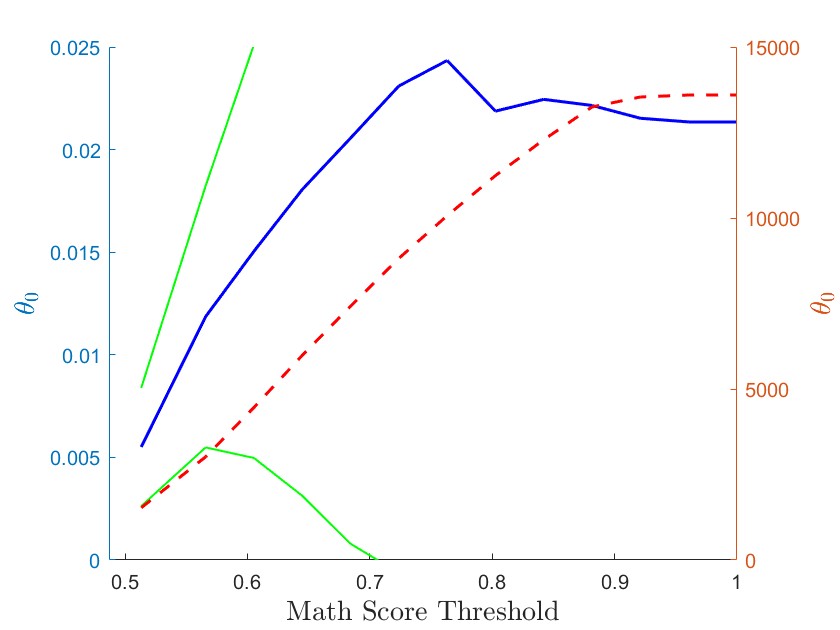}
		\label{fig:counter_math}
	}
	\subfloat[]{
		\includegraphics[scale=0.20]{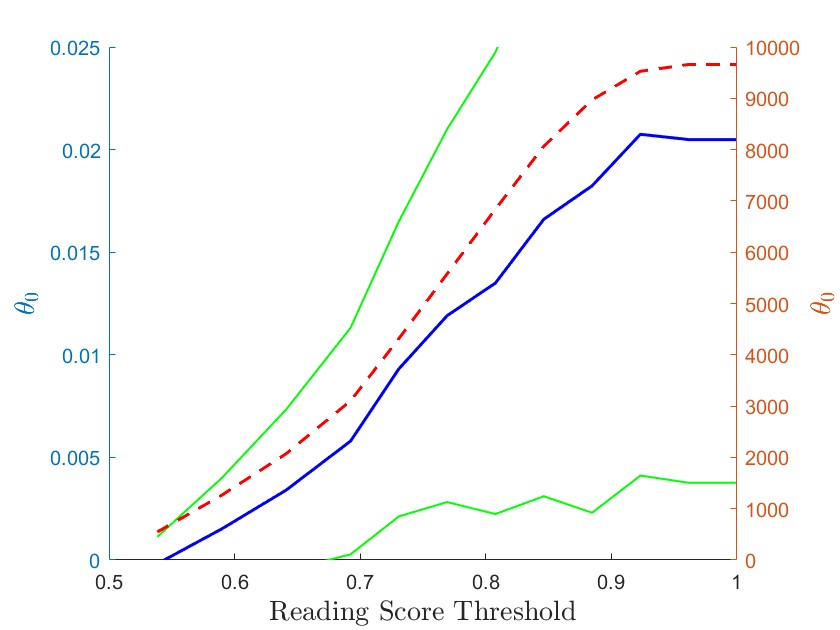}
		\label{fig:counter_read}
	}
	
	\caption{Effects of Counterfactual Policies}
\end{figure}

We estimate average causal effects on math scores from counterfactual increases
in the math cut-off and the effects on reading outcomes of increases in the
reading threshold. The methods in Section 3 allow us to estimate the average
effects among those individuals for whom the CATE is identified, i.e., for whom
$S_i=1$. From Figures \ref{fig:extrap_math} and \ref{fig:extrap_read} we see
that for all but very large increases in the threshold, every individual
impacted by the increase in the cut-off has $S_i=1$ and so for all but the very
largest increases in the cut-off these conditional (on $S_i=1$) effects are
exactly the unconditional average effects. The solid blue curves in
\ref{fig:counter_math} and \ref{fig:counter_read} plot these causal effects for
math and reading respectively. The dashed red curves show the number of
individuals in the sample with $S_i=1$ whose treatment status is impacted by the
counterfactual policy.

Figure \ref{fig:counter_math}  shows that as the math threshold increases, the
causal effect increases initially but the rate at a rate that slows with the
level of the threshold. When the threshold is raised to $0.75$ increases in the
threshold lead to a slight decrease in the effect which then levels out. The
number of individuals with $S_i=1$ impacted by the policy increases roughly
linearly with the threshold until the threshold reaches around $0.9$ before it
levels off, which may reflect the lack of identification of causal effects for
individuals with very high initial math scores seen in
\ref{fig:extrap_math}. The slow down and eventual reversal of the increase in
the causal effect up to threshold level $0.9$ can be understood from Figures
\ref{fig:local_math} and \ref{fig:q0_math}. When the threshold is high, any
marginal increase impacts only individuals with high initial math scores and
from \ref{fig:local_math} we see these individuals have high average untreated
potential outcomes.  \ref{fig:q0_math} suggests that this is associated with a
smaller CATE, and thus a smaller increase in the causal effect of the policy. 

By contrast, \ref{fig:q0_read} shows a steadily increasing relationship between
the CATE and conditional average untreated potential outcome and the causal
effect shown in \ref{fig:counter_read} increases at a similar rate to the number
of individuals impacted by the policy. Note that in both cases, the multiplier
bootstrap confidence bands exclude zero average impact of some of the included
counterfactual policy rules.

		\bibliographystyle{authordate1}
\bibliography{RDDcites}

\appendix

\section{Proofs.}

\begin{proof}[Proof of Theorem 1]
	
	By Assumption 1.1, for any $x\in\mathcal{X}_{d}$, we have $E[Y(d)|X=x]=E[Y|X=x]$,
	and so $g_{d}(x)=E[Y(d)|X=x]$. Moreover, by Assumption 1.2, $E[Y(d)|X=x]$
	is continuous at $x\in\mathcal{F}$. Thus the restriction of $E[Y(d)|X=x]$
	to $\mathcal{X}_{d}\cup\mathcal{F}$ satisfies the restrictions we
	place on $g_{d}$. Thus it suffices to show that there is a unique
	function $g_{d}$ that satisfies these conditions. Clearly $g_{d}(x)$
	is uniquely defined for $x\in\mathcal{X}_{d}$ because it is equal
	to $E[Y|X=x]$ so we need only consider $x\in\mathcal{F}$. By definition,
	such a point $x$ is in the closure of $int(\mathcal{X}_{1})$ and
	the closure of $int(\mathcal{X}_{0})$. Thus there is a sequence $\{x_{d,k}\}_{k=1}^{\infty}$
	in $int(\mathcal{X}_{1})$ so that $x_{d,k}\to x$. Then by the continuity
	condition Assumption 1.2, for any such :a sequence
	
	\begin{align*}
		E[Y(d)|X=x] & =\lim_{k\to\infty}E[Y(d)|X=x_{d,k}]\\
		& =\lim_{k\to\infty}g_{d}(x_{d,k})
	\end{align*}
	Where the second equality uses that $g_{d}(x)=E[Y(d)|X=x]$ for all
	$x\in\mathcal{X}_{d}$. Since $g_{d}$ is also continuous at $\mathcal{F}$
	we thus have: 
	\[
	g_{d}(x)=E[Y(d)|X=x]
	\]
	So $g_{d}(x)$ is also uniquely defined for $x\in\mathcal{F}$. \end{proof}

\begin{proof}[Proof of Theorem 2]
	
	From Theorem 1, if $x\in\mathcal{X}_{d}$ and $x^{*}\in\mathcal{F}$,
	then $g_{d}(x)=g_{d}(x^{*})$ implies $E[Y(d)|X=x]=E[Y(d)|X=x^{*}]$,
	so from Assumption 2 we get: 
	\begin{align*}
		E[Y(1-d)|X=x] & =E[Y(1-d)|X=x^{*}]\\
		& =g_{1-d}(x^{*})
	\end{align*}
	
	For the second statement, using $g_{d}(x_{1})=E[Y(d)|X=x_{1}]$ and
	$g_{d}(x_{2})=E[Y(d)|X=x_{2}]$, we see that: 
	\[
	g_{d}(x_{1})\geq g_{d}(x_{2})\iff E[Y(d)|X=x_{1}]\geq E[Y(d)|X=x_{2}]
	\]
	Then by Assumption 2, the above implies: 
	\[
	E[Y(1-d)|X=x_{1}]\geq E[Y(1-d)|X=x_{2}]
	\]
	
	Substituting $g_{1-d}(x_{1})=E[Y(1-d)|X=x_{1}]$ and $g_{1-d}(x_{2})=E[Y(1-d)|X=x_{2}]$
	(which holds by Theorem 1) then gives the result.
	
\end{proof}

\begin{proof}[Proof of Proposition 1]
	
	\textbf{Part a.}
	
	Suppose that $E[Y(0)|X=x_{1}]\geq E[Y(0)|X=x_{2}]$, then by supposition
	$E[\tau|X=x_{1}]\geq E[\tau|X=x_{2}]$, or equivalently:
	
	\begin{align*}
		E[Y(1)|X=x_{1}] & \geq E[Y(1)|X=x_{2}]+E[Y(0)|X=x_{1}]-E[Y(0)|X=x_{2}]\\
		& \geq E[Y(1)|X=x_{2}]
	\end{align*}
	
	Where the second inequality again uses $E[Y(0)|X=x_{1}]\geq E[Y(0)|X=x_{2}]$.
	So we see that: 
	\begin{equation}
		E[Y(0)|X=x_{1}]\geq E[Y(0)|X=x_{2}]\implies E[Y(1)|X=x_{1}]\geq E[Y(1)|X=x_{2}]\label{eq:oneway}
	\end{equation}
	
	Conversely, if $E[Y(0)|X=x_{1}]<E[Y(0)|X=x_{2}]$ then then by supposition
	$E[\tau|X=x_{1}]\leq E[\tau|X=x_{2}]$, or equivalently:
	
	\begin{align*}
		E[Y(1)|X=x_{1}] & \leq E[Y(1)|X=x_{2}]+E[Y(0)|X=x_{1}]-E[Y(0)|X=x_{2}]\\
		& <E[Y(1)|X=x_{2}]
	\end{align*}
	
	And so,
	\begin{equation}
		E[Y(0)|X=x_{1}]<E[Y(0)|X=x_{2}]\implies E[Y(1)|X=x_{1}]<E[Y(1)|X=x_{2}]\label{eq:2nd}
	\end{equation}
	
	Together, (\ref{eq:oneway}) and (\ref{eq:2nd})are equivalent to
	comonotonicity.
	
	\textbf{Part b.}
	
	First note that by definition of $\tau$: 
	\begin{align}
		E[Y(1)|X=x_{2}]-E[Y(1)|X=x_{1}] & =E[Y(0)|X=x_{2}]-E[Y(0)|X=x_{1}]\nonumber \\
		& +E[\tau|X=x_{2}]-E[\tau|X=x_{1}]\label{eq:compmain}
	\end{align}
	
	From the above we see that: 
	\begin{align}
		E[Y(1)|X=x_{2}]-E[Y(1)|X=x_{1}] & \geq E[Y(0)|X=x_{2}]-E[Y(0)|X=x_{1}]\nonumber \\
		& -|E[\tau|X=x_{2}]-E[\tau|X=x_{1}]|\label{eq:comp2}
	\end{align}
	
	Now, suppose $E[Y(0)|X=x_{2}]\geq E[Y(0)|X=x_{1}]$. Then by supposition
	we have:
	
	\[
	|E[\tau|X=x_{2}]-E[\tau|X=x_{1}]|\leq E[Y(0)|X=x_{2}]-E[Y(0)|X=x_{1}]
	\]
	
	And so from \eqref{eq:comp2}: 
	\[
	E[Y(1)|X=x_{2}]-E[Y(1)|X=x_{1}]\geq0
	\]
	
	So we see $E[Y(0)|X=x_{2}]\geq E[Y(0)|X=x_{1}]$ implies $E[Y(1)|X=x_{2}]\geq E[Y(1)|X=x_{1}]$.
	
	Now, from \eqref{eq:compmain} we also get the following:
	
	\begin{align}
		E[Y(1)|X=x_{2}]-E[Y(1)|X=x_{1}] & \leq E[Y(0)|X=x_{2}]-E[Y(0)|X=x_{1}]\nonumber \\
		& +|E[\tau|X=x_{1}]-E[\tau|X=x_{2}]|\label{eq:comp1}
	\end{align}
	
	Now, suppose $E[Y(0)|X=x_{2}]<E[Y(0)|X=x_{1}]$. Then by supposition
	we have:
	
	\[
	|E[\tau|X=x_{2}]-E[\tau|X=x_{1}]|<E[Y(0)|X=x_{1}]-E[Y(0)|X=x_{2}]
	\]
	
	And so from \eqref{eq:comp1}: 
	\[
	E[Y(1)|X=x_{2}]-E[Y(1)|X=x_{1}]<0
	\]
	
	So we see $E[Y(0)|X=x_{2}]<E[Y(0)|X=x_{1}]$ implies $E[Y(1)|X=x_{2}]<E[Y(1)|X=x_{1}]$
	and thus if $E[Y(1)|X=x_{2}]\geq E[Y(1)|X=x_{1}]$ we must have $E[Y(0)|X=x_{2}]\geq E[Y(0)|X=x_{1}]$
	
\end{proof}

\begin{proof}[Proof of Theorem 3]
	
	From Theorem 1, if $x\in\mathcal{X}_{d}$ and $x^{*}\in\mathcal{F}$,
	then $g_{d}(x)=g_{d}(x^{*})$ implies $E[Y(d)|X=x]=E[Y(d)|X=x^{*}]$,
	so if $x^{*(2)}=x^{(2)}$ then from Assumption 3 we get: 
	\begin{align*}
		E[Y(1-d)|X=x] & =E[Y(1-d)|X=x^{*}]\\
		& =g_{1-d}(x^{*})
	\end{align*}
	
	For the second statement, using $g_{d}(x_{1})=E[Y(d)|X=x_{1}]$ and
	$g_{d}(x_{2})=E[Y(d)|X=x_{2}]$, we see that: 
	\[
	g_{d}(x_{1})\geq g_{d}(x_{2})\iff E[Y(d)|X=x_{1}]\geq E[Y(d)|X=x_{2}]
	\]
	Then if $x_{1}^{(2)}=x_{2}^{(2)}$ then by Assumption 3, the above
	implies: 
	\[
	E[Y(1-d)|X=x_{1}]\geq E[Y(1-d)|X=x_{2}]
	\]
	
	Substituting $g_{1-d}(x_{1})=E[Y(1-d)|X=x_{1}]$ and $g_{1-d}(x_{2})=E[Y(1-d)|X=x_{2}]$
	(which holds by Theorem 1) then gives the result.
	
\end{proof}

\begin{proof}[Proof of Theorem \ref{thm:an-qstar}]
  With some abuse of notation, we redefine the weights as
  $W_{i} = \mathrm{1}\{X_{i} \in \mathcal{X}_{1-d}, \norm{X_{i} - X_{NN(i)}}
  \leq \varepsilon,\,\, X_{{NN(i)}} \text{ exists} \}$. Note that this results in the
  exact same estimator with the advantage being that we can take the summation
  over all $i\in \{1, \dots, n\}$ rather than over a random sample size. Write
  $K_{b}(u) : = K(u/b)/b$. Note that $X_{{NN(i)}}$ exists if and only if at
  least one observation in $\mathcal{X}_{d}$. 
  
The infeasible estimator $q^{\ast}_{1-d}(y)$ is a standard univariate local linear regression but with
weights $W_{i}$. Most of derivation remains identical to the standard case
(e.g., \cite{fan1996local}) and thus we focus on how the
weights affect the asymptotic distribution. Write
\begin{align*}
  \label{eq:3}
  \tilde{S}_r(y) &:= \sum_i W_i\,K_b(\tilde{g}_{d}(X_{i}) -
                   y)\,((\tilde{g}_{d}(X_{i}) - y)/b)^{r}\\
    \tilde{T}_r(y) &:= \sum_i W_i\,K_b(\tilde{g}_{d}(X_{i}) -
                   y)\,((\tilde{g}_{d}(X_{i}) - y)/b)^{r}Y_{i},
\end{align*}
where $\sum_{i}$ indicates $\sum_{i=1}^{n}$ unless
specified otherwise. The argument $y$ is often omitted for notational
simplicity; e.g., we write $\tilde{S}_{0}$ to indicate $\tilde{S}_{0}(y)$ unless
confusing otherwise. Further define $
\tilde{S}(y) :=
  \begin{pmatrix}
    \tilde{S}_{0}(y) & \tilde{S}_{1}(y) \\
    \tilde{S}_{1}(y) & \tilde{S}_{2}(y)
  \end{pmatrix}$ and $\tilde{T}(y) :=   \begin{pmatrix}
    \tilde{T}_{0}(y) \\
    \tilde{T}_{1}(y)
  \end{pmatrix}$ so that
  \(e_{1}'\hat{\gamma}_{y} := \tilde{S}(y)^{-1}\tilde{T}(y),\)
where $\hat{\gamma}(y) $ is defined in \eqref{eq:def-q}. Define $S_{r}(y)$,
$T_{r}(y)$, $S(y)$ and $T(y)$ analogously, but with $g$ in place of $\tilde g$.

We first investigate
$q^{\ast}_{1-d}(y) - q_{1-d}(y)$ where $q_{1-d}^{\ast}(y)$ is obtained by the infeasible local
linear regression that replaces $\tilde{g}$ with $g$. We can write
\begin{equation*}
  q^{\ast}_{1-d}(y) = \frac{S_{2}(y)T_{0}(y) -
                        S_{1}(y)T_{1}(y)}{S_{0}(y)S_{2}(y) - S_{1}^{2}(y)} =: \frac{S_{2}(y)T_{0}(y) -
                        S_{1}(y)T_{1}(y)}{D(y)}
                    \end{equation*}

We first characterize the moments of $S_{r}$. Note that
\begin{align*}
\mu_{r} :=   & E[W_i\,K_b({g}_{d}(X_{i}) - y) (({g}_{d}(X_{i}) - y)/b)^{r}
                  ] \\
  =& E[p_{\varepsilon}( X_{i})K_b({g}_{d}(X_{i}) - y)(({g}_{d}(X_{i}) - y)/b)^{r}
                   ] \\
  = & \int_{x \in \mathcal{X}_{1-d}}p_{\varepsilon}(x)K_{b}({g}_{d}(x) -
      y)(({g}_{d}(x) - y)/b)^{r}f_{X}(x) dx
\end{align*}
where we define $p_{\varepsilon}(x) := P(W_{i}=1 | X_{i} = x)$.

Now, by Assumption \ref{assu:smoooth-bd}, for large enough $n$ a change of
variables give

\begin{equation*}
  \mu_{r} = \int_{\mathbb{R}}K_{b}(u-y)((u-y)/b)^{r} \int_{{g}_d^{-1}(u) \cap \mathcal{X}_{1-d}}
       \frac{p_{\varepsilon}(x)\,f_X(x)}
            {\|\nabla {g}_d(x)\|}\,
     d \mathcal{H}^{k-1}(x)du,
\end{equation*}
where $\mathcal{H}^{k-1}$ is the $(k\!-\!1)$-dimensional Hausdorff
measure and $g^{-1}_{d}(u) = \{x \in \mathbb{R}^{d}: g_{d}(x) = u\}$. Define the inner integral
\begin{equation*}
f_{{g}, \varepsilon}(u)
  := \int_{{g}_d^{-1}(u) \cap \mathcal{X}_{1-d}}
       \frac{p_{\varepsilon }(x)\,f_X(x)}
            {\|\nabla {g}_d(x)\|}\,
     d\mathcal{H}^{k-1}(x),
\end{equation*}
which is the effective density.
For $x \in \mathcal{X}_{1-d}$, note that $1\{\norm{x -
  X_{NN(x)}} \leq \varepsilon, \,\, X_{NN(x)} \text{ exists}\} = 0$ if and only
if either i) $X_{i} \in \mathcal{X}_{1-d}$ or ii) $X_{i} \in \mathcal{X}_{d}$
and $\norm{x- X_{i}} > \varepsilon$ for all $i$. Here, we abuse notation to
denote by $NN(x)$ the nearest neighbor of $x \in \mathcal{X}_{1-d}$ that lies
in $\mathcal{X}_{d}$. Therefore, we have
\begin{align*}
  p_{\varepsilon}(x) = & 1- P(1\{\norm{x -
  X_{NN(x)}} \leq \varepsilon, \,\, X_{NN(x)} \text{ exists}\} = 0) \\
  =& 1- (P(X \in \mathcal{X}_{1-d}) + P(X_{i} \in \mathcal{X}_{d}, \norm{x -
    X_{i}} > \varepsilon))^{n} \\
  = & 1 - (1- P(X \in \mathcal{X}_{d}, \norm{x -
    X_{i}} \leq \varepsilon))^{n}
\end{align*}

Now, define
$h_{\varepsilon}(x) = P(X \in \mathcal{X}_{d}, \norm{x - X_{i}} \leq
\varepsilon) = P( X \in \mathcal{X}_{d} \cap B_{\varepsilon}(x)) $. It follows
that, under the assumption that $\varepsilon^{d}n_{d}\to \infty $,
$f_{g,\varepsilon} (u) = \varepsilon f_{g}(u) + O(\varepsilon^{2})$, uniformly
over $u$, where
$f_{g}(u) = v_{1} \int_{g^{-1}_{d}(u) \cap \mathcal{F}} \kappa_{d}(z)
\frac{f_{X}(z)}{\lVert\nabla g_d(z)\rVert} d\mathcal H^{k-2}(z) $. Likewise, it
is easy to show $f_{g,\varepsilon}^{(d)} (u) = O(\varepsilon)$ for $d \in {1,2}$
where $f^{(d)}$ denotes the $d$th derivative of $f$.

Define $\pi_{1-d}:= P(X \in \mathcal{X}_{1-d})$ and observe that $E[W_{i}] = \pi_{1-d}E[p_{\varepsilon}(X_{i})|X_{i} \in
\mathcal{X}_{1-d}] = \pi_{1-d} \varepsilon \int_{\mathcal{F}} f_X(z)\,d\mathcal{H}^{\,k-1}(z)
+ O(\varepsilon^{2})$ so that $\frac{1}{n \varepsilon}\sum_{i=1}^{n} W_{i} =
\pi_{1-d}a_{\mathcal{F}} + o_{p}(1)$ where $a_{\mathcal{F}} := \int_{\mathcal{F}}
f_X(z)\,d\mathcal{H}^{\,k-1}(z)$. An immediate consequence is that $ \frac{1}{n_{1-d}\varepsilon}\sum_{i \in
  \mathcal{I}_{1-d}}W_{i} =a_{\mathcal{F}} +
o_{p}(1)$. Note that such convergences also hold almost surely as well, by a
strong law of large numbers for triangular arrays. 

Plugging this into the original integral,
we have
\begin{align*}
  \mu_{r} = &\int_{\mathbb{R}}K_{b}(u-y)((u-y)/b)^{r}f_{g,\varepsilon}(u)du 
              \\
  =&  \int_{\mathbb{R}}K(t)t^{r} f_{g,\varepsilon}(y + bt)dt
\end{align*}
where the second equality follows from  another change of variables $u =
y+bt$. It follows that

\begin{equation*}
      \mu_{r} =
  \begin{cases}
     f_{g,\varepsilon}(y) + (b^{2}/2)\mu_{2}(K)f''_{g,\varepsilon}(y) + O(b^{4}) &\text{if } r = 0 \\
  bf'_{g,\varepsilon}(y)\mu_{2}(K) + O(b^{3}) &\text{if } r = 1 \\
       f_{g,\varepsilon}(y)\mu_{2}(K) + O(b)  &\text{if } r = 2,
  \end{cases}
\end{equation*}
where we write $\mu_{2}(K) = \int u^{2}K(u) du$. This gives
\begin{equation*}
  \frac{1}{n^{2}\varepsilon^{2}}D(y) \to  f_{g}^{2}(y)\mu_{2}(K) 
\end{equation*}

By a similar calculation and a standard Lindeberg central limit theorem, we can show
\[
(n\varepsilon)^{-1/2}b^{1/2} \, T(y) \xrightarrow{d} \mathcal{N} \left( 0, \, \sigma^2(y) R(K) f_g^{-1}(y) \begin{pmatrix}
1 & 0 \\
0 &  \mu_2^{-1}
\end{pmatrix} \right),
\]
where $\sigma^{2}(y) := E[e_{i}^{2}| g_{d}(X_{i}) = y]$ and $R(K) := \int
K(u)^{2}du$. The bias term remains the same with a typical local linear
regression of $Y_{i}$ on $g_{d}(X_{i})$ under the
weighting scheme, which gives the result.
\end{proof}

\begin{proof}[Proof of Theorem \ref{thm:qstar-qhat}]
	Since the arguments close follow those given by MRS with minor modifications,
	we provide a detailed sketch of the proof and focus on the parts that differ.
	Following MRS, we derive a stochastic expansion of $\hat{q}_{1-d}(y)$ around
	$q^{*}_{1-d}(y)$. Define $\eta_{i} = Y_{i} - q_{1-d}(g_{d}(X_{i}))$. Under
	comonotonicity, we have $E[\eta_{i}| X_{i}] = 0$. Define
	$\tilde w_{i} (y) = \begin{pmatrix}
		1 \\
		(\tilde{g}_{d}(X_{i}) - y)/b
	\end{pmatrix}$ and $ w_{i}(y) = \begin{pmatrix}
		1 \\
		({g}_{d}(X_{i}) - y)/b
	\end{pmatrix}$.
	
	The outcome can be decomposed into
	\begin{align*}
		Y_{i} =& q_{1-d}(g_{d}(X_{i})) + \eta_{i} \\
		= & q_{1-d}(y) +  q_{1-d}(g_{d}(X_i)) - q_{1-d}(y) - q_{1-d}'(y)(g_{d}(X_i) - y) \\
		& -     q_{1-d}'(y) \bigl( \tilde{g}_{d}(X_i) - g_{d}(X_i) \bigr) +q_{1-d}'(y) \bigl( \tilde{g}_{d}(X_i) - y \bigr) + \eta_{i},
	\end{align*}
	which shows that
	\begin{equation*}
		\hat{q}_{1-d}(y) = q_{1-d}(y) + \hat{R}_{1} + \hat{R}_{2} + \hat{R}_{3},
	\end{equation*}
	where
	\begin{align*}
		\hat{R}_{1} & :=  e_{1}' \tilde{S}(y)^{-1}\sum_{i}
		W_i\,K_b(\tilde{g}_{d}(X_{i}) - y)
		\tilde{w}_i(y) \eta_{i} \\
		\hat{R}_{2} & :=   e_{1}'\tilde{S}(y)^{-1}\sum_{i}
		W_i\,K_b(\tilde{g}_{d}(X_{i}) - y)
		\tilde{w}_i(y) ( q_{1-d}(g_{d}(X_i)) - q_{1-d}(y) - q_{1-d}'(y)(g_{d}(X_i) -
		y)) \\
		\hat{R}_{3} & :=   e_{1}'\tilde{S}(y)^{-1}\sum_{i}
		W_i\,K_b(\tilde{g}_{d}(X_{i}) - y)
		\tilde{w}_i(y) ( -     q_{1-d}'(y) \bigl( \tilde{g}_{d}(X_i) - g_{d}(X_i)
		\bigr) )
	\end{align*}
	Likewise, we have
	\begin{align*}
		Y_{i} =& q_{1-d}(g_{d}(X_{i})) + \eta_{i} \\
		= & q_{1-d}(y) +  q_{1-d}(g_{d}(X_i)) - q_{1-d}(y) - q_{1-d}'(y)(g_{d}(X_i) - y) \\
		& +q_{1-d}'(y) \bigl({g}_{d}(X_i) - y \bigr) + \eta_{i}
	\end{align*}
	and thus the oracle estimator can be decomposed as
	\begin{align*}
		{q}^{*}_{1-d}(y) = & q_{1-d}(y) + {R}^{*}_{1} + {R}^{*}_{2} \\
		= & q_{1-d}(y) + {R}^{*}_{1} + {R}^{*}_{2} + {R}^{*}_{3} -  {R}^{*}_{3}
	\end{align*}
	where
	\begin{align*}
		R^{*}_{1} & :=  e_{1}' {S}(y)^{-1}\sum_{i}
		W_i\,K_b({g}_{d}(X_{i}) - y)
		w_{i}(y) \eta_{i} \\
		R^{*}_{2} & :=   e_{1}'{S}(y)^{-1}\sum_{i}
		W_i\,K_b({g}_{d}(X_{i}) - y)
		w_{i}(y) ( q_{1-d}(g_{d}(X_i)) - q_{1-d}(y) - q_{1-d}'(y)(g_{d}(X_i) -
		y)) \\
		R^{*}_{3} & :=   e_{1}'{S}(y)^{-1}\sum_{i}
		W_i\,K_b({g}_{d}(X_{i}) - y)
		w_{i}(y) ( -     q_{1-d}'(y) \bigl( \tilde{g}_{d}(X_i) - g_{d}(X_i)
		\bigr) )
	\end{align*}
	
	Define $\tilde{N}_{j}(y)$ so that $\hat{R}_{j} =
	e_{1}'\tilde{S}(y)^{-1}\tilde{N}_{j}(y)$ and define $N_{j}(y)$ likewise.
	
	We first derive a convergence rate for $\tilde{S}_{r}(y) - {S}_{r}(y,
	b)$. We have
	\begin{align*}
		& \tilde{S}_{r}(y) - {S}_{r}(y) \\
		= & \sum_{i} W_i  K_b(\tilde{g}_{d}(X_{i}) -
		y)\,b^{-r}(\tilde{g}_{d}(X_{i}) - y)^{r} - \sum_{i} W_{i}K_b({g}_{d}(X_{i}) -
		y)\,b^{-r}(\tilde{g}_{d}(X_{i}) - y)^{r} \\
		&+ \sum_{i} W_{i}K_b({g}_{d}(X_{i}) -
		y)\,b^{-r}(\tilde{g}_{d}(X_{i}) - y)^{r}  - \sum_{i} W_{i}K_b({g}_{d}(X_{i}) -
		y)\,b^{-r}({g}_{d}(X_{i}) - y)^{r} \\
		=:& (I) + (II)
	\end{align*}

	Since the weights satisfy $(\sum_{i=1}W_{i})/n\varepsilon \to_{p}
	\pi_{1-d}\alpha_{\mathcal{F}}$, we normalize the terms by
	$n\varepsilon$ instead of $n$. We have
	\begin{align*}
		(n\varepsilon)^{-1} \abs{(I)} \leq  \frac{1}{n\varepsilon} \sum_{i} W_i  \rvert K_b(\tilde{g}_{d}(X_{i}) -
		y) - K_b({g}_{d}(X_{i}) -
		y))\lvert b^{-r}(\tilde{g}_{d}(X_{i}) - y)^{r}
	\end{align*}
	Note that $ \rvert K_b(\tilde{g}_{d}(X_{i}) -
	y) - K_b({g}_{d}(X_{i}) -
	y))\lvert \neq 0$ only if either $\lvert \tilde{g}_{d}(X_{i}) -
	y \rvert \leq b$ or  $ \lvert {g}_{d}(X_{i}) -
	y \rvert \leq b$. In the former case, we have $\abs{b^{-r}(\tilde{g}_{d}(X_{i}) - y)^{r}}
	\leq 1$; in the latter we have  $\abs{b^{-r}(\tilde{g}_{d}(X_{i}) - y)^{r}}
	\leq (1 + r_{n}/b)^{r}$ where $r_{n} = O_{p}(a_{n})$. Since
	$a_{n}/b = o(1)$, the convergence rate of $(I)$ is determined
	by
	\begin{align*}
		& \frac{1}{n\varepsilon} \sum_{i} W_i  \rvert K_b(\tilde{g}_{d}(X_{i}) -
		y) - K_b({g}_{d}(X_{i}) -
		y))\lvert (1(|\tilde{g}_{d}(X_{i}) -
		y| \leq b) + 1(|{g}_{d}(X_{i}) -
		y| \leq b)) \\
		\leq & \frac{C}{b^{2} n\varepsilon} \sum_{i} W_i
		\rvert \tilde{g}_{d}(X_{i}) - {g}_{d}(X_{i})\lvert (1(|\tilde{g}_{d}(X_{i}) -
		y| \leq b) + 1(|{g}_{d}(X_{i}) -
		y| \leq b))
		\\
		\leq & \frac{Ca_{n}}{b^{2} n\varepsilon} \sum_{i} W_i
		\rvert (1(|\tilde{g}_{d}(X_{i}) -
		y| \leq b) + 1(|{g}_{d}(X_{i}) -
		y| \leq b))
		\\
		= & O_{p}(a_{n}/b)
	\end{align*}
	
	Now, for term $(II)$:
	\begin{align*}
		n^{-1} \abs{(II)} \leq  & \frac{1}{n} \sum_{i} W_{i}K_b({g}_{d}(X_{i}) -
		y)\,\lvert b^{-r}(\tilde{g}_{d}(X_{i}) - y)^{r}  -
		b^{-r}({g}_{d}(X_{i}) - y)^{r} \rvert \\
		\leq & \frac{C}{b n\varepsilon} \sum_{i} W_{i}K_b({g}_{d}(X_{i}) -
		y)\,\lvert \tilde{g}_{d}(X_{i}) - {g}_{d}(X_{i}) \rvert \\
		= & O_{p}(a_{n}/b)
	\end{align*}
	Noting that both bounds above are uniform in $y \in \mathcal{Y}$, we have
	\begin{equation}\label{eq:des_conv}
		\sup_{y}\frac{1}{n\varepsilon} \lvert \tilde{S}_{r}(y) - {S}_{r}(y)
		\rvert = O_{p}(a_{n}/b)
	\end{equation}
	for $r \in \{0, 1, 2\}$
	
	Now we move on bounding the difference between terms that appear in the
	decomposition above.
	
	\textbf{$\hat R_1 -R^*_1$: } This is a term that required a significant amount
	of effort in, e.g., \cite{mammen2012nonparametric}. However, in our case, the
	independence between $\tilde g_{d}(\cdot)$ and
	$(X_{i}, \eta_{i})_{i \in \mathcal{I}_{1-d}}$ simplifies the argument. To see this,
	note that
	\begin{align*}
		&(n\varepsilon)^{-1}(\tilde{N}_{1}(y) -
		N_{1}(y))) \\
		=& (n\varepsilon)^{-1} \sum_{i}
		W_i\,\left( K_b({g}_{d}(X_{i}) - y)
		w_{i}(y) - K_b(\tilde{g}_{d}(X_{i}) - y)
		\tilde{w}(y) \right) \eta_{i} \\
		= & (n\varepsilon)^{-1} \sum_{i}
		W_i\,\left( (K_b({g}_{d}(X_{i}) - y)
		-  K_b(\tilde{g}_{d}(X_{i}) - y))
		w_{i}(y)+ K_b(\tilde{g}_{d}(X_{i}) - y)
		(w_{i}(y) -
		\tilde{w}(y)) \right) \eta_{i} \\
		= & (n\varepsilon)^{-1} \sum_{i}
		W_i\,( K_b({g}_{d}(X_{i}) - y)
		-  K_b(\tilde{g}_{d}(X_{i}) - y))
		w_{i}(y)) \eta_{i} \\
		& +  (n\varepsilon)^{-1} \sum_{i}
		W_i\, K_b(\tilde{g}_{d}(X_{i}) - y)
		(w_{i}(y) -
		\tilde{w}(y)) \eta_{i}.
	\end{align*}
	Conditioning on $(X_{i})_{i \in \mathcal{I}_{d}}$, using the almost sure
	convergence results for $\tilde{g}_{d}$, and the tail condition on the
	distribution of $\eta_{i}$, these terms can be bounded using basic concentration
	inequalities yielding a bound of
	$O_{p}\Big( \frac{a_{n}}{b} \left(\frac{ \log (n\varepsilon)}{n \varepsilon b}
	\right)^{1/2} \Big)$. This gives
	\begin{align*}
		& \hat R_{1} - R^{*}_{1} \\
		= & e_{1}'\tilde{S}(y)^{-1}\tilde{N}_{1}(y) - e_{1}'{S}(y)^{-1}\tilde{N}_{1}(y) + e_{1}'{S}(y)^{-1}\tilde{N}_{1}(y) -
		e_{1}'{S}(y)^{-1}{N}_{1}(y) \\
		= & e_{1}'(\tilde{S}(y)^{-1}-{S}(y)^{-1})\tilde{N}_{1}(y) + e_{1}'{S}(y)^{-1}(\tilde{N}_{1}(y) -
		N_{1}(y)) \\
		=& O_{p}\left( \frac{a_{n}}{b} \left(\frac{
			\log (n\varepsilon)}{n \varepsilon b} \right)^{1/2}\right)
	\end{align*}
	
	\textbf{$\hat R_2 -R^*_2$: } We first investigate the difference between the
	numerators
	\begin{align*}
		& \tilde{N}_{2}(y) - N_{2}(y) \\
		= & \sum_{i}
		W_i\left(K_b({g}_{d}(X_{i}) - y)w_{i}(y)
		- K_b(\tilde{g}_{d}(X_{i}) - y)\tilde w_{i}(y)
		\right) ( q_{1-d}(g_{d}(X_i)) - q_{1-d}(y) - q_{1-d}'(y)(g_{d}(X_i) -
		y)).
	\end{align*}
	
	The weights
	$K_b({g}_{d}(X_{i}) - y)w_{i}(y) - K_b(\tilde{g}_{d}(X_{i}) - y)\tilde w_{i}(y)$ are
	nonzero only if either $\lvert \tilde{g}_{d}(X_{i}) - y \rvert \leq b$ or
	$ \lvert {g}_{d}(X_{i}) - y \rvert \leq b$. Under the assumption that $q_{1-d}$
	has uniformly bounded second derivatives, note that
	\begin{equation*}
		A_{n} := \lvert q_{1-d}(g_{d}(X_i)) - q_{1-d}(y) - q_{1-d}'(y)(g_{d}(X_i) -
		y) \rvert =  O_{p} (\abs{g_{d}(X_i) -
			y}^{2})
	\end{equation*}
	Hence,  $A_{n} = O_{p}(b^{2})$ if $\lvert \tilde{g}_{d}(X_{i}) -
	y \rvert \leq b$ and $ A_{n} =O_{p}(b^{2} + a_{n}^{2}
	)$ if $\lvert {g}_{d}(X_{i}) -
	y \rvert \leq b$. However, since $a_{n}/b = o(1)$, we have  $A_{n} =
	O_{p}(b^{2})$ in either case. Using \eqref{eq:des_conv}, it follows that
	\begin{align*}
		& \frac{1}{n\varepsilon}\lVert \tilde{N}_{2}(y) - N_{2}(y) \rVert \\
		\leq & A_{n} \frac{1}{n\varepsilon} \sum_{i}
		W_i\lVert K_b({g}_{d}(X_{i}) - y)w_{i}(y)
		- K_b(\tilde{g}_{d}(X_{i}) - y)\tilde w_{i}(y)
		\rVert \\
		= & O_{p}(a_{n}b)
	\end{align*}
	Here, we use the fact that
	\begin{equation*}
		\frac{1}{n\varepsilon} \sum_{i}
		W_i\lVert K_b({g}_{d}(X_{i}) - y)w_{i}(y)
		- K_b(\tilde{g}_{d}(X_{i}) - y)\tilde w_{i}(y)
		\rVert = O_{p}(a_{n}/b),
	\end{equation*}
	which follows from an argument identical to the one we used to prove
	\eqref{eq:des_conv}.
	
	Finally, we have
	\begin{align*}
		& \hat R_{2} - R^{*}_{2} \\
		= & e_{1}'\tilde{S}(y)^{-1}\tilde{N}_{2}(y) - e_{1}'{S}(y)^{-1}\tilde{N}_{2}(y) + e_{1}'{S}(y)^{-1}\tilde{N}_{2}(y) -
		e_{1}'{S}(y)^{-1}{N}_{2}(y) \\
		= & e_{1}'(\tilde{S}(y)^{-1}-{S}(y)^{-1})\tilde{N}_{2}(y) + e_{1}'{S}(y)^{-1}(\tilde{N}_{2}(y) -
		N_{2}(y)) \\
		=& O_{p}(a_{n}b)
	\end{align*}
	Here, we used the fact that
	\begin{align*}
		\sup_{y \in \mathcal{Y}} \lVert (n \varepsilon)^{-1}{S}_{r}(y) - E[ (W_i/\varepsilon)\,K_b({g}_{d}(X_{i}) -
		y)\,(({g}_{d}(X_{i}) - y)/b)^{r}] \rVert = o_{p}(1),
	\end{align*}
	which follows by a standard uniform law of large number argument (see, e.g.,
	\cite{andrews1992generic}),
	\begin{align*}
		& \frac{1}{n \varepsilon} \lVert \tilde{N}_{2}(y) \rVert  \\
		\leq & (n\varepsilon)^{-1} A_{n}\sum_{i}
		W_i\,K_b(\tilde{g}_{d}(X_{i}) - y) 
		\lVert \tilde w_{i}(y) \rVert \\
		= & O_{p}(b^{2}),
	\end{align*}
	uniformly in $y \in \mathcal{Y}$.
	
	\textbf{$\hat R_3 -R^*_3$: } Again, we first investigate the difference between the numerators:
	\begin{align*}
		& \frac{1}{n\varepsilon}\lVert \tilde{N}_{3}(y) - N_{3}(y) \rVert \\
		= &  \frac{1}{n\varepsilon} \sum_{i}
		W_i \lVert K_b({g}_{d}(X_{i}) - y)w_{i}(y)
		- K_b(\tilde{g}_{d}(X_{i}) - y)\tilde w_{i}(y)
		\rVert \cdot \lvert q_{1-d}'(y) ( \tilde{g}_{d}(X_i) - g_{d}(X_i) ) \rvert
		\\
		= & O_{p}(a_{n}^{2}/b),
	\end{align*}
	uniformly in $y \in \mathcal{Y}$. Hence, we have
	\begin{align*}
		& \hat R_{3} - R^{*}_{3} \\
		= & e_{1}'\tilde{S}(y)^{-1}\tilde{N}_{3}(y) - e_{1}'{S}(y)^{-1}\tilde{N}_{3}(y) + e_{1}'{S}(y)^{-1}\tilde{N}_{3}(y) -
		e_{1}'{S}(y)^{-1}{N}_{3}(y) \\
		= & e_{1}'(\tilde{S}(y)^{-1}-{S}(y)^{-1})\tilde{N}_{3}(y) + e_{1}'{S}(y)^{-1}(\tilde{N}_{3}(y) -
		N_{3}(y)) \\
		=& O_{p}(a_{n}^{2}/b),
	\end{align*}
	where we used the fact that $(n\varepsilon)^{-1}\lVert \tilde{N}_{3}(y) \rVert =
	O_{p}(a_{n})$ (again, up to a log term).
	
	For the last term, $R^*_3 - q_{1-d}'(y) \hat{\Delta}(y)$, consider the decomposition
	\begin{align*}
		\lvert R^{*}_{3,r} - q_{1-d}'(y)\hat{\Delta}(y) \rvert =  & \lvert
		e_{1}'((n\varepsilon)^{-1}S(y))^{-1} (n\varepsilon)^{-1}N_{3}(y) -
		q_{1-d}'(y)\hat{\Delta}(y)
		\rvert  \\
		\leq &  \lvert
		e_{1}'((n\varepsilon)^{-1}S(y))^{-1} ((n\varepsilon)^{-1}N_{3}(y) -
		E[(n\varepsilon)^{-1}N_{3}(y)])\rvert \\
		& + \lvert e_{1}'(((n\varepsilon)^{-1}S(y))^{-1} - B(K)^{-1}/f_{g}(y))
		E[(n\varepsilon)^{-1}N_{3}(y)]
		\rvert \\
		& + \lvert e_{1}' B(K)^{-1}/f_{g}(y)
		E[(n\varepsilon)^{-1}N_{3}(y)]- 
		q_{1-d}'(y)\tilde{\Delta}(y)
		\rvert \\
		=&O_{p}\Bigl(a_{n} \left(\frac{\log n}{n \varepsilon b}\right)^{1/2} +
		a_{n}^{2}/b + a_{n}b\Bigr)
	\end{align*}
	where $B(K) =
	\begin{pmatrix}
		1 & 0 \\
		0 & \mu_{2}(K)
	\end{pmatrix}
	$ so that $S(y) \to_{p} B(K)f_{g}(y)$. This concludes the proof.
\end{proof}

\section{Additional Results and Discussion}

\subsection{Comonotonicity in a Model of Skill Formation}

In order to provide further intuition for the comonotonicity condition in the
context of the summer schools example here we consider a simple model of skill
accumulation. For brevity we focus on the case in which the outcome $Y$ is the
math score a year after the initial tests.

Suppose that at the time he takes the initial tests, a student has an initial
latent `reading skill' $\xi_{r}$ and `math skill' $\xi_{m}$.  These skill-levels
may be correlated and let us assume they are jointly normally distributed in the
population. Test scores are noisy measurements of underlying skills and we
assume the measurement error is zero-mean normally distributed and independent
between the two tests. Formally, we assume that
\begin{align}
	&\begin{pmatrix}\xi_{r}\\ \xi_{m}
	\end{pmatrix}\sim N\bigg(\begin{pmatrix}\mu_{r}\\ \mu_{m}
	\end{pmatrix},\begin{pmatrix}\sigma_{r}^{2} & \sigma_{r,m}\\ \sigma_{r,m} &
		\sigma_{m}^{2}
	\end{pmatrix}\bigg)\label{norm1}\\ &\begin{pmatrix}X_{r}\\ X_{m}
	\end{pmatrix}\bigg|\xi_{r},\xi_{m}\sim N\bigg(\begin{pmatrix}\xi_{r}\\ \xi_{m}
	\end{pmatrix},\begin{pmatrix}\omega_{r}^{2} & 0\\ 0 & \omega_{m}^{2}
	\end{pmatrix}\bigg),\label{norm2}
\end{align}
where we assume the variance -covariance matrices above are strictly
positive definite.

Let $\zeta_{m}(0)$ be the student's potential math skill one year later if
untreated and $\zeta_{m}(1)$ his ability if treated. We suppose
$\zeta_{m}(0)=g_{0}(\xi_{m},\eta_{0})$ and
$\zeta_{m}(1)=g_{1}(\xi_{m},\eta_{1})$ where $g_{0}$ and $g_{1}$ are
deterministic functions strictly increasing in their first arguments and
$\eta_{0}$, $\eta_{1}$ are exogenous noise terms that capture random variation
in skill formation. Finally, we suppose that the potential math score a year
later is an unbiased signal of the potential skill, formally
$E[Y(d)|\xi_{m}(d)]=\xi_{m}(d)$, and that the noise in the score is exogenous so
that $Y(d)\indep X_{r},X_{m}|\xi_{m}(d)$ for $d=1,2$.

Under the modelling assumptions above comonotonicity holds, and in particular,
conditional average treated and untreated potential outcomes are both strictly
increasing functions of $X_{m}+\gamma X_{r}$ where $\gamma>0$ if and only if
$\sigma_{r,m}>0$ (to be precise,
$\gamma=\frac{\sigma_{r,m}\omega_{m}^{2}}{\sigma_{m}^{2}\omega_{r}^{2}+\sigma_{m}^{2}\sigma_{r}^{2}-\sigma_{r,m}^{2}}$). Moreover,
the model implies that conditional average potential outcomes equal conditional
average potential math skills. That is,
$E[Y(d)|X_{r},X_{m}]=E[\zeta_{m}(d)|X_{r},X_{m}]$ for $d=1,2$. We state this
formally in Proposition B1 below.

\theoremstyle{plain} \newtheorem*{PPA1}{Proposition B1} \begin{PPA1} Suppose
	(\ref{norm1}) holds with the variance-covariance matrix strictly positive
	definite and (\ref{norm2}) holds with $\omega_{r}^{2},\omega_{m}^{2}>0$. In
	addition, let $\zeta_{m}(d)=g_{d}(\xi_{m},\eta_{d})$,
	$E[Y_{m}(d)|\zeta_{m}(d)]=\zeta_{m}(d)$, and $Y_{m}(d)\indep
	X_{r},X_{m}|\zeta_{m}(d)$ for $d=1,2$ where $g_{0}$ and $g_{1}$ are
	deterministic functions strictly increasing in their first arguments and
	$\eta_{0}$, $\eta_{1}$ are each independent of $(X_{r},X_{m})$. Then
	$E[Y(d)|X_{r},X_{m}]=E[\zeta_{m}(d)|X_{r},X_{m}]=E[\zeta_{m}(d)|X_{m}+\gamma
	X_{r}]$ for $d=1,2$ and Assumption 2 holds.
\end{PPA1}

\subsection{Identification Under Rank Invariance}

In this subsection we provide additional identification results that hold under
rank invariance of potential outcomes. This rank invariance condition is given
formally below.

\newtheorem*{AA1}{Assumption B1 (Rank Invariance)} \begin{AA1} Consider
	$(Y_{1}(0),Y_{1}(1))$ and $(Y_{2}(0),Y_{2}(1))$ two independent copies of
	$(Y(0),Y(1))$, then \[Y_{1}(0)\geq Y_{2}(0)\iff Y_{1}(1)\geq Y_{2}(1).\]
	
\end{AA1}

Lemma B1 shows that Assumption B1 implies a comonotonicity condition for
conditional quantiles. This is of a similar form to the comonotonicity for
conditional means stated in Assumption 2 in the main body of the paper.

\theoremstyle{plain} \newtheorem*{LA1}{Lemma B1} \begin{LA1} Assumption B1
	implies that for any $q\in[0,1]$ and $x_{1},x_{2}\in\mathcal{X}$,
	\begin{equation} Q_{Y(0)|X}(t|x_{1})\leq Q_{Y(0)|X}(q|x_{2})\iff
		Q_{Y(1)|X}(q|x_{1})\leq Q_{Y(1)|X}(q|x_{2}).\label{condI}
	\end{equation}
	
\end{LA1}

The comonotonicity of conditional quantiles in the conclusion of Lemma B1 allows
for extrapolation of quantiles of potential outcomes. In order to identify the
distributions of potential outcomes at the frontier we require an additional
assumption that imposes conditional quantiles are continuous.

\newtheorem*{A4}{Assumption B2}
\begin{A4}[Continuous Conditional Quantiles] For $d=\{0,1\}$, the function
	$x\mapsto Q_{Y(d)|X}(q|x)$ is continuous on $\mathcal{X}$.
	
\end{A4}

Theorem B1 shows that under assumptions A1 and A2, conditional quantiles of both
potential outcomes can be identified at some points away from the frontier. The
result is analogous to Theorem 2 and follows by almost identical
reasoning. Under Assumption B1, the difference in quantiles
$Q_{Y(1)|X}(q|x)-Q_{Y(0)|X}(q|x)$ is in fact the $t$-th conditional quantile of
the individual treatment effect. Thus under Assumptions A1 and A2 it may be
possible to identify the entire distribution of individual treatment effects
away from the frontier.  \theoremstyle{plain} \newtheorem*{TA1}{Theorem
	B1} \begin{TA1} Suppose Assumptions 1.1, A2, and the conclusion of Lemma B1
	hold.  Then for each $d=0,1$ there is a unique continuous function $g_{d,q}$ on
	$\mathcal{X}_{d}\cup\mathcal{F}$ so that for all $x\in\mathcal{X}_{d}$,
	$g_{d,q}(x)=Q_{Y|X}(t|x)$. Suppose that for some $x\in\mathcal{X}_{d}$, there
	exists an $x^{*}\in\mathcal{F}$ so that $Q_{Y|X}(t|x)=g_{d,q}(x^{*})$. Then
	$Q_{Y(1-d)|X}(t|x)=g_{1-d,q}(x^{*})$ and $Q_{Y(d)|X}(t|x)=g_{d,q}(x^{*})$.
	
\end{TA1}

Note that Theorem B1 directly assumes the conclusion to Lemma B1 holds rather
than Assumption B1. The result in Lemma B1 is not an equivalence, that is
(\ref{condI}) may hold even when Assumption B1 does not.

\subsection{Local Comonotonicity}

Another condition that weakens Assumption 2 but may still be sufficient for
identification is local comonotonicity. In this case, we require that the vector
of first derivatives of the conditional average treated and untreated potential
outcomes are identical.

\theoremstyle{definition} \newtheorem*{AA3}{Assumption B2} \begin{AA3}[Local
	Comonotonicity] for any $x_{1}\in\text{supp}(X)$, there is some neighborhood of
	$x_{1}$ so that for all $x_{2}$ within this neighborhood
	\[ E[Y(1)|X=x_{1}]\geq E[Y(1)|X=x_{2}]\iff E[Y(0)|X=x_{1}]\geq
	E[Y(0)|X=x_{2}].
	\]
\end{AA3}

It is not difficult to see that Assumption 2 implies Assumption B2.  In effect,
the condition says that if a sufficiently small change in $x$ is associated with
an increase in $E[Y(0)|X=x]$ then it is also associated with an increase in
$E[Y(1)|X=x]$, and vice versa.  Thus comonotonicity holds locally to the point
$x$. Unlike in Assumption 2, the same need not hold for large changes in $x$.

Assumption B2 leads us to the following result which allows us to impute
conditional average potential outcomes for values of $x$ that are connected to
the frontier by a continuous contour curve.

\theoremstyle{plain} \newtheorem*{T4}{Theorem B2} \begin{T4} Suppose Assumptions
	1 and A2 hold and define $g_{0}$ and $g_{1}$ as in Theorem 1. Suppose that for
	some $x\in\mathcal{X}_{d}$, there is a continuous path
	$q:\,[0,1]\to\mathcal{X}_{d}\cup\mathcal{F}$ with $q(0)=x$ and $q(1)=x^{*}$
	where $x^{*}\in\mathcal{F}$, and $E[Y|X=q(t)]=E[Y|X=x]$ for all
	$t\in(0,1)$. Then $E[Y(1)|X=x]=g_{1}(x^{*})$ and $E[Y(0)|X=x]=g_{0}(x^{*})$.
	
\end{T4}

Note that continuity of $g_{d}$ implies that $g_{d}(x^{*})=E[Y|X=x]$ as in
Theorems 2 and $2C$. However, Theorem $A2$ also requires that $x^{*}$ is
connected to $x$ by a continuous curve through $\mathcal{X}_{d}$ along which the
conditional average outcome is constant. On the one hand, this further restricts
the set of points away from the frontier with which we can match a point on the
frontier and impute causal effect. On the other hand, the additional requirement
reduces the set of points on the frontier with which we can match some point
away from the frontier.

Finally we note that we can combine Assumptions 2C and A2 as follows to get an
even weaker condition.

\theoremstyle{definition} \newtheorem*{A5}{Assumption B3} \begin{A5}[Conditional
	Local Comonotonicity] For any $x_{1}\in\text{supp}(X)$, there is some
	neighborhood of $x_{1}$ so that for all $x_{2}$ within this neighborhood so that
	$x_{2}^{(2)}=x_{1}^{(2)}$
	\[ E[Y(1)|X=x_{1}]\geq E[Y(1)|X=x_{2}]\iff E[Y(0)|X=x_{1}]\geq
	E[Y(0)|X=x_{2}].
	\]
\end{A5}

Using this weaker condition we can weaken Theorem B2, just as Theorem 2C weakens
Theorem 2.

\theoremstyle{plain} \newtheorem*{T5}{Theorem B3} \begin{T5} Suppose Assumptions
	1 and A3 hold and define $g_{0}$ and $g_{1}$ as in Theorem 1. Suppose that for
	some $x\in\mathcal{X}_{d}$, there is a continuous path
	$q:\,[0,1]\to\mathcal{X}_{d}\cup\mathcal{F}$ with $q(0)=x$, $q(1)=x^{*}$ where
	$x^{*}\in\mathcal{F}$, $q(t)^{(2)}=x^{(2)}$ for all $t\in[0,1]$, and
	$E[Y|X=q(t)]=E[Y|X=x]$ for all $t\in(0,1)$.  Then $E[Y(1)|X=x]=g_{1}(x^{*})$ and
	$E[Y(0)|X=x]=g_{0}(x^{*})$.
	
\end{T5}

\subsection{Further Comparison with Existing Approaches}

\cite{Angrist2015} extrapolate in multivariate RDD using a related
strategy. They assume individuals are treated if and only if a scalar running
variable $R$ exceeds a cut-off, which can be normalized to zero. That is,
$X=(R,W')'$ and $D=1\iff R\geq 0$. They assume potential outcomes are mean
independent of $R$ after conditioning on $W$. Formally, $E[Y(d)|W,R]=E[Y(d)|W]$.
Thus contours of conditional average potential outcomes are perpendicular to the
frontier as illustrated in the figure below.
\begin{figure}[h]
	\caption{Angrist Rokkanen in two-dimensional RDD} \centering
	\includegraphics[scale=0.27]{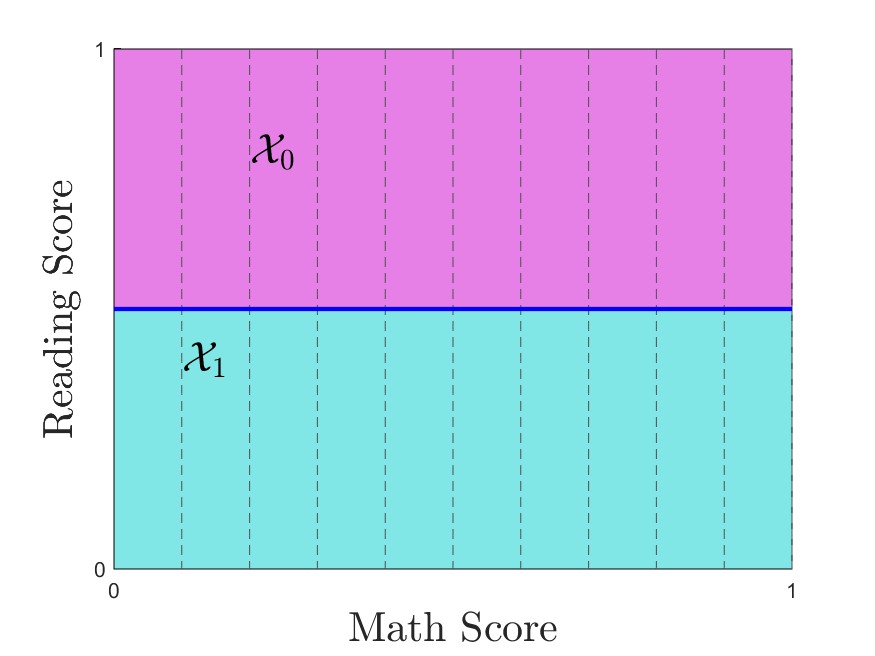}
\end{figure} Note that, for a social planner who wishes to treat those
individuals with the greatest conditional average effect, the treatment rule
above is highly suboptimal: treatment is decided based on a covariate which has
no power to predict causal effects.  Of course, the social planner may not have
access to the covariates $W$ or may be constrained to treat based on $R$ alone.

In a formal sense, our comonotonicity condition is neither weaker nor stronger
than the assumption of \cite{Angrist2015}. While the contours for $x\mapsto
E[Y(1)|X=x]$ and $x\mapsto E[Y(0)|X=x]$ are identical under their assumption,
they do not require a monotone relationship between $E[Y(1)|X=x]$ and
$E[Y(0)|X=x]$. However, if the joint distribution of observables is compatible
with comonotonicity (i.e., comonotonicity holds along the frontier) then our
extrapolation approach is valid under weaker conditions: if the approach in
\cite{Angrist2015} correctly identifies conditional average treatment effects
then so does ours, while the converse does not hold.

Our approach also bears some similarity to the identification strategies of
\cite{Imbens2009}, \cite{Chernozhukov2005}, \cite{Chernozhukov2007},
\cite{Torgovitsky2015}, and \cite{DHaultfoeuille2015}, who employ rank
invariance or rank similarity in the context of instrumental variables (IV)
estimation. As we discuss further in the next section, our comonotonicity
assumption is closely related to (but distinct from) rank invariance of
potential outcomes, which states that for two individuals $i$ and $j$ sampled
iid from the population, $Y_i(1)\geq Y_j(1)$ if and only if $Y_i(0)\geq
Y_j(0)$. The latter four papers mentioned above consider identification under
rank-invariance (or rank similarity) of potential outcomes. \cite{Imbens2009},
\cite{Torgovitsky2015}, and \cite{DHaultfoeuille2015} consider identification
under rank invariance/similarity of potential treatments, the latter two in
addition to rank invariance/similarity of potential outcomes.

Identification in IV using rank invariance of potential treatments can be
understood as a form of identification by extrapolation. Let $D$ be a continuous
treatment and $Z$ a continuous instrument. Suppose the IV is valid in that it
satisfies the exclusion restriction $Y(d,z)=Y(d)$ and unconfoundedness $Z\indep
Y(d),D(z)$. These assumptions identify the conditional distribution of potential
outcomes, conditional on certain values of potential treatments. To be precise,
for every $(d,z)$ in the support of $(D,Z)$, without any further assumptions it
follows that
\[ Y(d)|D(z)=d\sim Y|D=d,Z=z
\] and additionally, $D(z)\sim D|Z=z$. Unfortunately, this is generally
insufficient to identify average or conditional average causal effects. Suppose
we wish to identify the conditional mean of $Y(d_{1})-Y(d_{2})$ for some
$d_{1}\neq d_{2}$. The above can only identify objects like
$E[Y(d_{k})|D(z)=d_{k}]$ for $k=1,2$, which is the conditional mean of
$Y(d_{k})$ among a subpopulation that depends on $d_{k}$. To identify causal
effects, one must identify the means of $Y(d_{1})$ and $Y(d_{2})$ among the
\textbf{same} subpopulation. An insight of \cite{Imbens2009} is that under an
appropriate rank-invariance condition, one can find values $z_{1}$ and $z_{2}$
so that the subpopulation whose potential treatment is $d_{1}$ under instrument
level $z_{1}$ is precisely the same sub-population with potential treatment
$d_{2}$ under instrument level $z_{2}$.  It follows that
$E[Y(d_{2})|D(z_{2})=d_{2}]=E[Y(d_{2})|D(z_{1})=d_{1}]$ and so (assuming
sufficiently rich support of $Z$ and $D$) one may identify a conditional average
causal effect:
\[ E[Y|D=d_{1},Z=z_{1}]-E[Y|D=d_{2},Z=z_{2}]=E[Y(d_{1})-Y(d_{2})|D(z_{1})=d_{1}]
\] A sufficient rank invariance condition in the case of a scalar $D$ is as
follows. Taking $F_{D(z)}$ to be the cumulative distribution function of $D(z)$
(which is identified), one assumes that the quantile rank
$F_{D(z)}\big(D(z)\big)$ is the same for all $z$. Then an appropriate choice of
$z_{1}$ and $z_{2}$ is given by $F_{D(z_{1})}(d_{1})=F_{D(z_{2})}(d_{2})$.

\subsection{Evidence of Comonotonicity in an RCT}

Using data from a randomized controlled trial, it is possible to directly assess
whether comonotonicity holds. In an ideal RCT, both conditional average treated
and untreated potential outcomes are identified at all points in the support of
the covariates. Under comonotonicity there is a comonotonic relationship between
these quantities. Using estimates of the conditional mean potential outcomes we
can thus assess whether comonotonicity is credible in the empirical setting.

\cite{Alan} implement a randomized controlled trial in order to examine the
impact of a program designed to foster `grit' among elementary school students
in Istanbul, Turkey. Similar to our empirical application, the authors collect
scores on reading and math tests prior to the implementation of treatment for
any students. Students are then randomly assigned to treatment (receipt of the
program) or non-treatment. Two of the outcomes examined are reading and math
test scores two and a half years after the initial tests (which is after the
application of treatment), much as in our empirical application.

The figure below provides a scatter plot of the initial normalized test scores
for the students in the sample.

\begin{figure}[h] \centering \subfloat{
		\includegraphics[scale=0.25]{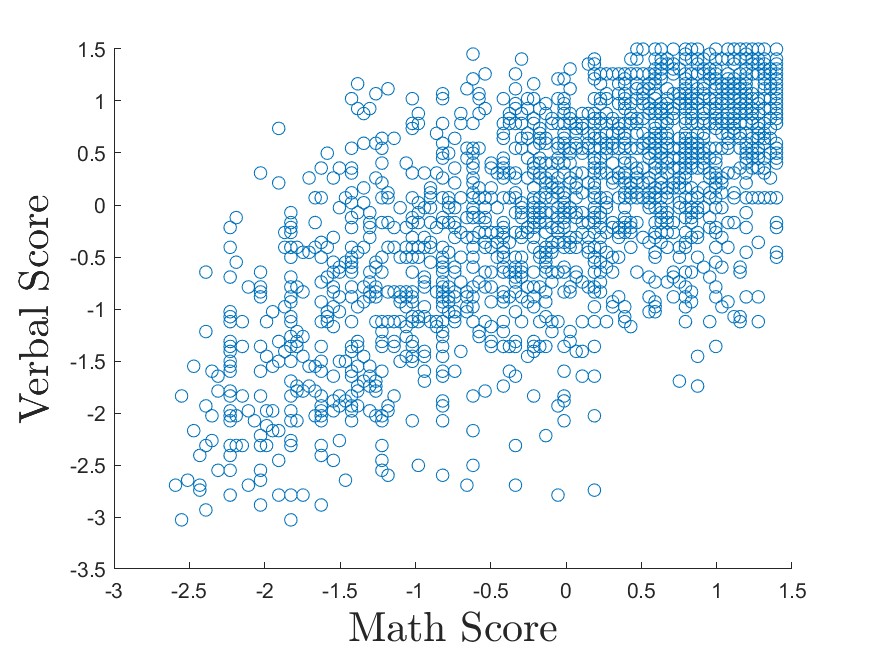} }
\end{figure}

We provide an informal assessment of the plausibility of comonotonicity in this
setting. In particular, we use the data in \cite{Alan} to estimate conditional
mean treated and untreated potential outcomes separately using the treated and
untreated samples respectively. Following \cite{Alan}, we adjust for attrition
by inverse probability weighting with a logit specification. We estimate the
conditional mean treated and untreated potential outcomes for each data point,
scatter plots are given below.

\begin{figure}[h]
	\caption{Math Outcome, Quadratic and Local-Linear} \centering \subfloat{
		\includegraphics[scale=0.18]{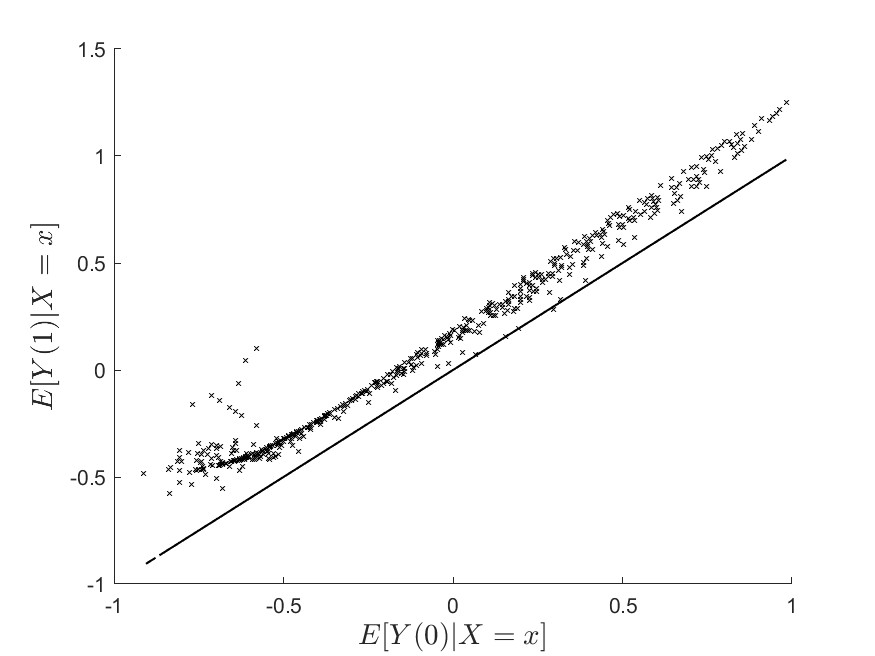}
		\label{fig:gritmathquad} } \subfloat{
		\includegraphics[scale=0.19]{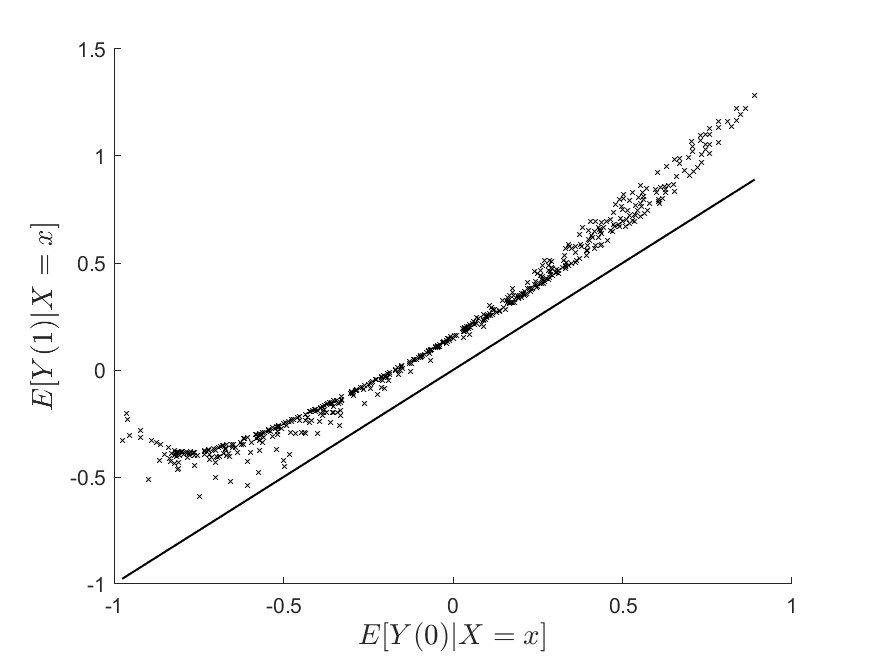}
		\label{fig:gritmathLL} }
	
\end{figure}

Figure \ref{fig:gritmathquad} provides results in which conditional mean
outcomes are obtain by quadratic regression and \ref{fig:gritmathLL} by local
linear. The results are similar between these two sub-figures. In both cases,
the conditional average potential outcome estimates are above the 45-degree line
suggesting positive treatment effects. The points in the figure appear
approximately obey a monotonically increasing functional relationship, in-line
with comonotonicity.

\begin{figure}[h]
	\caption{Reading Outcome, Quadratic and Local-Linear} \centering \subfloat{
		\includegraphics[scale=0.18]{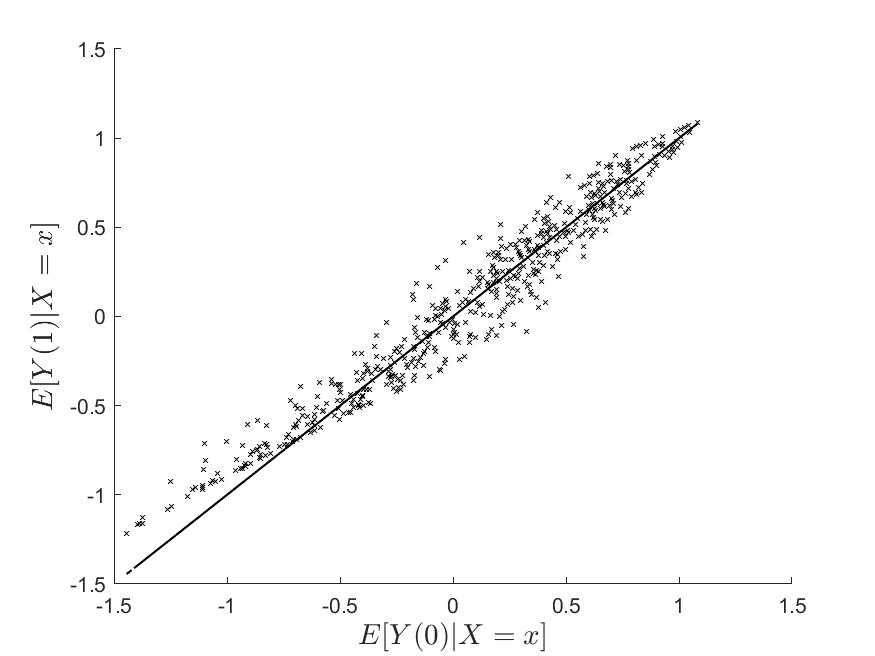}
		\label{fig:gritreadquad} } \subfloat{
		\includegraphics[scale=0.19]{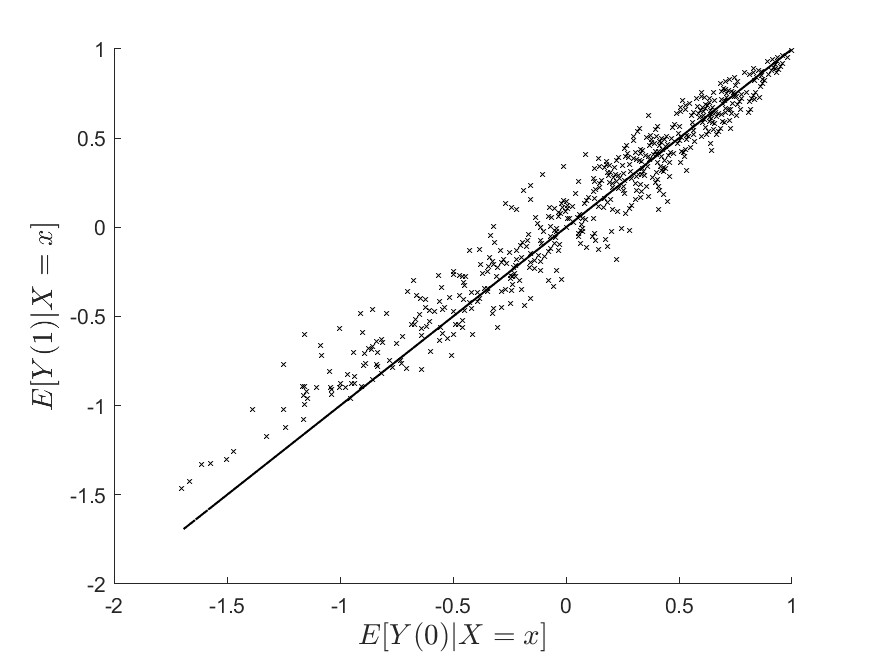}
		\label{fig:gritreadLL} }
\end{figure}

Figures \ref{fig:gritreadquad} and \ref{fig:gritreadLL} present analogous
results for the reading test score follow-up. The points in the scatter plot are
less concentrated around an increasing curve than in the case of the math
outcome. However, the general trend is still increasing and in the absence of a
formal test, we cannot rule out that the apparent deviations from a functional
relationship are due to estimation error.

\section{Proof of Additional Results}
\label{sec:proof-addl-results}

\begin{proof}[Proof of Proposition B1]

Applying Bayes' rule, for some constants $\gamma_{0}$, $\gamma_{r}$,
$\gamma_{m}$, and $v_{r}^{2}$ that do not depend on the test scores $X_{r}$ and
$X_{m}$, we have
\begin{equation} \xi_{m}|X_{r},X_{m}\sim
	N(\gamma_{0}+\gamma_{r}X_{r}+\gamma_{m}X_{m},v_{r}^{2}).\label{eq:bayesr}
\end{equation}

where $\gamma_{m}>0$, and $\gamma_{r}>0$ if and only if $\sigma_{r,m}>0$. In
particular,
\begin{align*}
	\gamma_{r}=&\frac{\omega_{m}^{2}\sigma_{r,m}}{\omega_{m}^{2}\omega_{r}^{2}+\sigma_{r}^{2}\omega_{m}^{2}+\sigma_{m}^{2}\omega_{r}^{2}+(\sigma_{m}^{2}\sigma_{r}^{2}-\sigma_{r,m}^{2})},
	\text{ and}\\
	\gamma_{m}=&\frac{\sigma_{m}^{2}\omega_{r}^{2}+(\sigma_{m}^{2}\sigma_{r}^{2}-\sigma_{r,m}^{2})}{\omega_{m}^{2}\sigma_{t,1}^{2}+\sigma_{r}^{2}\omega_{m}^{2}+\sigma_{m}^{2}\omega_{r}^{2}+(\sigma_{m}^{2}\sigma_{r}^{2}-\sigma_{r,m}^{2})}.
\end{align*}
Now note that

\begin{align*} E[Y(d)|X_{r},X_{m}] &
	=E\big[E[Y(d)|\zeta_{m}(d),X_{r},X_{m}]\big|X_{r},X_{m}]\\ &
	=E\big[E[Y(d)|\zeta_{m}(d)]\big|X_{r},X_{m}]\\ & =E[\zeta_{m}(d)|X_{r},X_{m}]\\
	& =E[g_{d}(\xi_{m},\eta_{d})|X_{r},X_{m}]\\ &
	=E[g_{d}(\xi_{m},\eta_{d})|X_{m}+(\gamma_{r}/\gamma_{m})X_{r}]
\end{align*}

where the first equality holds by iterated expectations, the second because
$Y(d)\indep X_{r},X_{m}|\zeta_{m}(d)$, the third by
$E[Y(d)|\zeta_{m}(d)]=\zeta_{m}(d)$, the fourth by substituting
$\zeta_{m}(d)=g_{d}(\xi_{m},\eta_{d})$.  The final equality follows because
$\eta_{d}$ is independent of $X_{r},X_{m}$ and (\ref{eq:bayesr}) shows that the
distribution of $\xi_{m}$ only depends on $X_{r}$ and $X_{m}$ through the
statistic $X_{m}+(\gamma_{r}/\gamma_{m})X_{r}$.

Finally, define
$f_{d}(z):=E[g_{d}(\xi_{m},\eta_{d})|X_{m}+(\gamma_{r}/\gamma_{m})X_{r}=z]$.
Recall that $g_{d}$ is strictly increasing in its first argument by supposition,
and by $(\ref{eq:bayesr})$ if $a>b$ then the conditional distribution of
$\xi_{m}$ given $X_{m}+(\gamma_{r}/\gamma_{m})X_{r}=a$ strictly first order
stochastically dominates the conditional distribution of $\xi_{m}$ given
$X_{m}+(\gamma_{r}/\gamma_{m})X_{r}=b$. It follows that $f_{d}$ is strictly
increasing for $d=1,2$, whence follows comonotonicity.
\end{proof}

\begin{proof}[Proof of Leamma B1]

First note that by rank invariance the quantile ranks of $Y(d)$ and $Y(1-d)$
are the same within any stratum of $X$, that is for any $x$:
\[ F_{Y(d)|X}\big(Y(d)\big|x\big)=F_{Y(1-d)|X}\big(Y(1-d)\big|x\big)
\]

It follows that $Y(d)\leq Q_{Y(d)|X}(t|x)\iff Y(1-d)\leq Q_{Y(1-d)|X}(t|x)$.
Now suppose that $Q_{Y(d)|X}(t|x_{1})<Q_{Y(d)|X}(t|x_{1})$. Then there are
realizations $(Y_{1}(0),Y_{1}(1))$ and $(Y_{2}(0),Y_{2}(1))$ of $(Y(0),Y(1))$ so
that:
\[ Q_{Y(d)|X}(t|x_{1})<Y_{1}(d)\leq Y_{2}(d)\leq Q_{Y(d)|X}(t|x_{1})
\]

By rank invariance $Q_{Y(d)|X}(t|x_{1})<Y_{1}(d)$ implies
$Y_{1}(1-d)<Q_{Y(1-d)|X}(t|x_{1})$, $Y_{2}(d)\leq Q_{Y(d)|X}(t|x_{1})$ implies
$Y_{2}(1-d)\leq Q_{Y(1-d)|X}(t|x_{1})$, and $Y_{1}(d)\leq Y_{2}(d)$ implies
$Y_{1}(1-d)\leq Y_{2}(1-d)$, and so we have:
\[ Q_{Y(1-d)|X}(t|x_{1})<Y_{1}(1-d)\leq Y_{2}(1-d)\leq Q_{Y(1-d)|X}(t|x_{1})
\]

And hence $Q_{Y(1-d)|X}(t|x_{1})<Q_{Y(1-d)|X}(t|x_{2})$. And so
$Q_{Y(d)|X}(t|x_{1})<Q_{Y(d)|X}(t|x_{1})$ implies
$Q_{Y(1-d)|X}(t|x_{1})<Q_{Y(1-d)|X}(t|x_{1})$. Since this holds for $d=0$ and
$d=1$ we see that:
\[ Q_{Y(d)|X}(t|x_{1})<Q_{Y(d)|X}(t|x_{1})\iff
Q_{Y(1-d)|X}(t|x_{1})<Q_{Y(1-d)|X}(t|x_{1})
\]

Which is equivalent to $Q_{Y(d)|X}(t|x_{1})\leq Q_{Y(d)|X}(t|x_{1})\iff
Q_{Y(1-d)|X}(t|x_{1})\leq Q_{Y(1-d)|X}(t|x_{1})$, as required.

\end{proof}

\begin{proof}[Proof of Theorem B1]

Note that for $x\in\mathcal{X}_{d}$, $Q_{Y(d)|X}(t|x)=Q_{Y|X}(t|x)$, the
latter of which is trivially identified. Then following the same steps as in
Theorem 1 we see that if $x\in\mathcal{X}_{d}$ and $x^{*}\in\mathcal{F}$, then
$g_{d,q}(x)=g_{d,q}(x^{*})$ implies $Q_{Y(d)|X}(t|x)=Q_{Y(d)|X}(t|x^{*})$, so
from the conclusion of Leamma B1 we get:
\begin{align*} Q_{Y(1-d)|X}(t|x) & =Q_{Y(1-d)|X}(t|x^{*})\\ &
	=g_{1-d,q}(x^{*})
\end{align*}
\end{proof}

\begin{proof}[Proof of Theorem B2]

From Theorem 1, if $q(t)\in\mathcal{X}_{d}$ and $x^{*}\in\mathcal{F}$, then
$E[Y|X=q(t)]=g_{d}(x^{*})$ implies $E[Y(d)|X=q(t)]=E[Y(d)|X=x^{*}]$.  Now we
will show that :
\[ E[Y(1-d)|X=x]=E[Y(1-d)|X=x^{*}]=g_{1-d}(x^{*})
\]
To show the above, we assume the contrapositive. In particular, let us suppose
that:
\begin{equation} E[Y(1-d)|X=x]<E[Y(1-d)|X=x^{*}]\label{eq:nottrue}
\end{equation}
We will show the above contradicts Assumption B2. A contradiction for the case
in which the strict inequality is in the reverse direction follows from
essentially identical arguments. If \eqref{eq:nottrue} holds then there is some
$b$ so that:
\[ E[Y(1-d)|X=x]<b<E[Y(1-d)|X=x^{*}]
\]
Define $t^{*}$ by:
\[ t^{*}=\sup\big[t\in[0,1]:\,E[Y(1-d)|X=q(t)]\leq b\big]
\]
Now, by Assumption B2, there is a neighborhood $\mathcal{N}$ of $q(t^{*})$ in
which comonotonicity holds. Because $E[Y(d)|X=q(t)]$ is constant for all $t$, it
thus follows by local comonotonicity that $E[Y(1-d)|X=q(t)]$ is constant over
all $t$ with $q(t)\in\mathcal{N}$. Now, because $q$ is continuous, the pre-image
of $\mathcal{N}$ under $q$ is a neighborhood of $t^{*}$. But then we have
established $E[Y(1-d)|X=q(t)]$ is constant over $t$ in a neighborhood of
$t^{*}$, but this contradicts the definition of the supremum.

\end{proof}

\begin{proof}[Proof of Theorem B3]

Following the same steps as in Theorem B2 we see
$E[Y(d)|X=q(t)]=E[Y(d)|X=x^{*}]$.  Now we will (as in Theorem B3) show that:
\[ E[Y(1-d)|X=x]=E[Y(1-d)|X=x^{*}]=g_{1-d}(x^{*})
\]

Again we assume the contrapositive. In particular, let us suppose that:
\begin{equation} E[Y(1-d)|X=x]<E[Y(1-d)|X=x^{*}]\label{eq:nottrue-1}
\end{equation}

We will show the above contradicts Assumption B3. A contradiction for the
reverse inequality follows from essentially identical arguments.

If \eqref{eq:nottrue-1} holds then there is some $b$ so that:
\[ E[Y(1-d)|X=x]<b<E[Y(1-d)|X=x^{*}]
\]
Define $t^{*}$ by:
\[ t^{*}=\sup\big[t\in[0,1]:\,E[Y(1-d)|X=q(t)]\leq b\big]
\] 
By Assumption B3, there is a neighborhood $\mathcal{N}$ of $q(t^{*})$ in which
conditional comonotonicity holds. Because $E[Y(d)|X=q(t)]$ is constant and
$q(t)^{(2)}=x^{(2)}$ for all $t$, it thus follows by conditional local
comonotonicity that $E[Y(1-d)|X=q(t)]$ is constant over all $t$ with
$q(t)\in\mathcal{N}$. The rest of the proof proceeds exactly as the proof of
Theorem B3.

\end{proof}

\end{document}